
\documentclass[11pt]{article}             % Use with LaTeX2e
\usepackage{osid}   
\usepackage{cite}
\usepackage{extract}
\usepackage{fancyhdr}
\usepackage[utf8]{inputenc}
\usepackage{amsmath, amsfonts, amssymb, amstext, amsfonts}
\usepackage{graphicx}
\usepackage{multirow}
\usepackage{wrapfig}
\usepackage{epsfig}
\usepackage{framed}
\usepackage{textcomp}
\usepackage{dsfont}
\usepackage{tikz}
\usepackage[ruled,vlined]{algorithm2e}
\usepackage{booktabs}
\usepackage[left=3.82cm,right=3.82cm]{geometry} % HM: this should be the margins for which QSID allows 20 pages

\usepackage{graphicx}% Include figure files
\usepackage{dcolumn}% Align table columns on decimal point
\usepackage{bm}% bold math
\usepackage{comment}
 \usepackage{amsthm}

 \usepackage{mathtools}
 \usepackage{color}
 \usepackage{enumitem}
 \usepackage{verbatim}
 \usepackage[
	colorlinks=true,
	urlcolor=blue,
	linkcolor=green
]{hyperref}

\newcommand{\idmat}{\mathds{1}}
\newcommand{\unitary}{\mathcal{U}}

\newcommand{\idop}{\operatorname{id}}
\newcommand{\blt}{\mathcal{L}}

\newcommand{\trcl}{\mathcal{S}_1}

\newcommand{\norm}[1]{\left\lVert #1 \right\rVert}

\newcommand{\tr}[1]{\mathrm{tr}\left[#1\right]} 
\newcommand{\ptr}[2]{\mathrm{tr}_{#1}\left[#2\right]}

\newcommand{\CP}{\mathrm{CP}_\sigma}
\MakeRobust{\ref}% avoid expanding it when in a textual label

\makeatletter
\newcommand{\labeltext}[2]{%
  #1
  \@bsphack
  \csname phantomsection\endcsname % in case hyperref is used
  \def\@currentlabel{#1}{\label{#2}}%
  \@esphack
}
\makeatother

\DeclarePairedDelimiter\bra{\langle}{\rvert}
\DeclarePairedDelimiter\ket{\lvert}{\rangle}
\DeclarePairedDelimiterX\braket[2]{\langle}{\rangle}{#1 \delimsize\vert #2}

\title{On the generators of quantum dynamical semigroups with invariant subalgebras}
\author{Markus Hasenöhrl\thanks{~Supported by the Bavarian excellence network {\sc enb} via the \mbox{International} PhD Programme of Excellence {\em Exploring Quantum Matter.}}\\{\footnotesize\it Department of Mathematics, Technical University of Munich, 85748 Garching, Germany\\
Munich Center for Quantum Science and Technology (MCQST), 80799 Munich, Germany\\
m.hasenoehrl@tum.de}\\[2ex]
        Matthias C.~Caro\thanks{~Supported by the TopMath Graduate Center of the TUM Graduate School at the
Technische Universität München, Germany, the TopMath Program at the Elite Network of Bavaria, and the German Academic Scholarship Foundation (Studienstiftung des deutschen Volkes).}
                     \\{\footnotesize\it Department of Mathematics, Technical University of Munich, 85748 Garching, Germany\\
                     Munich Center for Quantum Science and Technology (MCQST), 80799 Munich, Germany\\
                     caro@ma.tum.de} }

\begin{document}

\maketitle
\begin{abstract}
     The problem of characterizing GKLS-generators and CP-maps with an invariant appeared in different guises in the literature.
     We prove two unifying results which hold even for  weakly closed *-algebras: First, we show how to construct a normal form for $\mathcal{A}$-invariant GKLS-generators, if a normal form for $\mathcal{A}$-invariant CP-maps is known --- rendering the two problems essentially equivalent. 
     Second, we provide a normal form for $\mathcal{A}$-invariant CP-maps if $\mathcal{A}$ is atomic (which includes the finite-dimensional case). 
     As an application we reproduce several results from the literature as direct consequences of our characterizations and thereby point out connections between different fields.
\end{abstract}

\section{Introduction}

Quantum dynamical semigroups play in important role in many areas of physics. A (norm-continuous) quantum dynamical semigroup is a collection of normal completely positive maps $(T_t)_{t \geq 0}$ on $\blt(\mathcal{H})$ such that $T_0 = \idop$, $T_{s + t} = T_s \circ T_t$ for all $s, t \geq 0$ and the map $t \mapsto T_t$ is norm-continuous. By the general theory of continuous one-parameter semigroups \cite[Theorem 3.7]{engel2006one}, there exists a bounded operator $L$, called generator, such that $T_t = e^{tL}$ for all $t \geq 0$. The fundamental result due to Gorini, Kossakowski, Sudarshan \cite{Gorini.1976} and Lindblad \cite{Lindblad.1976} is that $L$ generates a norm-continuous quantum dynamical semigroup if and only if $L$ is of the form  
\begin{align}
    L(X) = V^\dagger (X \otimes \idmat_\mathtt{E}) V - K^\dagger X - XK, \quad X \in \blt(\mathcal{H}),
\end{align}
for some $V \in \blt(\mathcal{H}; \mathcal{H} \otimes \mathcal{H}_\mathtt{E})$ and $K \in \blt(\mathcal{H})$.

In the past, special cases of the following question on restricted GKLS-generators arose in the literature: Suppose $\mathcal{A} \subseteq \blt(\mathcal{H})$ is a von Neumann algebra, or more generally a weakly closed *-algebra\footnote{A weakly closed *-algebra can be characterized as a von Neumann algebra with support only on a subspace. It therefore consists of a von Neumann algebra and a null part. As the step from von Neumann algebras to weakly closed *-algebras is not always obvious in our results, we include the null part throughout for completeness.}, such that $T_t(\mathcal{A}) \subseteq \mathcal{A}$ for all $t \geq 0$ or, equivalently, such that $L(\mathcal{A}) \subseteq \mathcal{A}$. How does this condition constrain the operators $V$ and $K$? 
In this work, we provide an answer to this question, if $\mathcal{A}$ is atomic --- thus covering many interesting cases, in particular the finite-dimensional case. It should be noted here that $T_t(\mathcal{A}) \subseteq \mathcal{A}$ for all $t \geq 0$  is equivalent to $L(\mathcal{A}) \subseteq \mathcal{A}$.

Among the results for which the answer to the question posed above is useful are: The Koashi-Imoto theorem~\cite{PhysRevA.66.022318}, an important result in the theory of quantum communication, giving the general form of a quantum channel leaving a certain set of density matrices invariant; the form of the GKLS-generator imposed by the invariance of the decoherence-free subalgebra \cite{doi:10.1142/S0129055X16500033, doi:10.1063/1.5030954, doi:10.1142/S0219025710004176, sasso2021general}; general questions about decoherence, where the form of the GKLS-generator imposed by the invariance of a maximally abelian subalgebra is important \cite{doi:10.1142/S0129055X03001631, REBOLLEDO2005349, rajarama2008maximal, doi:10.1007/s11080-005-0485-3, FrancoFagnola2007}; the study of Markovian subsystems \cite{ticozzi2008quantummarkovian} and the study of the aging process of quantum devices via dynamical semigroups of superchannels \cite{hasenoehrl2021quantum}.

This paper is structured as follows: In Section \ref{PreliminarySection}, we introduce the notation and remind the reader of several facts related to completely positive (CP) maps, GKLS-generators, and weakly closed *-algebras. In Section \ref{GKLS-InvariantSection}, we show how to reduce the general problem of classifying GKLS-generators with an invariant approximately finite-dimensional algebra to the one of classifying normal CP-maps with the same invariant algebra. In Section \ref{CP-invarinatSection} we classify normal completely positive maps with an invariant \emph{atomic} algebra. Section \ref{AtomicInvarinatGKSL} combines the results from Sections \ref{GKLS-InvariantSection} and \ref{CP-invarinatSection} to obtain a classification of GKLS-generators with invariant atomic algebras. In Section \ref{applicationSection}, we use our results to reproduce several results from the literature discussed above. Finally, in Section \ref{ConclusionSection}, we conclude our work and outline possible further lines of research.  

\section{Preliminaries and Notation} \label{PreliminarySection}

\paragraph{Functional analysis:}

Throughout, $\mathcal{H}$ (with some subscript) denotes a \textit{separable} complex Hilbert space. For Banach spaces $\mathcal{X}$ and $\mathcal{Y}$, we denote by $\blt(\mathcal{X}; \mathcal{Y})$ the set of bounded linear operators from $\mathcal{X}$ to $\mathcal{Y}$, which becomes a Banach space when equipped with the operator norm. 
We abbreviate $\blt(\mathcal{X}; \mathcal{X})$ by $\blt(\mathcal{X})$. The \emph{strong operator topology (SOT)} on $\blt(\mathcal{H}_\mathtt{A}; \mathcal{H}_\mathtt{B})$ is the smallest topology such that for all $\ket{\psi_\mathtt{A}} \in \mathcal{H}_\mathtt{A}$, the map $X \mapsto X\ket{\psi_\mathtt{A}}$ is continuous. The \emph{weak operator topology (WOT)} on $\blt(\mathcal{H}_\mathtt{A}; \mathcal{H}_\mathtt{B})$ is the smallest topology such for all $\ket{\psi_\mathtt{A}} \in \mathcal{H}_\mathtt{A}$ and $\ket{\psi_\mathtt{B}} \in \mathcal{H}_\mathtt{B}$, the map $X \mapsto \braket{\psi_\mathtt{B}}{X\psi_\mathtt{A}}$ is continuous. The \emph{ultraweak (or weak-*) topology} on $\blt(\mathcal{H}_A; \mathcal{H}_B)$ is the smallest topology such that for all $\rho \in \trcl(\mathcal{H}_B; \mathcal{H}_A)$, the map $X \mapsto \tr{X\rho}$ is continuous. Here, $\mathrm{tr}$ denotes the trace and $\trcl(\mathcal{H}_B; \mathcal{H}_A)$ the set of \emph{trace-class operators}, that is those $\rho \in \blt(\mathcal{H}_B; \mathcal{H}_A)$ for which $\tr{\sqrt{\rho^\dagger \rho}}  < \infty$. 
A subset $T \subseteq \mathcal{H}$ is \emph{total} in $\mathcal{H}$ if its linear span is dense in $\mathcal{H}$. An operator $V \in \blt(\mathcal{H}_\mathtt{A}; \mathcal{H}_\mathtt{B})$ is called an \emph{isometry} if $\norm{V\ket{\psi}} = \norm{\ket{\psi}}$ for all $\ket{\psi} \in \mathcal{H}_\mathtt{A}$. A surjective isometry is called \emph{unitary}.  

\paragraph{CP-maps and GKLS-generators:}

A linear map $\Phi : \blt(\mathcal{H}_\mathtt{A}) \rightarrow \blt(\mathcal{H}_\mathtt{B})$ is a \textit{normal CP-map} if there exists a Hilbert space $\mathcal{H}_\mathtt{E}$ and an operator $V \in \blt(\mathcal{H}_\mathtt{B}; \mathcal{H}_\mathtt{A} \otimes \mathcal{H}_\mathtt{E})$ such that $\Phi(X) = V^\dagger (X \otimes \idmat_\mathtt{E}) V$. We denote the set of such normal CP-maps by $\CP(\mathcal{H}_\mathtt{A}; \mathcal{H}_\mathtt{B})$ and abbreviate $\CP(\mathcal{H}; \mathcal{H})$ by $\CP(\mathcal{H})$. The pair $(V, \mathcal{H}_\mathtt{E})$ is called a \emph{Stinespring representation} of $\Phi$. An equivalent characterization of normal CP-map is that they admit a \emph{Kraus representation}. That is, there exist operators $\{v_i\}_i  \subset \blt(\mathcal{H})$ such that $\Phi(X) = \sum_i v_i^\dagger X v_i$ for all $X \in \blt(\mathcal{H})$, where the series is SOT-convergent. The choice of $(V, \mathcal{H}_\mathtt{E})$ representing $\Phi$ is not unique. However, the following well-known theorem (see e.g. \cite[Theorem 29.6]{k1992introduction}) quantifies the freedom.  

\begin{theorem}{Theorem} \label{thm:UniquenessStinespringThm}
Let $\tilde{V} \in \blt(\mathcal{H}_\mathtt{B}; \mathcal{H}_\mathtt{A} \otimes \mathcal{H}_{\tilde{\mathtt{E}}})$ and define $\Phi(X) = \tilde{V}^\dagger (X \otimes \idmat_{\tilde{\mathtt{E}}}) \tilde{V}$ for all $X \in \blt(\mathcal{H}_\mathtt{A})$. Then there exist $\mathcal{H}_\mathtt{E}$ and $V \in \blt(\mathcal{H}_\mathtt{B}; \mathcal{H}_\mathtt{A} \otimes \mathcal{H}_\mathtt{E})$ such that \labeltext{a)}{eq:StinespringUniquenessStatement} $\Phi(X) = V^\dagger (X \otimes \idmat_\mathtt{E}) V$ for all $X \in \blt(\mathcal{H}_\mathtt{A})$ and such that \labeltext{b)}{eq:StinespringMinimality} $\{ (X \otimes \idmat_\mathtt{E})V \ket{\psi} \, | \, X \in \blt(\mathcal{H}_\mathtt{A}), \ket{\psi} \in \mathcal{H}_\mathtt{B} \}$ is total in $\mathcal{H}_\mathtt{A} \otimes \mathcal{H}_\mathtt{E}$.\\ 
If $(\mathcal{H}_\mathtt{E}, V)$ is any pair such that \ref{eq:StinespringUniquenessStatement} and \ref{eq:StinespringMinimality} are satisfied, and if $(\mathcal{H}_{\tilde{\mathtt{E}}}, \tilde{V})$ is another pair such that \ref{eq:StinespringUniquenessStatement} is satisfied, then there exists an isometry $W \in \blt(\mathcal{H}_\mathtt{E}; \mathcal{H}_{\tilde{\mathtt{E}}})$ such that $\tilde{V} = (\idmat_\mathtt{A} \otimes W)V$. If \ref{eq:StinespringMinimality} is also satisfied for $(\mathcal{H}_{\tilde{\mathtt{E}}}, \tilde{V})$, then $W$ is unitary. 
\end{theorem}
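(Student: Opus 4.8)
The plan is to treat existence and uniqueness separately, with the single structural input being the commutant identity $\left(\blt(\mathcal{H}_\mathtt{A}) \otimes \idmat\right)' = \idmat_\mathtt{A} \otimes \blt(\cdot)$, which simultaneously controls which closed subspaces are invariant under the ampliation $X \mapsto X \otimes \idmat$ and which operators commute with (resp.\ intertwine) it.

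For existence I would start from the given $\tilde V$ and pass to the minimal space. Set
$$\mathcal{K} := \overline{\operatorname{span}} \set{(X \otimes \idmat_{\tilde{\mathtt{E}}}) \tilde V \ket{\psi}}{X \in \blt(\mathcal{H}_\mathtt{A}),\ \ket{\psi} \in \mathcal{H}_\mathtt{B}} \subseteq \mathcal{H}_\mathtt{A} \otimes \mathcal{H}_{\tilde{\mathtt{E}}}.$$
Because the generating set is stable under applying any $Y \otimes \idmat_{\tilde{\mathtt{E}}}$, the subspace $\mathcal{K}$ is invariant under the self-adjoint algebra $\blt(\mathcal{H}_\mathtt{A}) \otimes \idmat_{\tilde{\mathtt{E}}}$, so the orthogonal projection onto $\mathcal{K}$ lies in its commutant and hence equals $\idmat_\mathtt{A} \otimes Q$ for some projection $Q$ on $\mathcal{H}_{\tilde{\mathtt{E}}}$. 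Putting $\mathcal{H}_\mathtt{E} := Q \mathcal{H}_{\tilde{\mathtt{E}}}$ gives $\mathcal{K} = \mathcal{H}_\mathtt{A} \otimes \mathcal{H}_\mathtt{E}$. Taking $X = \idmat_\mathtt{A}$ shows that $\tilde V$ already maps into $\mathcal{K}$, so writing $\iota : \mathcal{H}_\mathtt{E} \hookrightarrow \mathcal{H}_{\tilde{\mathtt{E}}}$ for the inclusion isometry I can define $V \in \blt(\mathcal{H}_\mathtt{B}; \mathcal{H}_\mathtt{A} \otimes \mathcal{H}_\mathtt{E})$ by the relation $\tilde V = (\idmat_\mathtt{A} \otimes \iota) V$. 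Property \ref{eq:StinespringUniquenessStatement} then follows from $\iota^\dagger \iota = \idmat_\mathtt{E}$, and \ref{eq:StinespringMinimality} holds because $\idmat_\mathtt{A} \otimes \iota$ is an isometric isomorphism onto $\mathcal{K}$ that intertwines the two ampliations and therefore pulls the generating set of $\mathcal{K}$ back exactly onto $\{(X \otimes \idmat_\mathtt{E}) V \ket{\psi}\}$, which is thus total in $\mathcal{H}_\mathtt{A} \otimes \mathcal{H}_\mathtt{E}$.

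For uniqueness, given a minimal pair $(\mathcal{H}_\mathtt{E}, V)$ and any pair $(\mathcal{H}_{\tilde{\mathtt{E}}}, \tilde V)$ satisfying \ref{eq:StinespringUniquenessStatement}, I would define a map $\hat W$ on the total set from \ref{eq:StinespringMinimality} by
$$\hat W (X \otimes \idmat_\mathtt{E}) V \ket{\psi} := (X \otimes \idmat_{\tilde{\mathtt{E}}}) \tilde V \ket{\psi}.$$
The crucial computation is that for generators indexed by $(X, \ket{\psi})$ and $(Y, \ket{\phi})$ both inner products collapse to $\braket{\psi}{\Phi(X^\dagger Y) \phi}$, since $V^\dagger (X^\dagger Y \otimes \idmat_\mathtt{E}) V = \Phi(X^\dagger Y) = \tilde V^\dagger (X^\dagger Y \otimes \idmat_{\tilde{\mathtt{E}}}) \tilde V$. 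This one identity yields both well-definedness (any null combination on the left is sent to a vector of vanishing norm) and isometry, so $\hat W$ extends to an isometry $\mathcal{H}_\mathtt{A} \otimes \mathcal{H}_\mathtt{E} \to \mathcal{H}_\mathtt{A} \otimes \mathcal{H}_{\tilde{\mathtt{E}}}$. Checking on generators gives $\hat W (Y \otimes \idmat_\mathtt{E}) = (Y \otimes \idmat_{\tilde{\mathtt{E}}}) \hat W$, i.e.\ $\hat W$ intertwines the ampliations, and the commutant identity then forces $\hat W = \idmat_\mathtt{A} \otimes W$ for a necessarily isometric $W \in \blt(\mathcal{H}_\mathtt{E}; \mathcal{H}_{\tilde{\mathtt{E}}})$. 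Evaluating at $X = \idmat_\mathtt{A}$ yields $\tilde V = \hat W V = (\idmat_\mathtt{A} \otimes W) V$. Finally, if $(\mathcal{H}_{\tilde{\mathtt{E}}}, \tilde V)$ also satisfies \ref{eq:StinespringMinimality}, the range of the isometry $\hat W$ contains the total set $\{(X \otimes \idmat_{\tilde{\mathtt{E}}}) \tilde V \ket{\psi}\}$ and is closed, hence equals all of $\mathcal{H}_\mathtt{A} \otimes \mathcal{H}_{\tilde{\mathtt{E}}}$; so $\hat W$, and therefore $W$, is unitary.

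The main obstacle I anticipate is not any of the inner-product bookkeeping, which is essentially forced, but the two appeals to the commutant identity: making rigorous that invariance under the ampliation forces a subspace of the form $\mathcal{H}_\mathtt{A} \otimes \mathcal{H}_\mathtt{E}$, and that an intertwiner of ampliations on \emph{different} auxiliary spaces has the form $\idmat_\mathtt{A} \otimes W$. The latter does not follow verbatim from the single-space commutant statement, but can be reduced to it by regarding $\hat W$ as an off-diagonal block of an operator on $\mathcal{H}_\mathtt{A} \otimes (\mathcal{H}_\mathtt{E} \oplus \mathcal{H}_{\tilde{\mathtt{E}}})$ commuting with $\blt(\mathcal{H}_\mathtt{A}) \otimes \idmat$; the separability of the Hilbert spaces ensures the relevant commutant computation goes through.
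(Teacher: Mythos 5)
Your proof is correct. The paper does not actually prove this statement itself --- it cites it as \cite[Theorem 29.6]{k1992introduction} --- but your argument is the standard one and coincides step-for-step with the technique the paper uses in Appendix \ref{ProofUniquenessAppendix} to prove the GKLS analogue (Theorem \ref{thm:GeneralUniquenessTheorem}): the same cut-down to the invariant subspace for existence, the same total-set/isometric-extension construction of the intertwiner for uniqueness, and your direct-sum block trick for upgrading the intertwining relation to the form $\idmat_\mathtt{A}\otimes W$ plays exactly the role of the paper's Lemma \ref{lem:HalfCommutingBsLemma} applied with $\mathcal{A}=\blt(\mathcal{H})$.
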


\noindent If $V$ satisfies condition \ref{eq:StinespringMinimality} above, then it is called \textit{minimal}. 

A linear map $L : \blt(\mathcal{H}) \rightarrow \blt(\mathcal{H})$ is called \textit{GKLS-generator} (or generator in GKLS-form) if there exists $\Phi \in \CP(\mathcal{H})$ and $K \in \blt(\mathcal{H})$, such that $L(X) = \Phi(X) - K^\dagger X - XK$ for all $X \in \blt(\mathcal{X})$. As for normal CP-maps, the representation is not unique. The following characterization of the freedom can be extracted from \cite[Chapter 30]{k1992introduction}, in particular from the proof of Proposition 30.14. We give a complete proof in Appendix \ref{ProofUniquenessAppendix}. 
\begin{theorem}{Theorem} \label{thm:GeneralUniquenessTheorem}
Let $\tilde{V} \in \blt(\mathcal{H}; \mathcal{H} \otimes \mathcal{H}_{\tilde{\mathtt{E}}})$ and $\tilde{K} \in \blt(\mathcal{H})$ and define $L(X) = \tilde{V}^\dagger (X \otimes \idmat_{\tilde{\mathtt{E}}}) \tilde{V} - \tilde{K}^\dagger X - X\tilde{K}$ for all $X \in \blt(\mathcal{H})$. Then there exist $\mathcal{H}_\mathtt{E}$, $V \in \blt(\mathcal{H}; \mathcal{H} \otimes \mathcal{H}_\mathtt{E})$ and $K \in \blt(\mathcal{H})$ such that \labeltext{a)}{eq:GKLSUniquenessStatement} $L(X) = V^\dagger (X \otimes \idmat_\mathtt{E}) V - K^\dagger X - XK$ for all $X \in \blt(\mathcal{H})$ and such that \labeltext{b)}{eq:GKLSMinimality} 
    $\{((X \otimes \idmat_\mathtt{E})V - VX)\ket{\psi} \, | \, X \in \blt(\mathcal{H}), \ket{\psi} \in \mathcal{H} \}$  is total in $\mathcal{H} \otimes \mathcal{H}_\mathtt{E}$.\\
If $(\mathcal{H}_\mathtt{E}, V, K)$ is any triplet such that \ref{eq:GKLSUniquenessStatement} and \ref{eq:GKLSMinimality} are satisfied, and if $(\mathcal{H}_{\tilde{\mathtt{E}}}, \tilde{V}, \tilde{K})$ is another triplet such that \ref{eq:GKLSUniquenessStatement} is satisfied, then there exists an isometry $W \in \blt(\mathcal{H}_\mathtt{E}; \mathcal{H}_{\tilde{\mathtt{E}}})$, a vector $\ket{\tilde{\psi}} \in \mathcal{H}_{\tilde{\mathtt{E}}}$, and a number $\mu \in \mathbb{R}$ such that
\begin{align} \label{eq:FreedomSimpleLindbladGenerator}
    \tilde{V} = (\idmat \otimes W)V + \idmat \otimes \ket{\tilde{\psi}}, \quad  \tilde{K} = K + (\idmat \otimes \bra{ \tilde{\psi}}W)V + \frac{1}{2} \Vert\tilde{\psi}\Vert^2 + i\mu.
\end{align}
If \ref{eq:GKLSMinimality} is also satisfied for $(\mathcal{H}_{\tilde{\mathtt{E}}}, \tilde{V}, \tilde{K})$, then $W$ is unitary. 
\end{theorem}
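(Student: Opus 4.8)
\emph{Setup and key object.} The plan is to rest the proof on two pillars: the Stinespring uniqueness theorem (Theorem~\ref{thm:UniquenessStinespringThm}) and a GNS-type identity showing that the geometry of the vectors entering the minimality condition \ref{eq:GKLSMinimality} is dictated by $L$ alone. For any representation write $\Phi(X) = V^\dagger(X\otimes\idmat_\mathtt{E})V$ for its CP-part, and for $X\in\blt(\mathcal{H})$, $\ket{\psi}\in\mathcal{H}$ set $\xi(X,\ket{\psi}) := ((X\otimes\idmat_\mathtt{E})V - VX)\ket{\psi}$, so that \ref{eq:GKLSMinimality} asks precisely that the $\xi(X,\ket{\psi})$ be total. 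I would first record the purely algebraic identity $(Y\otimes\idmat_\mathtt{E})\,\xi(X,\ket{\psi}) = \xi(YX,\ket{\psi}) - \xi(Y, X\ket{\psi})$, which shows that the closed span $\mathcal{N}$ of the $\xi$'s is invariant, hence reducing, for the von Neumann algebra $\blt(\mathcal{H})\otimes\idmat_\mathtt{E}$.

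\emph{The representation-independent kernel.} The heart of the argument is the computation
\begin{equation*}
\braket{\xi(X,\ket{\psi})}{\xi(Y,\ket{\chi})} = \braket{\psi}{\kappa(X,Y)\chi}, \qquad \kappa(X,Y) := L(X^\dagger Y) - L(X^\dagger)Y - X^\dagger L(Y) + X^\dagger L(\idmat)Y .
\end{equation*}
Expanding the left-hand side produces exactly $\Phi(X^\dagger Y) - \Phi(X^\dagger)Y - X^\dagger\Phi(Y) + X^\dagger\Phi(\idmat)Y$; the point, which I would verify by direct substitution, is that passing from $\Phi$ to $L$ only introduces the first-order terms $-K^\dagger X - XK$, and these cancel completely in the combination defining $\kappa$. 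Thus $\kappa$ is manifestly independent of the chosen $(V,K)$ and depends on $L$ only.

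\emph{Uniqueness.} Let $(\mathcal{H}_\mathtt{E},V,K)$ be minimal and $(\mathcal{H}_{\tilde{\mathtt{E}}},\tilde V,\tilde K)$ any triplet satisfying \ref{eq:GKLSUniquenessStatement}. Since both families $\xi,\tilde\xi$ have the same Gram kernel $\kappa$, the assignment $\xi(X,\ket{\psi})\mapsto\tilde\xi(X,\ket{\psi})$ is well defined and isometric on the span of the $\xi$'s, dense by minimality, and extends to an isometry $\Xi\colon \mathcal{H}\otimes\mathcal{H}_\mathtt{E}\to\mathcal{H}\otimes\mathcal{H}_{\tilde{\mathtt{E}}}$. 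The identity from the first paragraph, valid on both sides, shows $\Xi$ intertwines $X\otimes\idmat_\mathtt{E}$ with $X\otimes\idmat_{\tilde{\mathtt{E}}}$ for every $X$; a commutant argument then forces $\Xi = \idmat\otimes W$ for an isometry $W\in\blt(\mathcal{H}_\mathtt{E};\mathcal{H}_{\tilde{\mathtt{E}}})$, unitary precisely when the $\tilde\xi$ are total as well, i.e.\ when \ref{eq:GKLSMinimality} also holds for the tilde-triplet. Rearranging $\Xi\,\xi(X,\ket{\psi}) = \tilde\xi(X,\ket{\psi})$, the operator $Z := (\idmat\otimes W)V - \tilde V$ satisfies $(X\otimes\idmat_{\tilde{\mathtt{E}}})Z = ZX$ for all $X$; expanding $Z$ in a basis of $\mathcal{H}_{\tilde{\mathtt{E}}}$ shows each component is central, whence $Z = \idmat\otimes\ket{\tilde{\psi}}$ for some $\ket{\tilde{\psi}}\in\mathcal{H}_{\tilde{\mathtt{E}}}$, which is the first equation of \ref{eq:FreedomSimpleLindbladGenerator} (up to the harmless sign of $\ket{\tilde{\psi}}$). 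Substituting $\tilde V = (\idmat\otimes W)V + \idmat\otimes\ket{\tilde{\psi}}$ gives $\tilde\Phi(X) = \Phi(X) + A^\dagger X + XA + \norm{\tilde{\psi}}^2 X$ with $A = (\idmat\otimes\bra{\tilde{\psi}}W)V$; equating the two expressions \ref{eq:GKLSUniquenessStatement} for $L$ and cancelling $\Phi$ leaves $C := \tilde K - K - A - \tfrac{1}{2}\norm{\tilde{\psi}}^2$ with $C^\dagger X + XC = 0$ for all $X$, which forces $C = i\mu\idmat$ with $\mu\in\mathbb{R}$ — the second equation of \ref{eq:FreedomSimpleLindbladGenerator}.

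\emph{Existence and main obstacle.} For existence I would start from the given $(\tilde V,\tilde K)$, apply Theorem~\ref{thm:UniquenessStinespringThm} to the CP-part $\tilde\Phi$ to obtain a minimal Stinespring pair $(V_0,\mathcal{H}_{\mathtt{E}_0})$ with $\tilde V = (\idmat\otimes W_0)V_0$; since this leaves the CP-part unchanged, $(V_0,\tilde K)$ still represents $L$. The span $\mathcal{N}$ of the associated $\xi$'s is reducing for $\blt(\mathcal{H})\otimes\idmat_{\mathtt{E}_0}$, so $\mathcal{N} = \mathcal{H}\otimes\mathcal{M}$ for a closed $\mathcal{M}\subseteq\mathcal{H}_{\mathtt{E}_0}$, while the part of $V_0$ orthogonal to $\mathcal{N}$ is again an intertwiner and hence a vacuum term $\idmat\otimes\ket{\phi_0}$. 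Compressing $V_0$ to $\mathcal{H}\otimes\mathcal{M}$ and absorbing $\ket{\phi_0}$ into $K$ (by the computation of the previous paragraph with $\mu=0$) yields $(\mathcal{M},V,K)$ satisfying \ref{eq:GKLSUniquenessStatement}, and the compressed $\xi$'s still span $\mathcal{N}=\mathcal{H}\otimes\mathcal{M}$, giving \ref{eq:GKLSMinimality}. I expect the main obstacle to be the representation-independence of $\kappa$ — the precise bookkeeping of the cancellation of every $K$-dependent term — together with making the two commutant identifications ($\Xi = \idmat\otimes W$ and $Z = \idmat\otimes\ket{\tilde{\psi}}$) rigorous in the separable, possibly infinite-dimensional, setting.
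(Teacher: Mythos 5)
Your proposal is correct and follows essentially the same route as the paper's proof in Appendix B: the representation-independent Gram kernel $\kappa$ is exactly the paper's $\Psi(X,Y)=\pi(X)^\dagger\pi(Y)$, the GNS-style isometry extension and the two commutant identifications ($\Xi=\idmat\otimes W$ and $Z=\idmat\otimes\ket{\tilde{\psi}}$) are the paper's Lemma~\ref{lem:HalfCommutingBsLemma} done inline, and existence is likewise obtained by compressing onto the reducing closed span of the defect vectors and absorbing the leftover intertwiner into $K$ (your preliminary passage to a minimal Stinespring dilation via Theorem~\ref{thm:UniquenessStinespringThm} is superfluous but harmless, since the compression argument applies to any representing $V$).
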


\paragraph{Weakly closed *-algebras:} \label{sec:NotationW*}

We introduce several conventions that will be useful in simplifying the notation throughout.  
A \emph{weakly closed *-algebra} $\mathcal{A} \subseteq \blt(\mathcal{H})$ is a subalgebra of $\blt(\mathcal{H})$ that is closed w.r.t.~the WOT\footnote{Equivalently, one can use the SOT or the ultraweak topology.} and w.r.t.~taking adjoints. A weakly closed *-algebra does not necessarily contain the identity  --- if it does then it is called a \emph{von Neumann-algebra} (abbr.~\emph{vN-algebra}). 
Every weakly closed *-algebra $\mathcal{A}$ is (unitarily equivalent to) the direct sum of a zero-dimensional algebra $0_0$ and a vN-Algebra \cite[Proposition 5.1.8]{kadison1997fundamentals}. That is, $\mathcal{A} = U_\mathcal{A} \left( 0_0 \oplus \mathcal{A}_{\overline{0}} \right) U^\dagger_\mathcal{A}$, where $U_\mathcal{A} : \mathcal{H}_\oplus \rightarrow \mathcal{H}$ is a unitary on $\mathcal{H}_\oplus = \mathcal{H}_0 \oplus \mathcal{H}_{\overline{0}}$, $0_0 = \{0 \} \subseteq \blt(\mathcal{H}_0)$ and $\mathcal{A}_{\overline{0}}$ is a vN-algebra in $\blt(\mathcal{H}_{\overline{0}})$. If $P_0^\oplus \in \blt(\mathcal{H}_\oplus; \mathcal{H}_0)$ and $P_{\overline{0}}^\oplus \in \blt(\mathcal{H}_\oplus; \mathcal{H}_{\overline{0}})$ are the orthogonal projections onto $\mathcal{H}_0$ and $\mathcal{H}_{\overline{0}}$, then we define $P_0 \in \blt(\mathcal{H}; \mathcal{H}_0)$ and $P_{\overline{0}} \in \blt(\mathcal{H}; \mathcal{H}_{\overline{0}})$ by $P_0 = P_0^\oplus U_\mathcal{A}^\dagger$ and $P_{\overline{0}} = P_{\overline{0}}^\oplus U_\mathcal{A}^\dagger$.

Two special types of weakly closed *-algebras are of particular importance to us: the approximately finite-dimensional ones and the atomic ones.
A weakly closed *-algebra $\mathcal{A} \subseteq \blt(\mathcal{H})$ is called \emph{approximately finite-dimensional (AFD)} if there exists an increasing sequence $\mathcal{A}_1 \subseteq \mathcal{A}_2 \subseteq \mathcal{A}_3 \subseteq \cdots \subseteq \mathcal{A}$ of finite-dimensional (hence weakly closed) sub-*-algebras of $\mathcal{A}$ such that $\cup_{n \in \mathbb{N}} \mathcal{A}_n$ is WOT-dense in $\mathcal{A}$. 
\emph{Atomic} weakly closed *-algebras are usually defined by the requirement that every non-zero projection in $\mathcal{A}$ majorizes a non-zero minimal projection \cite[Definition 5.9]{takesaki1979theory} --- a property always fulfilled in finite dimensions. For our purposes, it is more convenient to think of them as those weakly closed *-algebras that are the direct sum of type-I factors. A proof of this equivalence can be found in the appendix of \cite{doi:10.1142/S0129055X16500033}.

\begin{definition}{Definition} \label{Defn:NormalFormAtomicAlgebra}
A weakly closed *-algebra $\mathcal{A} \subseteq \blt(\mathcal{H})$ is called \textit{atomic} if
    \begin{align}
        \mathcal{A} = U_\mathcal{A} \left( 0_0 \oplus \bigoplus_{i \in I}  (\blt(\mathcal{H}_{\mathtt{A}_i}) \otimes \idmat_{\mathtt{B}_i})\right) U_{\mathcal{A}}^\dagger,
    \end{align}
    for a Hilbert space $\mathcal{H}_{0}$, sequences of Hilbert spaces $\{\mathcal{H}_{A_i}\}_{i \in I}$ and $\{\mathcal{H}_{B_i}\}_{i \in I}$ indexed by a countable index set $I$, and a unitary $U_\mathcal{A} : \mathcal{H}_\oplus \rightarrow \mathcal{H}$, where $\mathcal{H}_\oplus = \mathcal{H}_{0} \oplus \bigoplus_{i \in I} (\mathcal{H}_{A_i} \otimes \mathcal{H}_{B_i})$. 
    
    We further define for all $i \in I$ the Hilbert space $\mathcal{H}_i = \mathcal{H}_{\mathtt{A}_i} \otimes \mathcal{H}_{\mathtt{B}_i}$. For all $k \in I \cup \{0\}$, let $P_k^\oplus \in \blt(\mathcal{H}_\oplus; \mathcal{H}_k)$ be the orthogonal projection onto $\mathcal{H}_k$ and let us define $P_k \in \blt(\mathcal{H}; \mathcal{H}_k)$ as $P_k = P_k^\oplus U_\mathcal{A}^\dagger$.\footnote{Note that this definition is consistent with the one introduced in the first paragraph above.} Hence, an arbitrary element $X_\mathcal{A} \in \mathcal{A}$ can be written as SOT-convergent series $X_\mathcal{A} = \sum_{i \in I} P_i^\dagger (X_{\mathtt{A}_i} \otimes \idmat_{\mathtt{B}_i}) P_i$, for some operators $X_{\mathtt{A}_i} \in \blt(\mathcal{H}_{\mathtt{A}_i})$, with $\sup_{i \in I} \norm{X_{\mathtt{A}_i}} < \infty$. 
\end{definition}
For $\mathcal{A} \subseteq \blt(\mathcal{H})$, we denote by $\mathcal{A}^\prime :=  \{X \in \blt(\mathcal{H}) \,|\, X X_\mathcal{A} = X_\mathcal{A} X, \,\forall X_\mathcal{A} \in \mathcal{A} \}$ its \emph{commutant}. If $\mathcal{A}$ is an atomic weakly closed *-algebra with decomposition given above, then as a special case of general theory of direct integral decompositions of vN-algebras (see e.g.   \cite[Proposition 14.1.24, Theorem 11.2.16]{kadison1997fundamentals2})), $\mathcal{A}^\prime$ is given by
\begin{align} \label{eq:FormOfAtomicCommutant}
    \mathcal{A}^\prime = U_\mathcal{A} \left( \blt(\mathcal{H}_0) \oplus \bigoplus_{i \in I}  (\idmat_{\mathtt{A}_i} \otimes \blt(\mathcal{H}_{\mathtt{B}_i}))\right) U_{\mathcal{A}}^\dagger.
\end{align}
Hence, an arbitrary element $X_\mathcal{A}^\prime \in \mathcal{A}^\prime$ can be written as SOT-convergent series
$X_{\mathcal{A}^\prime} = P_0^\dagger X_0 P_0 + \sum_{i \in I} P_i^\dagger (\idmat_{\mathtt{A}_i} \otimes X_{\mathtt{B}_i}) P_i$, for some operators $X_0 \in \blt(\mathcal{H}_0)$ and $X_{\mathtt{B}_i} \in \blt(\mathcal{H}_{\mathtt{B}_i})$, with $\sup_{i \in I} \norm{X_{\mathtt{B}_i}} < \infty$. And $X_{\mathcal{A}^\prime}$ is self-adjoint if and only if all the operators in the decomposition are self-adjoint \cite[Proposition 14.1.8]{kadison1997fundamentals2}.

\section{Results}

\subsection{GKLS-generators with invariant *-algebra} \label{GKLS-InvariantSection}

In this section we state and prove our first main result, namely a theorem that allows us to reduce the problem of characterizing GKLS-generators with invariant weakly closed *-algebras to characterizing CP-maps with invariant weakly closed *-algebras. Since CP-maps are special GKLS-generators (for $K = 0$), this renders these problems essentially equivalent. 
For technical reasons, we need to restrict ourselves to AFD algebras. The notation in the following theorem and the subsequent proof follows Section \ref{sec:NotationW*}.

\begin{theorem}{Theorem}\label{theorem:gkls-invariant-afd-algebra}
Let $L : \blt(\mathcal{H}) \rightarrow \blt(\mathcal{H})$ be defined by $L(X) = \Phi(X) - K^\dagger X - X K$, for some $\Phi \in \CP(\mathcal{H})$ and $K \in \blt(\mathcal{H})$, and let $\mathcal{A} \subseteq \blt(\mathcal{H})$ be an AFD weakly closed *-algebra. The following are equivalent

\begin{enumerate}
    \item $L(\mathcal{A}) \subseteq \mathcal{A}$. \label{cond:InvariantSubalgebra}
    \item (Stinespring) \label{cond:InvariantSubalgebraStinespring} Suppose $\Phi$ is given in Stinespring representation $\Phi(X) = V^\dagger (X \otimes \idmat_\mathtt{E}) V$, where $V \in \blt(\mathcal{H}; \mathcal{H} \otimes \mathcal{H}_\mathtt{E})$. Then there exist operators $V_0 \in \blt(\mathcal{H}; \mathcal{H}_0 \otimes \mathcal{H}_\mathtt{E})$, $A, B \in \blt(\mathcal{H}; \mathcal{H} \otimes \mathcal{H}_\mathtt{E})$ and $K_0 \in \blt(\mathcal{H}; \mathcal{H}_0)$; an operator $K_\mathcal{A} \in \mathcal{A}$; and a self-adjoint operator $H_{\mathcal{A}^\prime} \in \mathcal{A}^\prime$ such that 
    \begin{enumerate}
        \item \label{it:StinesprinConditionA}$A^\dagger (X_\mathcal{A} \otimes \idmat_\mathtt{E}) A \in \mathcal{A}$ and $(X_\mathcal{A} \otimes \idmat_\mathtt{E}) B = BX_\mathcal{A}$, for all $X_\mathcal{A} \in \mathcal{A}$.
        \item $V$ and $K$ have the following form:
        \begin{subequations}
        \begin{align}
        V &= (P_0^\dagger \otimes \idmat_\mathtt{E}) V_0 + A + B, \label{eq:VStinesprinForm}\\
        K &= B^\dagger A + \frac{1}{2} B^\dagger B + K_\mathcal{A} + i H_{\mathcal{A}^\prime} + P_0^\dagger K_0.\label{eq:KStinesprinForm}
        \end{align}
        \end{subequations}
    \end{enumerate}
    \item (Kraus) \label{cond:InvariantSubalgebraKraus} Suppose $\Phi$ is given in Kraus representation $\Phi(X) = \sum_{n \in N} \phi_n^\dagger X \phi_n$. Then there exists a countable index set $N$; collections of operators $\{v_n\}_{n \in N} \subset \blt(\mathcal{H}; \mathcal{H}_0)$ and $\{a_n\}_{n \in N}, \{b_n\}_{n \in N} \subset \blt(\mathcal{H})$ such that $\sum_{n\in N} v_n^\dagger v_n$, $\sum_{n\in N} a_n^\dagger a_n$ and $\sum_{n\in N} b_j^\dagger b_j$ SOT-converge; an operator $K_0 \in \blt(\mathcal{H}; \mathcal{H}_0)$; an operator $K_\mathcal{A} \in \mathcal{A}$; and a self-adjoint operator $H_{\mathcal{A}^\prime} \in \mathcal{A}^\prime$ such that 
    \begin{enumerate}
        \item $\sum_{n \in N} a_n^\dagger X_\mathcal{A} a_n \in \mathcal{A}$ for all $X_\mathcal{A} \in \mathcal{A}$ and $b_n \in \mathcal{A}^\prime$ for all $n \in N$. 
        \item $\{\phi_n\}_{n \in N}$ and $K$ have the following form:
        \begin{subequations}
        \begin{align}
            \phi_n &= P_0^\dagger v_n + a_n + b_n, \text{ for all } n \in N,\\
            K &= \sum_{n \in N} b_n^\dagger a_n + \frac{1}{2} \sum_{n \in N} b_n^\dagger b_n + K_\mathcal{A} + iH_{\mathcal{A}^\prime} + P_0^\dagger K_0.   
        \end{align}
        \end{subequations}
    \end{enumerate}
\end{enumerate}
\end{theorem}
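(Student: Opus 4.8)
The plan is to first reduce the claim to its essential core and then attack the one genuinely hard implication. The equivalence \ref{cond:InvariantSubalgebraStinespring}$\Leftrightarrow$\ref{cond:InvariantSubalgebraKraus} is the Stinespring--Kraus dictionary: fixing an orthonormal basis $\{\ket{n}\}_{n\in N}$ of $\mathcal{H}_\mathtt{E}$ turns $V$ into the Kraus family $\phi_n=(\idmat\otimes\bra{n})V$ and conversely, and under this correspondence the two lists of conditions match term by term (intertwiners go to intertwiners, $A^\dagger(X_\mathcal{A}\otimes\idmat)A\in\mathcal{A}$ to $\sum_n a_n^\dagger X_\mathcal{A} a_n\in\mathcal{A}$, etc.). So it suffices to prove \ref{cond:InvariantSubalgebra}$\Leftrightarrow$\ref{cond:InvariantSubalgebraStinespring}. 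Next I would dispose of the null part: setting $V_0:=(P_0\otimes\idmat_\mathtt{E})V$ and $K_0:=P_0K$, any contribution of $V$ with range in $\mathcal{H}_0\otimes\mathcal{H}_\mathtt{E}$ and of $K$ with range in $\mathcal{H}_0$ is annihilated in $L(X_\mathcal{A})$ because $P_0X_\mathcal{A}=0=X_\mathcal{A}P_0^\dagger$ for $X_\mathcal{A}\in\mathcal{A}$; these pieces are therefore unconstrained, and one may assume $\mathcal{A}$ is a \emph{unital} vN-algebra on $\mathcal{H}_{\overline{0}}$, reinstating $(P_0^\dagger\otimes\idmat)V_0$ and $P_0^\dagger K_0$ at the end.

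For \ref{cond:InvariantSubalgebraStinespring}$\Rightarrow$\ref{cond:InvariantSubalgebra} I would substitute \ref{eq:VStinesprinForm}--\ref{eq:KStinesprinForm} into $L$. After the null cross terms vanish, expanding $(A+B)^\dagger(X_\mathcal{A}\otimes\idmat)(A+B)$ and using $(X_\mathcal{A}\otimes\idmat)B=BX_\mathcal{A}$ (hence $B^\dagger(X_\mathcal{A}\otimes\idmat)=X_\mathcal{A}B^\dagger$ and $B^\dagger B\in\mathcal{A}^\prime$), the mixed terms $A^\dagger BX_\mathcal{A}$ and $X_\mathcal{A}B^\dagger A$ cancel exactly against the corresponding contributions of $-K^\dagger X_\mathcal{A}-X_\mathcal{A}K$, the quadratic $B$-terms collapse to $\tfrac12\comu{B^\dagger B}{X_\mathcal{A}}=0$, and the Hamiltonian part gives $i\comu{H_{\mathcal{A}^\prime}}{X_\mathcal{A}}=0$. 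What survives is $L(X_\mathcal{A})=A^\dagger(X_\mathcal{A}\otimes\idmat)A-K_\mathcal{A}^\dagger X_\mathcal{A}-X_\mathcal{A}K_\mathcal{A}$, which lies in $\mathcal{A}$ by \ref{it:StinesprinConditionA} and $K_\mathcal{A}\in\mathcal{A}$.

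The substance is \ref{cond:InvariantSubalgebra}$\Rightarrow$\ref{cond:InvariantSubalgebraStinespring}. My main tool is the dissipation form $\mathcal{D}(X,Y):=L(X^\dagger Y)-X^\dagger L(Y)-L(X^\dagger)Y+X^\dagger L(\idmat)Y$, for which a short computation (the $K$-terms cancel identically) gives $\mathcal{D}(X,Y)=C_X^\dagger C_Y$ with $C_X:=(X\otimes\idmat)V-VX$. Since $L(\mathcal{A})\subseteq\mathcal{A}$ and $\idmat\in\mathcal{A}$ in the reduced situation, this yields $C_X^\dagger C_Y\in\mathcal{A}$ for all $X,Y\in\mathcal{A}$, while $X\mapsto C_X$ is an inner derivation, $C_{XY}=(X\otimes\idmat)C_Y+C_XY$. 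I would then split $V=A+B$ along the representation $\pi(X)=X\otimes\idmat$ of $\mathcal{A}$ on $\mathcal{H}\otimes\mathcal{H}_\mathtt{E}$, with $B$ an $\mathcal{A}$-intertwiner, so that $C_X=(X\otimes\idmat)A-AX$, and set $M:=K-B^\dagger A-\tfrac12 B^\dagger B$; a direct simplification then reduces the generator on $\mathcal{A}$ to $L(X)=A^\dagger(X\otimes\idmat)A-M^\dagger X-XM$. \emph{Here lies the main obstacle}: one must choose the splitting so that simultaneously $B$ is an intertwiner \emph{and} $\Psi(X):=A^\dagger(X\otimes\idmat)A\in\mathcal{A}$. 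The naive orthogonal projection onto $\overline{\mathrm{span}}\{C_X\ket{\psi}\}$ makes $B$ an intertwiner but, because $\mathcal{A}^\prime$ is noncommutative, does not force $\Psi$ to be $\mathcal{A}$-valued. To control this I would exploit the invariance in sharpened form, testing $\comu{L(X)}{Y^\prime}=0$ against $Y^\prime\in\mathcal{A}^\prime$ to obtain the coupling relation $\comu{\Phi(X)}{Y^\prime}=\comu{K^\dagger}{Y^\prime}X+X\comu{K}{Y^\prime}$, and combine it with complete positivity of $\Phi$; the AFD hypothesis enters precisely here, letting me approximate by finite-dimensional subalgebras and average with the conditional expectations onto $\mathcal{A}$ and $\mathcal{A}^\prime$ (Haar averages over $\mathcal{U}(\mathcal{A}^\prime)$, resp.\ $\mathcal{U}(\mathcal{A})$) before passing to the WOT-limit.

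Once $\Psi(\mathcal{A})\subseteq\mathcal{A}$ is secured, the reduced identity forces $M^\dagger X+XM\in\mathcal{A}$ for all $X\in\mathcal{A}$. Writing $M=S+iT$ with $S,T$ self-adjoint, the choice $X=\idmat$ gives $S\in\mathcal{A}$, and subtracting $SX+XS\in\mathcal{A}$ leaves $\comu{X}{T}\in\mathcal{A}$ for all $X\in\mathcal{A}$. The map $X\mapsto\comu{X}{T}$ is a bounded $*$-derivation of $\mathcal{A}$ into $\mathcal{A}$, hence inner by Sakai's theorem, implemented by a self-adjoint $T_0\in\mathcal{A}$, so that $H_{\mathcal{A}^\prime}:=T-T_0\in\mathcal{A}^\prime$ is self-adjoint. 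Setting $K_\mathcal{A}:=S+iT_0\in\mathcal{A}$ gives $M=K_\mathcal{A}+iH_{\mathcal{A}^\prime}$, and reinstating $(P_0^\dagger\otimes\idmat)V_0$ and $P_0^\dagger K_0$ produces exactly the forms \ref{eq:VStinesprinForm}--\ref{eq:KStinesprinForm}, completing the equivalence.
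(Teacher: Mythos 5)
Your reductions are sound: the Stinespring--Kraus dictionary, the disposal of the null part via $P_0X_\mathcal{A}=0=X_\mathcal{A}P_0^\dagger$, and the easy implication \ref{cond:InvariantSubalgebraStinespring}$\Rightarrow$\ref{cond:InvariantSubalgebra} all match the paper. Your treatment of the $K$-part is even a genuinely different and legitimate route: the paper extracts $K_\mathcal{A}$ and $H_{\mathcal{A}^\prime}$ by a second round of Haar averaging applied to $\kappa=K-B^\dagger A-\tfrac12 B^\dagger B$, whereas you observe that $X\mapsto\comu{X}{T}$ is a bounded derivation of the (reduced, unital) von Neumann algebra $\mathcal{A}$ into itself and invoke Kadison--Sakai to split $T=T_0+H_{\mathcal{A}^\prime}$ with $T_0\in\mathcal{A}$ self-adjoint and $H_{\mathcal{A}^\prime}\in\mathcal{A}^\prime$; that works and is arguably cleaner. \emph{But all of this is conditional on the splitting $V=(P_0^\dagger\otimes\idmat_\mathtt{E})V_0+A+B$ having been constructed, and that construction --- the step you yourself flag as ``the main obstacle'' --- is missing.} The paper's solution is to average the Stinespring operator $V$ itself: with $\mathcal{A}_n$ the finite-dimensional subalgebras (units adjoined) and $\hat{X}=P_{\overline{0}}^\dagger XP_{\overline{0}}$, one sets $\mathbb{E}_n(V)=\int_{\mathcal{U}(\mathcal{A}_n)}(\hat{U}^\dagger\otimes\idmat_\mathtt{E})V\hat{U}\,\mathrm{d}U$ and takes an ultraweak subsequential limit $B:=\mathbb{E}(V)$ (Banach--Alaoglu). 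Haar invariance makes $B$ an intertwiner, and --- this is the key identity your sketch lacks --- the \emph{doubly Haar-averaged} dissipation form factorizes as $\bigl((\hat{\idmat}_{\overline{0}}\otimes\idmat_\mathtt{E})V-\mathbb{E}_n(V)\bigr)^\dagger(X\otimes\idmat_\mathtt{E})\bigl((\hat{\idmat}_{\overline{0}}\otimes\idmat_\mathtt{E})V-\mathbb{E}_m(V)\bigr)$, with each average lying in $\mathcal{A}$ (Bochner integral of $\mathcal{A}$-valued integrands), so that $A:=(\hat{\idmat}_{\overline{0}}\otimes\idmat_\mathtt{E})V-\mathbb{E}(V)$ inherits $A^\dagger(X_\mathcal{A}\otimes\idmat_\mathtt{E})A\in\mathcal{A}$ from ultraweak closedness of $\mathcal{A}$. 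In other words, the \emph{same} average that produces the intertwiner $B$ simultaneously forces the CP-invariance of the $A$-part; the two properties are not obtained separately.

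By contrast, your plan for this step --- the coupling relation $\comu{\Phi(X)}{Y^\prime}=\comu{K^\dagger}{Y^\prime}X+X\comu{K}{Y^\prime}$ combined with complete positivity and ``conditional expectations onto $\mathcal{A}$ and $\mathcal{A}^\prime$'' --- is never carried out, and as stated it leans on an unavailable tool: a conditional expectation onto $\mathcal{A}$ realized as a Haar average over $\mathcal{U}(\mathcal{A}^\prime)$. The AFD hypothesis is on $\mathcal{A}$, not on $\mathcal{A}^\prime$; averaging over $\mathcal{U}(\mathcal{A}^\prime)$ would require amenability/injectivity of the commutant, which one can only get from Connes' theorem (injectivity passes to commutants and equals hyperfiniteness for separable preduals) --- a far heavier hammer than the theorem needs, and one you neither invoke nor actually need, since the paper's proof only ever averages over unitaries of finite-dimensional subalgebras of $\mathcal{A}$. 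Without the factorization identity above (or a worked-out substitute), nothing in your outline forces $A^\dagger(X_\mathcal{A}\otimes\idmat_\mathtt{E})A$ to lie in $\mathcal{A}$, and that is the entire content of the hard direction.
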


\noindent \textbf{Remark.} If we take $\mathcal{X}_\mathcal{A} \in \mathcal{A}$, then a simple calculation (done in the first part of the proof) shows that
\begin{align*}
    L(X_\mathcal{A}) = A^\dagger (X_\mathcal{A} \otimes \idmat_\mathtt{E}) A - K_\mathcal{A}^\dagger X_\mathcal{A} - X_\mathcal{A} K_\mathcal{A}.
\end{align*}
This is the generator of a quantum dynamical semigroup on $\mathcal{A}$ and has the general form found by Christensen and Evans \cite[Theorem 3.1]{christensen1979cohomology}. Since the operators $A$ and $K_\mathcal{A}$ specify a generator on the subalgebra, the operators $B$, $K_0$, and $H_{\mathcal{A}^\prime}$ determine a specific one among the possible extensions of such a generator to all of $\blt(\mathcal{H})$. Note that if $\mathcal{A}$ is a von Neumann algebra, then the form simplifies since the term $P_0^\dagger K_0$ does not appear.  

\begin{proof}
We prove \ref{cond:InvariantSubalgebra} $\iff$ \ref{cond:InvariantSubalgebraStinespring} and obtain \ref{cond:InvariantSubalgebraKraus} as a corollary. For the implication \ref{cond:InvariantSubalgebraStinespring} $\implies$ \ref{cond:InvariantSubalgebra}, let $X_\mathcal{A} \in \mathcal{A}$ be arbitrary. We have
\begin{align*}
    \Phi(X_\mathcal{A}) &=\quad~ &&(V_0^\dagger (P_0 \otimes \idmat_\mathtt{E}) + A^\dagger + B^\dagger) (X_\mathcal{A} \otimes \idmat_\mathtt{E}) ((P_0^\dagger \otimes \idmat_\mathtt{E})V_0 + A + B) \\
    &\stackrel{\mathmakebox[\widthof{=}]{P_0X_\mathcal{A} = 0 = X_\mathcal{A}P_0^\dagger}}{=} &&(A^\dagger + B^\dagger) (X_\mathcal{A} \otimes \idmat_\mathtt{E}) (A + B) \\
    &\stackrel{\mathmakebox[\widthof{=}]{(X_\mathcal{A} \otimes \idmat_\mathtt{E})B = BX_\mathcal{A}}}{=} &&A^\dagger (X_\mathcal{A} \otimes \idmat_\mathtt{E}) A + (B^\dagger A + \frac{1}{2} B^\dagger B)^\dagger X_\mathcal{A}  + X_\mathcal{A} (B^\dagger A + \frac{1}{2} B^\dagger B)
\end{align*}
and
\begin{align*}
    K^\dagger X_\mathcal{A} + X_\mathcal{A}K &= (B^\dagger A + \frac{1}{2} B^\dagger B)^\dagger X_\mathcal{A}  + X_\mathcal{A} (B^\dagger A + \frac{1}{2} B^\dagger B) + K_\mathcal{A}^\dagger X_\mathcal{A} + X_\mathcal{A} K_\mathcal{A} \\
    &\hphantom{=}~- \underbrace{iH_{\mathcal{A}^\prime}X_\mathcal{A} + iX_\mathcal{A} H_{\mathcal{A}^\prime}}_{ = 0, \text{ since } H_{\mathcal{A}^\prime} \in \mathcal{A}^\prime } + \underbrace{K_0^\dagger P_0 X_\mathcal{A} + X_\mathcal{A}P_0^\dagger K_0}_{ = 0, \text{ since } P_0X_\mathcal{A} = 0 = X_\mathcal{A} P_0^\dagger}.
\end{align*}
Combining the calculations above yields
\begin{align*}
    L(X_\mathcal{A}) = A^\dagger (X_\mathcal{A} \otimes \idmat_\mathtt{E}) A - K_\mathcal{A}^\dagger X_\mathcal{A} - X_\mathcal{A} K_\mathcal{A},
\end{align*}
which belongs to $\mathcal{A}$ since by assumption $A^\dagger (X_\mathcal{A} \otimes \idmat_E) A \in \mathcal{A}$, $K_\mathcal{A}\in \mathcal{A}$ and $K_\mathcal{A}^\dagger \in \mathcal{A}$. \\
The proof of the converse proceeds in two main steps: First we show that there are operators $V_0$, $A$ and $B$ such that the conditions in \ref{it:StinesprinConditionA} and Eq.~\eqref{eq:VStinesprinForm} hold. Second, we derive the form of $K$. 
As a first step, we construct a family of linear maps on $\blt(\mathcal{H})$ each of which is closely related to $L$ and leaves $\mathcal{A}$ invariant.
Since $L(\mathcal{A}) \subseteq \mathcal{A}$, and since $\mathcal{A}$ is a *-algebra, 
\begin{align}
    \Psi(X, Y, Z) := L(Y^\dagger X Z) - Y^\dagger L(XZ) - L(Y^\dagger X)Z + Y^\dagger L(X)Z
\end{align}
is an element of $\mathcal{A}$ whenever $X, Y, Z \in \mathcal{A}$. A direct calculation using the representation $\Phi(X) = V^\dagger (X \otimes \idmat_{\mathtt{E}})V$ reveals that
\begin{align}
    \Psi(X, Y, Z) = \left[VY - (Y \otimes \idmat_E)V\right]^\dagger (X \otimes \idmat_E) \left[(VZ - (Z \otimes \idmat_E)V\right].
\end{align}
With the notation introduced in Section \ref{sec:NotationW*}: Since $\mathcal{A}$ is AFD, so is $U_\mathcal{A}^\dagger \mathcal{A} U_\mathcal{A} = 0_0 \oplus \mathcal{A}_{\overline{0}}$ and so is the vN-algebra $\mathcal{A}_{\overline{0}} \subseteq \blt(\mathcal{H}_{\overline{0}})$. Let $\tilde{\mathcal{A}}_1 \subseteq \tilde{\mathcal{A}}_2 \subseteq \tilde{\mathcal{A}}_3 \subseteq \cdots$ be an increasing sequence of finite-dimensional *-subalgebras of $\mathcal{A}_{\overline{0}}$, such that $\cup_{n \in \mathbb{N}} \tilde{\mathcal{A}}_n$ is WOT-dense in $\mathcal{A}_{\overline{0}}$. For every $n \in \mathbb{N}$, define $\mathcal{A}_n := \mathrm{span}\{\tilde{\mathcal{A}}_n \cup \mathbb{C} \idmat_{\overline{0}}\}$. Clearly, also $\cup_{n \in \mathbb{N}} \mathcal{A}_n$ is WOT-dense in $\mathcal{A}_{\overline{0}}$, but now $\mathcal{A}_n$ is a vN-algebra for every $n\in\mathbb{N}$. 
In the following, we will often need to assign to operators in $\mathcal{A}_{\overline{0}}$ the corresponding ones in $\mathcal{A}$. For notational convenience, we define for each $X \in \mathcal{A}_{\overline{0}}$ the operator $\hat{X} = P_{\overline{0}}^\dagger X P_{\overline{0}} \in \blt(\mathcal{H})$. We denote by $\mathcal{U}(\mathcal{A}_n)$ the unitary group in $\mathcal{A}_n$. 
As $\mathcal{A}_n$ is finite-dimensional, $\mathcal{U}(\mathcal{A}_n)$ is a compact group, so there exists a unique Haar probability measure on $\mathcal{U}(\mathcal{A}_n)$. For any $n, m \in \mathbb{N}$ and $X \in \blt(\mathcal{H})$, we obtain the following Haar average
\begin{align} \label{eq:HaarAverage1}
\begin{split}
    \int_{\unitary(\mathcal{A}_m)} &\int_{\unitary(\mathcal{A}_n)} \Psi(\hat{U}_n X \hat{W}_m^\dagger, \hat{U}_n^\dagger, \hat{W}_m) \, \mathrm{d}U_n \mathrm{d}W_m \\&= \left((\hat{\idmat}_{\overline{0}} \otimes \idmat_\mathtt{E})V -  \mathbb{E}_n(V)\right)^\dagger (X \otimes \idmat_\mathtt{E}) \left((\hat{\idmat}_{\overline{0}} \otimes \idmat_\mathtt{E})V - \mathbb{E}_m(V) \right),
\end{split}
\end{align}
where 
\begin{align*}
    \mathbb{E}_k(V) := \int_{\unitary(\mathcal{A}_k)} (\hat{U}_k^\dagger \otimes \idmat_\mathtt{E}) V \hat{U}_k \,\mathrm{d}U_k, \quad k \in \mathbb{N}.
\end{align*}
Since we integrate over a probability measure, $\norm{\mathbb{E}_k(V)} \leq \norm{V}$ and hence the (sequential) Banach–Alaoglu theorem implies that the sequence $(\mathbb{E}_k(V))_{k \in \mathbb{N}}$ has an ultraweakly convergent subsequence whose limit we denote by $\mathbb{E}(V)$. For $X_\mathcal{A} \in \mathcal{A}$, the RHS of Eq.~\eqref{eq:HaarAverage1} is an element of $\mathcal{A}$ for all $n, m \in \mathbb{N}$, since the integrand is in $\mathcal{A}$ and the Bochner integral converges in norm. Furthermore, since $\mathcal{A}$ is ultraweakly closed, passing to subsequences and taking the limit $n \rightarrow \infty$ and then $m \rightarrow \infty$ yields that 
\begin{align} \label{eq:CpInvariant}
    \Psi(X_\mathcal{A}) := \left((\hat{\idmat}_{\overline{0}} \otimes \idmat_\mathtt{E})V -  \mathbb{E}(V)\right)^\dagger (X_\mathcal{A} \otimes \idmat_\mathtt{E}) \left((\hat{\idmat}_{\overline{0}} \otimes \idmat_\mathtt{E})V - \mathbb{E}(V) \right)
\end{align}
is an element of $\mathcal{A}$ for all $X_\mathcal{A} \in \mathcal{A}$. In other words, $\Psi$ interpreted as a CP-map satisfies $\Psi(\mathcal{A}) \subseteq \mathcal{A}$. We now define $V_0$, $B$ and $A$ as follows: $V_0 = (P_0 \otimes \idmat_\mathtt{E}) V$, $B = \mathbb{E}(V)$ and $A = V - (P_0^\dagger \otimes \idmat_\mathtt{E})V_0 - B$. Thus $V = A + B + (P_0^\dagger \otimes \idmat_\mathtt{E})V$, which is precisely Eq.~\eqref{eq:VStinesprinForm}. It follows directly from Eq.~\eqref{eq:CpInvariant} that $A^\dagger (X_\mathcal{A} \otimes \idmat_\mathtt{E}) A \in \mathcal{A}$ for all $X_\mathcal{A} \in \mathcal{A}$ --- verifying the first part of condition \ref{it:StinesprinConditionA}. 
By the definition of the Haar measure and since $(\mathcal{A}_k)_{k \in \mathbb{N}}$ is an increasing sequence, we have $(\hat{U}_k \otimes \idmat_\mathtt{E}) \mathbb{E}(V) = \mathbb{E}(V)\hat{U}_k$ for all $U_k \in \mathcal{U}(\mathcal{A}_k)$. But since every $X_k \in \mathcal{A}_k$ can be written as a finite linear combination of elements in $\mathcal{U}(\mathcal{A}_k)$ (see \cite[Theorem 4.1.7]{kadison1997fundamentals}), we have $(\hat{X}_k \otimes \idmat_E) \mathbb{E}(V) = \mathbb{E}(V)\hat{X}_k$, for all $X_k \in \mathcal{A}_k$ and hence $(\hat{X} \otimes \idmat_\mathtt{E}) \mathbb{E}(V) = \mathbb{E}(V)\hat{X}$ for all $X \in \cup_{n \in \mathbb{N}} \mathcal{A}_n$. Evidently, this equation is also preserved under ultraweak limits. Thus $(X_\mathcal{A} \otimes \idmat_\mathtt{E}) \mathbb{E}(V) = \mathbb{E}(V)X_\mathcal{A}$, for all $X_\mathcal{A} \in \mathcal{A}$. Since $B = \mathbb{E}(V)$, this implies the second part of condition \ref{it:StinesprinConditionA}. 

It remains to show that $K$ has the desired form. To this end, note that for any $X_\mathcal{A} \in \mathcal{A}$, we have $L(X_\mathcal{A}) \in \mathcal{A}$ by assumption, but since $V = (P_0^\dagger \otimes \idmat_\mathtt{E}) V_0 + A + B$, $(X_\mathcal{A} \otimes \idmat_\mathtt{E})B = B X_\mathcal{A}$ and $P_0X_\mathcal{A} = 0$, we also have
\begin{align*}
    L(X_\mathcal{A}) &= \left[A + B\right]^\dagger (X_\mathcal{A} \otimes \idmat_\mathtt{E}) \left[A + B\right] - K^\dagger X_\mathcal{A} - X_\mathcal{A}K \\
    &= A^\dagger (X_\mathcal{A} \otimes \idmat_\mathtt{E}) A - (K -  B^\dagger A - \frac{1}{2} B^\dagger B )^\dagger X_\mathcal{A} - X_\mathcal{A} (K -  B^\dagger A - \frac{1}{2} B^\dagger B ).
\end{align*}
Since $A^\dagger (X_\mathcal{A} \otimes \idmat_\mathtt{E}) A \in \mathcal{A}$, this implies that
\begin{align} \label{eq:KappaInvarinatEquation}
    -\kappa^\dagger X_\mathcal{A} - X_\mathcal{A}\kappa \in \mathcal{A},  
\end{align}
for all $X_\mathcal{A} \in  \mathcal{A}$, where $\kappa = K - B^\dagger A - \frac{1}{2} B^\dagger B$. 
For $U_n \in \mathcal{U}(\mathcal{A}_n)$, we choose $X_\mathcal{A} = \hat{U}_n$, multiply Eq.~\eqref{eq:KappaInvarinatEquation} from the left by $\hat{U}_n^\dagger$ and integrate over the Haar measure. Thus, we see that
\begin{align} \label{eq:UltraweakKappaLimit}
    - \int_{\unitary(\mathcal{A}_n)} \hat{U}_n^\dagger \kappa \hat{U}_n \,\mathrm{d}U_n - \hat{\idmat}_{\overline{0}} \kappa
\end{align}
belongs to $\mathcal{A}$. By the same arguments as above, we can pass to a subsequence such that for $n \rightarrow \infty$, expression Eq.~\eqref{eq:UltraweakKappaLimit} converges to 
\begin{align} \label{eq:KappaAfterLinitEquation}
    - \kappa_{\mathcal{A}^\prime} - \hat{\idmat}_{\overline{0}} \kappa,
\end{align}
for some $\kappa_{\mathcal{A}^\prime} \in \mathcal{A}^\prime$ and such that the whole expression belongs to $\mathcal{A}$. We now define the self-adjoint operator $H_{\mathcal{A}^\prime} = - \frac{1}{2i} (\kappa_{\mathcal{A}^\prime} - \kappa_{\mathcal{A}^\prime}^\dagger) \in \mathcal{A}^\prime$, the operator $K_\mathcal{A} :=  \hat{\idmat}_{\overline{0}} \kappa - iH_{\mathcal{A}^\prime} = \kappa - P_0^\dagger P_0\kappa - iH_{\mathcal{A}^\prime}$ and $K_0 = P_0\kappa$. By the definition of $K_0$ and $K_\mathcal{A}$ we thus get $\kappa = K_\mathcal{A} + iH_{\mathcal{A}^\prime} + P_0^\dagger K_0$, which is the desired form if $K_\mathcal{A} \in \mathcal{A}$. This last assertion can be seen as follows:
\begin{align*}
    K_\mathcal{A} &= \frac{1}{2}\left(K_\mathcal{A} + K_\mathcal{A}^\dagger \right) + \frac{1}{2} \left(K_\mathcal{A} - K_\mathcal{A}^\dagger\right) \\
    &= \underbrace{\frac{1}{2} \left(\kappa^\dagger \hat{\idmat}_{\overline{0}} + \hat{\idmat}_{\overline{0}} \kappa \right)}_{\in \mathcal{A}, \text{ by }  \hat{\idmat}_{\overline{0}} \in \mathcal{A} \text{ and Eq.~} \eqref{eq:KappaInvarinatEquation}}  + \frac{1}{2} \bigg( \underbrace{(\hat{\idmat}_{\overline{0}} \kappa + \kappa_{\mathcal{A}^\prime})}_{\in\mathcal{A}, \text{ by Eq.~} \eqref{eq:KappaAfterLinitEquation}} - \underbrace{(\hat{\idmat}_{\overline{0}} \kappa + \kappa_{\mathcal{A}^\prime})^\dagger}_{\in\mathcal{A}, \text{ by Eq.~} \eqref{eq:KappaAfterLinitEquation}} \bigg).
\end{align*}
This finishes the proof of \ref{cond:InvariantSubalgebra} $\iff$ \ref{cond:InvariantSubalgebraStinespring}. 

Part \ref{cond:InvariantSubalgebraKraus} is a matter of going from the Stinespring representation of normal CP-maps to their Kraus representation and back. This is a standard procedure and a very nice account can be found in \cite{AttalCPMaps}. We just mention here that after choosing an orthonormal basis $\{\ket{e_n} \}_{n \in N}$ of $\mathcal{H}_\mathtt{E}$, the collections $\{v_n\}_{n \in N}$, $\{a_n\}_{n \in N}$ and $\{b_n\}_{n \in N}$ and the operators $V_0$, $A$ and $B$ are related via $v_n := (\idmat \otimes \bra{e_n})V_0$, $a_n := (\idmat \otimes \bra{e_n})A$ and $b_n := (\idmat \otimes \bra{e_n})B$. The corresponding properties are then routinely verifiable.  
\end{proof}

\subsection{CP-maps with invariant atomic algebra} \label{CP-invarinatSection}

In this section, we study the problem of finding a normal form for (normal) CP-maps with an \textit{atomic} invariant subalgebra. Slightly more generally, we aim to find normal a normal form for normal CP-maps $\Phi$ with the property that $\Phi(\mathcal{A}) \subseteq \mathcal{C}$, for two atomic weakly closed *-algebras $\mathcal{A}$ and $\mathcal{C}$. Since we are now dealing with two algebras, we need to distinguish them in the notation in Definition \ref{Defn:NormalFormAtomicAlgebra}. For the algebra $\mathcal{A} \subseteq \blt(\mathcal{H}_\mathcal{A})$: the index set is called $I$; the Hilbert spaces $\{\mathcal{H}_i\}_{i \in I \cup \{0\}}$ are denoted by $\mathcal{H}_{i:\mathcal{A}}$, with $\mathcal{H}_{i:\mathcal{A}} = \mathcal{H}_{\mathtt{A}_i} \otimes \mathcal{H}_{\mathtt{B}_i}$ ($i \in I$); and the operators $P_i$ are called $P_{i:\mathcal{A}} \in \blt(\mathcal{H}_\mathcal{A}; \mathcal{H}_{i:\mathcal{A}})$.  
For the algebra $\mathcal{C} \subseteq \blt(\mathcal{H}_\mathcal{C})$: the index set is called $J$; the Hilbert spaces $\{\mathcal{H}_j\}_{j \in J \cup \{0\}}$ are denoted by $\mathcal{H}_{j:\mathcal{C}}$, with $\mathcal{H}_{j:\mathcal{C}} = \mathcal{H}_{\mathtt{C}_j} \otimes \mathcal{H}_{\mathtt{D}_j}$ ($j \in J$); and the operators $P_j$ are called $P_{j:\mathcal{C}} \in \blt(\mathcal{H}_\mathcal{C}; \mathcal{H}_{j:\mathcal{C}})$. 
With this notation in place, we can state our second main result:

\begin{theorem}{Theorem}\label{theorem:cp-map-algebra-to-algebra-stinespring}
   Let $\mathcal{A} \subseteq \blt(\mathcal{H}_{\mathcal{A}})$ and $\mathcal{C} \subseteq \blt(\mathcal{H}_{\mathcal{C}})$ be two atomic weakly closed *-algebras. 
   For $\Phi \in \CP(\mathcal{H}_\mathcal{A}; \mathcal{H}_\mathcal{C})$ defined by $\Phi(X) = V^\dagger (X \otimes \idmat_\mathtt{E}) V$, with $V \in \blt(\mathcal{H}_\mathcal{C};\mathcal{H}_\mathcal{A} \otimes \mathcal{H}_\mathtt{E})$, the following are equivalent
   \begin{enumerate}
       \item \label{it:AToCCondition}$\Phi(\mathcal{A}) \subseteq \mathcal{C}$.
       \item \label{CharacterizationConditionCPTheorem} There exist an operator $V_0 \in \blt(\mathcal{H}_\mathcal{C}; \mathcal{H}_{0:\mathcal{A}} \otimes \mathcal{H}_\mathtt{E})$; and for all $i \in I$ and $j \in J$ 
        Hilbert spaces $\mathcal{H}_{\mathtt{F}_{ij}}$, operators $A_{ij} \in \blt(\mathcal{H}_{\mathtt{C}_j}; \mathcal{H}_{\mathtt{A}_i} \otimes \mathcal{H}_{\mathtt{F}_{ij}})$, and isometries $U_{ij} \in \blt(\mathcal{H}_{\mathtt{F}_{ij}} \otimes \mathcal{H}_{\mathtt{D}_j}; \mathcal{H}_{\mathtt{B}_i} \otimes \mathcal{H}_\mathtt{E})$, such that
        \begin{itemize}
            \item $V$ can be decomposed as
            \begin{align} \label{eq:normalFormCPV}
            V = (P_{0:\mathcal{A}}^\dagger \otimes \idmat_\mathtt{E}) V_0 + \sum_{i \in I,\, j\in J} (P_{i: \mathcal{A}}^\dagger \otimes \idmat_\mathtt{E}) V_{ij} P_{j:\mathcal{C}}, 
            \end{align}
            with $V_{ij} = (\idmat_{\mathtt{A}_i} \otimes U_{ij})(A_{ij} \otimes \idmat_{\mathtt{D}_j})$, s.t. the series SOT-converges.
        \item The relation $U_{ik}^\dagger U_{il} = \delta_{kl} \idmat$ holds for all $i \in I$ and $k, l \in J$. 
        \end{itemize}
   \end{enumerate}
   The representation in \ref{CharacterizationConditionCPTheorem} can be chosen such that $\{(X \otimes \idmat_{\mathtt{F}_{ij}})A_{ij}\ket{\psi} \, | \, X \in \blt(\mathcal{H}_{\mathtt{A}_i}), \ket{\psi} \in \mathcal{H}_{\mathtt{C}_j} \}$ is total in $\mathcal{H}_{\mathtt{A}_i} \otimes \mathcal{H}_{\mathtt{F}_{ij}}$.
\end{theorem}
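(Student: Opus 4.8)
The plan is to prove the two implications separately. The reverse direction \ref{CharacterizationConditionCPTheorem} $\Rightarrow$ \ref{it:AToCCondition} is a direct computation, while the forward direction \ref{it:AToCCondition} $\Rightarrow$ \ref{CharacterizationConditionCPTheorem} is the substantial part and rests on a twofold use of the Stinespring uniqueness Theorem \ref{thm:UniquenessStinespringThm}. For \ref{CharacterizationConditionCPTheorem} $\Rightarrow$ \ref{it:AToCCondition} I would insert the claimed form of $V$ into $\Phi(X_\mathcal{A}) = V^\dagger (X_\mathcal{A} \otimes \idmat_\mathtt{E}) V$ for an arbitrary $X_\mathcal{A} = \sum_{i} P_{i:\mathcal{A}}^\dagger (X_{\mathtt{A}_i} \otimes \idmat_{\mathtt{B}_i}) P_{i:\mathcal{A}} \in \mathcal{A}$. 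The summand $(P_{0:\mathcal{A}}^\dagger \otimes \idmat_\mathtt{E}) V_0$ drops out since $X_\mathcal{A} P_{0:\mathcal{A}}^\dagger = 0$; the relations $P_{k:\mathcal{A}} P_{i:\mathcal{A}}^\dagger = \delta_{ki}\idmat$ collapse the cross terms in $i$; and inserting $V_{ij} = (\idmat_{\mathtt{A}_i} \otimes U_{ij})(A_{ij} \otimes \idmat_{\mathtt{D}_j})$ together with the orthogonality $U_{ik}^\dagger U_{il} = \delta_{kl}\idmat$ leaves $\Phi(X_\mathcal{A}) = \sum_{j\in J} P_{j:\mathcal{C}}^\dagger \big( (\sum_{i \in I} A_{ij}^\dagger (X_{\mathtt{A}_i} \otimes \idmat_{\mathtt{F}_{ij}}) A_{ij}) \otimes \idmat_{\mathtt{D}_j}\big) P_{j:\mathcal{C}}$, which is manifestly an element of $\mathcal{C}$.

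For the forward direction I would first reduce to a single atom $i \in I$ of $\mathcal{A}$. Since $\mathcal{C}$ is a weakly closed linear space and $\Phi$ is linear, $\Phi(\mathcal{A}) \subseteq \mathcal{C}$ is equivalent to requiring $\Phi(\hat{X}_{\mathtt{A}_i}) \in \mathcal{C}$ for every $i$ and every $X_{\mathtt{A}_i} \in \blt(\mathcal{H}_{\mathtt{A}_i})$, where $\hat{X}_{\mathtt{A}_i} = P_{i:\mathcal{A}}^\dagger (X_{\mathtt{A}_i} \otimes \idmat_{\mathtt{B}_i}) P_{i:\mathcal{A}}$ (take block-supported elements for ``$\Rightarrow$'', sum for ``$\Leftarrow$''). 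Setting $V_i := (P_{i:\mathcal{A}} \otimes \idmat_\mathtt{E}) V$, the map $\Psi_i(X_{\mathtt{A}_i}) := V_i^\dagger (X_{\mathtt{A}_i} \otimes \idmat_{\mathtt{B}_i} \otimes \idmat_\mathtt{E}) V_i = \Phi(\hat{X}_{\mathtt{A}_i})$ is a normal CP-map from $\blt(\mathcal{H}_{\mathtt{A}_i})$ to $\blt(\mathcal{H}_\mathcal{C})$ presented in Stinespring form with environment $\mathcal{H}_{\mathtt{B}_i} \otimes \mathcal{H}_\mathtt{E}$, and the hypothesis says precisely that $\Psi_i$ takes values in $\mathcal{C}$.

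The heart of the argument is to build a \emph{second}, minimal Stinespring representation of $\Psi_i$ that already exhibits the target structure. Because $\Psi_i(X_{\mathtt{A}_i}) \in \mathcal{C}$, compressing with $P_{j:\mathcal{C}}$ produces for each $j \in J$ a normal CP-map $\Phi_i^{(j)} : \blt(\mathcal{H}_{\mathtt{A}_i}) \to \blt(\mathcal{H}_{\mathtt{C}_j})$ determined by $P_{j:\mathcal{C}} \Psi_i(X_{\mathtt{A}_i}) P_{j:\mathcal{C}}^\dagger = \Phi_i^{(j)}(X_{\mathtt{A}_i}) \otimes \idmat_{\mathtt{D}_j}$. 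I would take a minimal Stinespring representation $\Phi_i^{(j)}(X_{\mathtt{A}_i}) = A_{ij}^\dagger (X_{\mathtt{A}_i} \otimes \idmat_{\mathtt{F}_{ij}}) A_{ij}$ — which is the source of the final totality clause — note that $A_{ij} \otimes \idmat_{\mathtt{D}_j}$ is then itself a \emph{minimal} representation of $X_{\mathtt{A}_i} \mapsto \Phi_i^{(j)}(X_{\mathtt{A}_i}) \otimes \idmat_{\mathtt{D}_j}$, and assemble these into the single operator $\mathcal{A}_i := \sum_{j\in J} (\idmat_{\mathtt{A}_i} \otimes \iota_j)(A_{ij} \otimes \idmat_{\mathtt{D}_j}) P_{j:\mathcal{C}}$ into the direct-sum environment $\mathcal{H}_{\mathtt{G}_i} := \bigoplus_{j\in J}(\mathcal{H}_{\mathtt{F}_{ij}} \otimes \mathcal{H}_{\mathtt{D}_j})$, with $\iota_j$ the canonical inclusions. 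Using $\iota_k^\dagger \iota_l = \delta_{kl}\idmat$ and $P_{k:\mathcal{C}} P_{j:\mathcal{C}}^\dagger = \delta_{kj}\idmat$ one checks $\mathcal{A}_i^\dagger (X_{\mathtt{A}_i} \otimes \idmat_{\mathtt{G}_i}) \mathcal{A}_i = \Psi_i(X_{\mathtt{A}_i})$, and from minimality of each $A_{ij}$ together with orthogonality of the summands that this representation is again minimal.

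Finally I would invoke Theorem \ref{thm:UniquenessStinespringThm}: as $(\mathcal{H}_{\mathtt{G}_i}, \mathcal{A}_i)$ is minimal and $(\mathcal{H}_{\mathtt{B}_i} \otimes \mathcal{H}_\mathtt{E}, V_i)$ is another representation of $\Psi_i$, there is an isometry $\mathcal{U}_i : \mathcal{H}_{\mathtt{G}_i} \to \mathcal{H}_{\mathtt{B}_i} \otimes \mathcal{H}_\mathtt{E}$ with $V_i = (\idmat_{\mathtt{A}_i} \otimes \mathcal{U}_i)\mathcal{A}_i$. Putting $U_{ij} := \mathcal{U}_i \iota_j$, the isometry identity $\mathcal{U}_i^\dagger \mathcal{U}_i = \idmat_{\mathtt{G}_i}$ yields exactly $U_{ik}^\dagger U_{il} = \iota_k^\dagger \iota_l = \delta_{kl}\idmat$, while $V_i = \sum_{j\in J}(\idmat_{\mathtt{A}_i} \otimes U_{ij})(A_{ij} \otimes \idmat_{\mathtt{D}_j})P_{j:\mathcal{C}} = \sum_{j\in J} V_{ij} P_{j:\mathcal{C}}$. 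Reassembling $V$ over its target blocks via $\idmat = P_{0:\mathcal{A}}^\dagger P_{0:\mathcal{A}} + \sum_i P_{i:\mathcal{A}}^\dagger P_{i:\mathcal{A}}$ (SOT) and defining $V_0 := (P_{0:\mathcal{A}} \otimes \idmat_\mathtt{E})V$ then gives Eq.~\eqref{eq:normalFormCPV}; here the identity $V_i = (\idmat_{\mathtt{A}_i} \otimes \mathcal{U}_i)\mathcal{A}_i$ automatically annihilates the null block $\mathcal{H}_{0:\mathcal{C}}$, so the domain-null-part of $V$ is correctly absorbed into $V_0$, and SOT-convergence of the double series is inherited from the two resolutions of the identity. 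I expect the main obstacle to be the bookkeeping in the construction of $\mathcal{A}_i$ and the verification of its minimality across the direct sum over $j$ — specifically, that orthogonality of the summands combined with minimality of the individual $A_{ij}$ genuinely yields totality of $\{(X_{\mathtt{A}_i} \otimes \idmat_{\mathtt{G}_i})\mathcal{A}_i\ket{\psi}\}$ — since it is precisely this minimality that lets a single clean application of Theorem \ref{thm:UniquenessStinespringThm} deliver both the block decomposition of $V$ and the orthogonality relation at once.
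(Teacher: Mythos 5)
Your proposal is correct, and while it uses the same core ingredients as the paper --- block decomposition of $V$ via the resolutions of the identity, minimal Stinespring dilations $A_{ij}$ of the compressed maps $\Phi_i^{(j)}$, and the uniqueness part of Theorem \ref{thm:UniquenessStinespringThm} --- it organizes them differently in a way that is genuinely more economical. The paper applies Stinespring uniqueness once per block $(i,j)$ to obtain each $U_{ij}$, and then needs two \emph{separate} arguments: $V_{i0}=0$ is extracted from positivity of the compression to the null block of $\mathcal{C}$ (Eq.~\eqref{eq:SecondConsequenceInvariance}), and the orthogonality $U_{ik}^\dagger U_{il}=0$ for $k\neq l$ is proved by hand from the vanishing of the off-diagonal blocks (Eq.~\eqref{eq:ThirdConsequenceInvariance}) via a sesquilinearity-plus-totality argument. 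You instead assemble, for each fixed $i$, one minimal dilation $\mathcal{A}_i$ of the entire row map $\Psi_i$ on the direct-sum environment $\bigoplus_{j}\mathcal{H}_{\mathtt{F}_{ij}}\otimes\mathcal{H}_{\mathtt{D}_j}$ and invoke Theorem \ref{thm:UniquenessStinespringThm} once per $i$; the single intertwining isometry $\mathcal{U}_i$ then delivers all three conclusions at once: the semilocalizable form of $V_{ij}$, the orthogonality relation $U_{ik}^\dagger U_{il}=\iota_k^\dagger\iota_l=\delta_{kl}\idmat$, and $V_{i0}=V_iP_{0:\mathcal{C}}^\dagger=(\idmat\otimes\mathcal{U}_i)\mathcal{A}_iP_{0:\mathcal{C}}^\dagger=0$. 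The price, which you correctly identify, is the bookkeeping for $\mathcal{A}_i$: one must check it is bounded (this holds since $\norm{A_{ij}}^2=\Vert\Phi_i^{(j)}(\idmat_{\mathtt{A}_i})\Vert\leq\norm{V}^2$ uniformly in $j$, and the summands have mutually orthogonal initial and final supports) and that minimality of the individual $A_{ij}\otimes\idmat_{\mathtt{D}_j}$ really yields minimality of the direct sum (it does: restricting to vectors supported on a single $j$-block already gives totality in each orthogonal summand $\mathcal{H}_{\mathtt{A}_i}\otimes\mathcal{H}_{\mathtt{F}_{ij}}\otimes\mathcal{H}_{\mathtt{D}_j}$). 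One small point you assert without justification: that $\Phi_i^{(j)}$, defined by $P_{j:\mathcal{C}}\Psi_i(X)P_{j:\mathcal{C}}^\dagger=\Phi_i^{(j)}(X)\otimes\idmat_{\mathtt{D}_j}$, is a \emph{normal CP}-map (needed to even speak of its minimal Stinespring dilation); the paper's proof supplies the one-line fix, namely writing $\Phi_i^{(j)}(X)=(\idmat_{\mathtt{C}_j}\otimes\bra{\psi})P_{j:\mathcal{C}}\Psi_i(X)P_{j:\mathcal{C}}^\dagger(\idmat_{\mathtt{C}_j}\otimes\ket{\psi})$ for a unit vector $\ket{\psi}\in\mathcal{H}_{\mathtt{D}_j}$, which exhibits it as a composition of normal CP-maps. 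With that detail added, your argument is complete.
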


\noindent \textbf{Remark.} The theorem above tells us that if $V$ is written as a block matrix (w.r.t. a basis determined by the structure of $\mathcal{A}$ and $\mathcal{C}$), then all the blocks are necessarily of the ``semilocalizable'' form $(\idmat \otimes U)(A \otimes \idmat)$ --- and that the $U$'s need to satisfy an orthogonality relation. 

\begin{proof}
Let us start by showing that \ref{CharacterizationConditionCPTheorem} $\implies$ \ref{it:AToCCondition}. We know that $\mathcal{X}_\mathcal{A} \in \mathcal{A}$ if and only if it can be decomposed as SOT-convergent series $\mathcal{X}_\mathcal{A} = \sum_{i \in I} P_{i:\mathcal{A}}^\dagger (X_{\mathtt{A}_i} \otimes \idmat_{\mathtt{B}_i}) P_{i:\mathcal{A}}$, for $X_{\mathtt{A}_i} \in \blt(\mathcal{H}_{\mathtt{A}_i})$ with $\sum_{i\in I} \norm{X_{\mathtt{A}_i}} < \infty$. Since $\mathcal{A}$ and $\mathcal{C}$ are ultraweakly closed, $\Phi$ is ultraweakly continuous, and the operators in $\mathcal{A}_F := \{ X_\mathcal{A} \,|\, X_{\mathtt{A}_i} \neq 0 \text{ only for finitely many $i \in I$ }\}$ are ultraweakly dense in $\mathcal{A}$, it suffices to show the claim for $X_\mathcal{A} \in \mathcal{A}_F$ so that convergence issues (w.r.t.~the $I$-summation) play no role in the following calculation (where the $J$-summation SOT-converges):
\begin{align*}
    \Phi(X_\mathcal{A}) &= \sum_{i \in I}\left[(P_{i:\mathcal{A}} \otimes \idmat_\mathtt{E})V\right]^\dagger (X_{\mathtt{A}_i} \otimes \idmat_{\mathtt{B}_i} \otimes \idmat_\mathtt{E})\left[(P_{i:\mathcal{A}} \otimes \idmat_\mathtt{E}) V\right] \\
    &= \sum_{i \in I}\sum_{k,l \in J}  P_{k:\mathcal{C}}^\dagger V_{ik}^\dagger (X_{\mathtt{A}_i} \otimes \idmat_{\mathtt{B}_i} \otimes \idmat_\mathtt{E}) V_{il} P_{l:\mathcal{C}} \\
    &= \sum_{i \in I}\sum_{k,l \in J}  P_{k:\mathcal{C}}^\dagger (A_{ik}^\dagger \otimes \idmat_{\mathtt{D}_k}) (X_{\mathtt{A}_i} \otimes U_{ik}^\dagger U_{il})(A_{il} \otimes \idmat_{\mathtt{D}_l}) P_{l:\mathcal{C}} \\
    &= \sum_{i \in I}\sum_{j \in J} P_{j:\mathcal{C}}^\dagger (A_{ij}^\dagger \otimes \idmat_{\mathtt{D}_j}) (X_{\mathtt{A}_i} \otimes \idmat_{\mathtt{F}_{ij}} \otimes \idmat_{\mathtt{D}_j})(A_{ij} \otimes \idmat_{\mathtt{D}_j}) P_{j:\mathcal{C}} \\
    &= \sum_{j \in J} P_{j:\mathcal{C}}^\dagger \left[\left(\sum_{i \in I} A_{ij}^\dagger (X_{\mathtt{A}_i} \otimes \idmat_{\mathtt{F}_{ij}}) A_{ij} \right) \otimes \idmat_{\mathtt{D}_j} \right] P_{j:\mathcal{C}},
\end{align*}
where we used the expansion of $X_\mathcal{A}$ in the first line, Eq.~\eqref{eq:normalFormCPV} in the second line (in particular the orthogonality of the projections), the explicit form $V_{ij} = (\idmat_{\mathtt{A}_i} \otimes U_{ij})(A_{ij} \otimes \idmat_{\mathtt{D}_j})$ in the third line, the orthogonality relation $U_{ik}^\dagger U_{il} = \delta_{kl}\idmat$ in the fourth line and algebraic manipulations in the fifth line. But the last line is just the decomposed form of an element of $\mathcal{C}$. Thus we have shown that $\Phi(\mathcal{A}) \subseteq \mathcal{C}$. 

For the converse, suppose that $\Phi$, defined by $\Phi(X) = V^\dagger (X \otimes \idmat_\mathtt{E}) V$ satisfies $\Phi(\mathcal{A}) \subseteq \mathcal{C}$. Let $I_0 = I \cup \{0\}$ and $J_0 = J \cup \{0\}$. Then $\sum_{i_0 \in I_0} P_{i_0:\mathcal{A}}^\dagger P_{i_0:\mathcal{A}} = \idmat_\mathcal{A}$ and $\sum_{j_0 \in J_0} P_{j_0:\mathcal{C}}^\dagger P_{j_0:\mathcal{C}} = \idmat_\mathcal{C}$, where the series SOT-converge. Hence, we can expand
\begin{align} \label{eq:VExpansionNaive}
    V = (P_{0: \mathcal{A}}^\dagger \otimes \idmat_\mathtt{E}) V_0 + \sum_{i \in I} (P_{i:\mathcal{A}}^\dagger \otimes \idmat_\mathtt{E}) V_{i0} P_{0:\mathcal{C}} + \sum_{i \in I, j \in J} (P_{i:\mathcal{A}}^\dagger \otimes \idmat_\mathtt{E}) V_{ij} P_{j:\mathcal{C}},
\end{align}
where we defined $V_0 = (P_{0:\mathcal{A}} \otimes \idmat_\mathtt{E})V$ and for all $i \in I$ and $j_0 \in J_0$ the operator $V_{ij_0} = (P_{i:\mathcal{A}} \otimes \idmat_\mathtt{E}) V P_{j_0:\mathcal{C}}^\dagger$. Thus it remains to show that $V_{i0} = 0$ and that $V_{ij}$ has our specific form. 

By definition, every $X_\mathcal{C} \in \mathcal{C}$ is of the form $X_\mathcal{C} = \sum_{j \in J} P_{j:\mathcal{C}}^\dagger (X_{\mathtt{C}_j} \otimes \idmat_{\mathtt{D}_j}) P_{j:\mathcal{C}}$. Thus, in particular $\Phi(X_\mathcal{A})$ assumes that form for all $X_\mathcal{A} \in \mathcal{A}$. This has the following three implications: 
First, for every $i \in I$, $j \in J$ and every $X_{\mathtt{A}_i} \in \blt(\mathcal{H}_{\mathtt{A}_i})$, there exists $X_{\mathtt{C}_j} \in \blt(\mathcal{H}_{\mathtt{C}_j})$ such that
\begin{align} \label{eq:FirstConsequenceInvariance}
\begin{split}
    P_{j:\mathcal{C}} \Phi\left(P_{i:\mathcal{A}}^\dagger (X_{\mathtt{A}_i} \otimes \idmat_{\mathtt{B}_i}) P_{i:\mathcal{A}}\right) P_{j:\mathcal{C}}^\dagger = V_{ij}^\dagger (X_{\mathtt{A}_i} \otimes \idmat_{\mathtt{B}_i} \otimes \idmat_\mathtt{E}) V_{ij}  = X_{\mathtt{C}_j} \otimes \idmat_{\mathtt{D}_j}. 
    \end{split}
\end{align}
Second, for every $i \in I$ and $X_{\mathtt{A}_i} \in \blt(\mathcal{H}_{\mathtt{A}_i})$, we have 
\begin{align} \label{eq:SecondConsequenceInvariance}
    P_{0:\mathcal{C}} \,\Phi\left(P_{i:\mathcal{A}}^\dagger (X_{\mathtt{A}_i} \otimes \idmat_{\mathtt{B}_i}) P_{i:\mathcal{A}}\right) \,P_{0:\mathcal{C}}^\dagger = V_{i0}^\dagger (X_{\mathtt{A}_i} \otimes \idmat_{\mathtt{B}_i} \otimes \idmat_\mathtt{E}) V_{i0} = 0. 
\end{align}
Third, for every $i \in I$ and $k,l \in J$ with $k \neq l$ and all $X_{\mathtt{A}_i} \in \blt(\mathcal{H}_{\mathtt{A}_i})$, we have
\begin{align} \label{eq:ThirdConsequenceInvariance}
    P_{k:\mathcal{C}} \,\Phi\left(P_{i:\mathcal{A}}^\dagger (X_{\mathtt{A}_i} \otimes \idmat_{\mathtt{B}_i}) P_{i:\mathcal{A}}\right) \,P_{l:\mathcal{C}}^\dagger = V_{ik}^\dagger (X_{\mathtt{A}_i} \otimes \idmat_{\mathtt{B}_i} \otimes \idmat_\mathtt{E}) V_{il} = 0. 
\end{align}

It is a direct consequence of Eq.~\eqref{eq:SecondConsequenceInvariance}, by choosing $X_{\mathtt{A}_i} = \idmat_{\mathtt{A}_i}$, that $V_{i0} = 0$ for all $i \in I$. Hence the second term in Eq.~\eqref{eq:VExpansionNaive} vanishes as desired.  

We will now see that Eq.~\eqref{eq:FirstConsequenceInvariance} implies that $V_{ij} = (\idmat \otimes U_{ij})(A_{ij} \otimes \idmat)$. This is due to the %virtue of the 
equivalence between semicausal and semilocalizable CP-maps, established for finite-dimensional systems in~\cite{Eggeling.2002}. We reproduce the argument here for the infinite-dimensional case. Choose some unit vector $\ket{\psi} \in \mathcal{H}_{\mathtt{D}_j}$ and define the normal CP-maps $\Phi_{ij} \in \CP(\mathcal{H}_{i:\mathcal{A}}; \mathcal{H}_{j:\mathcal{C}})$ and $\Psi_{ij} \in \CP(\mathcal{H}_{\mathtt{A}_i}; \mathcal{H}_{\mathtt{C}_j})$ by $\Phi_{ij}(X_i) = P_{j:\mathcal{C}} \Phi\left(P_{i:\mathcal{A}}^\dagger X_i P_{i:\mathcal{A}}\right) P_{j:\mathcal{C}}^\dagger$ and $\Psi_{ij}(X_{\mathtt{A}_i}) =  (\idmat_{\mathtt{C}_j} \otimes \bra{\psi})\Phi_{ij}(X_{\mathtt{A}_i} \otimes \idmat_{\mathtt{B}_i})(\idmat_{\mathtt{C}_j} \otimes \ket{\psi})$.
Eq.~\eqref{eq:FirstConsequenceInvariance} then implies that 
\begin{align} \label{eq:SemicausalityInProof}
    \Phi_{ij}(X_{\mathtt{A}_i} \otimes \idmat_{\mathtt{B}_i}) = \Psi_{ij}(X_{\mathtt{A}_i}) \otimes \idmat_{\mathtt{D}_j},
\end{align}
for all $X_{\mathtt{A}_i} \in \blt(\mathcal{H}_{\mathtt{A}_i})$. By Stinespring's dilation theorem (see Theorem \ref{thm:UniquenessStinespringThm}), there exists a minimal dilation given by $\mathcal{H}_{\mathtt{F}_{ij}}$ and $A_{ij} \in \blt(\mathcal{H}_{\mathtt{C}_j}; \mathcal{H}_{\mathtt{A}_i} \otimes \mathcal{H}_{\mathtt{F}_{ij}})$ such that $\Psi_{ij}(X_{\mathtt{A}_i}) = A_{ij}^\dagger (X_{\mathtt{A}_i} \otimes \idmat_{\mathtt{F}_{ij}}) A_{ij}$. It follows that $A_{ij} \otimes \idmat_{\mathtt{D}_{j}}$ is a minimal dilation for $X_{\mathtt{A}_i} \mapsto \Psi_{ij}(X_{\mathtt{A}_i}) \otimes \idmat_{\mathtt{D}_j}$. But Eqs.~\eqref{eq:SemicausalityInProof} and \eqref{eq:FirstConsequenceInvariance} then imply that
\begin{align}
    V_{ij}^\dagger (X_{\mathtt{A}_i} \otimes \idmat_{\mathtt{B}_i} \otimes \idmat_\mathtt{E}) V_{ij} = (A_{ij} \otimes \idmat_{\mathtt{D}_{j}})^\dagger (X_{\mathtt{A}_i} \otimes \idmat_{\mathtt{F}_{ij}} \otimes \idmat_{\mathtt{D}_j}) (A_{ij} \otimes \idmat_{\mathtt{D}_{j}}),
\end{align}
for all $X_{\mathtt{A}_i} \in \blt(\mathcal{H}_{\mathtt{A}_i})$. In other words, $V_{ij}$ and $A_{ij} \otimes \idmat_{\mathtt{D}_{j}}$ are Stinespring operators of the same map. Since $A_{ij} \otimes \idmat_{\mathtt{D}_{j}}$ is minimal, there exists an isometry $U_{ij} \in \blt(\mathcal{H}_{\mathtt{F}_{ij}} \otimes \mathcal{H}_{\mathtt{D}_j}; \mathcal{H}_{\mathtt{B}_i} \otimes \mathcal{H}_\mathtt{E})$ such that $V_{ij} = (\idmat_{\mathtt{A}_i} \otimes U_{ij})(A_{ij} \otimes \idmat_{\mathtt{D}_j})$. This is the desired form. 

It remains to show that $U_{ik}^\dagger U_{il} = \delta_{kl} \idmat$ for all $i \in I$ and $k, l \in J$. Since the $U_{ij}$'s are isometries, this condition is fulfilled for $k = l$. For $k \neq l$, we choose arbitrary $\ket{\psi_k} \in \mathcal{H}_{k:\mathcal{C}}$ and $\ket{\psi_l} \in \mathcal{H}_{l:\mathcal{C}}$ and $X_L, X_R \in \blt(\mathcal{H}_{\mathtt{A}_i})$. Eq.~\eqref{eq:ThirdConsequenceInvariance}, with $X_{\mathtt{A}_i} = X_L^\dagger X_R$ implies
\begin{align*}
    0 &= \braket{(X_L \otimes \idmat_{\mathtt{B}_i} \otimes \idmat_\mathtt{E})V_{ik} \psi_k}{(X_R \otimes \idmat_{\mathtt{B}_i} \otimes \idmat_\mathtt{E})V_{il} \psi_l} \\&= \braket{(X_L \otimes \idmat_{\mathtt{F}_{ik}} \otimes \idmat_{\mathtt{D}_k} )(A_{ik} \otimes \idmat_{\mathtt{D}_k}) \psi_k}{\left[\idmat_{\mathtt{A}_i} \otimes U_{ik}^\dagger U_{il} \right] (X_R \otimes \idmat_{\mathtt{F}_{il}} \otimes \idmat_{\mathtt{D}_l})(A_{il} \otimes \idmat_{\mathtt{D}_l}) \psi_l}.
\end{align*}
Since $\{(X_L \otimes \idmat_{\mathtt{F}_{ik}} \otimes \idmat_{\mathtt{D}_k} )(A_{ik} \otimes \idmat_{\mathtt{D}_k}) \ket{\psi_k} \, | \, X_L \in \blt(\mathcal{H}_{\mathtt{A}_i}), \ket{\psi_k} \in \mathcal{H}_{k:\mathcal{C}}\}$ being total and $\{(X_R \otimes \idmat_{\mathtt{F}_{il}} \otimes \idmat_{\mathtt{D}_l})(A_{il} \otimes \idmat_{\mathtt{D}_l}) \ket{\psi_l} \, | \, X_R \in \blt(\mathcal{H}_{\mathtt{A}_i}), \ket{\psi_l} \in \mathcal{H}_{l:\mathcal{C}}\}$ being total is the definition of minimality of $A_{ik} \otimes \idmat_{\mathtt{D}_k}$ and $A_{il} \otimes \idmat_{\mathtt{D}_l}$, respectively, we can conclude from the equation above (using sesquilinearity of the inner product) that $\idmat_{\mathtt{A}_i} \otimes U_{ik}^\dagger U_{il} = 0$ and hence that $U_{ik}^\dagger U_{il} = 0$, as desired. Finally, note that the claim about the totality of $\{(X \otimes \idmat_{\mathtt{F}_{ij}})A_{ij}\ket{\psi} \, | \, X \in \blt(\mathcal{H}_{\mathtt{A}_i}), \ket{\psi} \in \mathcal{H}_{\mathtt{C}_j} \}$ in $\mathcal{H}_{\mathtt{A}_i} \otimes \mathcal{H}_{\mathtt{F}_{ij}}$ follows by construction.
\end{proof}

\subsection{GKLS-generators with invariant atomic algebra} \label{AtomicInvarinatGKSL}

The notation in the following theorem and its proof follows Section \ref{sec:NotationW*}.

\begin{theorem}{Theorem} \label{THM:AtomicLindbladNormalForm}
   Let $L : \blt(\mathcal{H}) \rightarrow \blt(\mathcal{H})$ be given by $L(X) = V^\dagger (X \otimes \idmat_\mathtt{E}) V - K^\dagger X - X K$ with $V\in\blt (\mathcal{H};\mathcal{H}\otimes\mathcal{H}_{\mathtt{E}})$ and $K\in\blt (\mathcal{H})$ and let $\mathcal{A}$ be an atomic *-subalgebra of $\blt(\mathcal{H})$, with decomposition given by Definition \ref{Defn:NormalFormAtomicAlgebra}.  Then the following are equivalent
   \begin{enumerate}
       \item $L(\mathcal{A}) \subseteq \mathcal{A}$.
       \item \label{item:NormalFormAtomicLindblad} There exist operators $V_0 \in \blt(\mathcal{H}; \mathcal{H}_{0} \otimes \mathcal{H}_\mathtt{E})$ and $K_0 \in \blt(\mathcal{H}; \mathcal{H}_0)$;  for all $i, j \in I$
        a Hilbert space $\mathcal{H}_{\mathtt{F}_{ij}}$, operators $A_{ij} \in \blt(\mathcal{H}_{\mathtt{A}_j}; \mathcal{H}_{\mathtt{A}_i} \otimes \mathcal{H}_{\mathtt{F}_{ij}})$, and isometries $U_{ij} \in \blt(\mathcal{H}_{\mathtt{F}_{ij}} \otimes \mathcal{H}_{\mathtt{B}_j}; \mathcal{H}_{\mathtt{B}_i} \otimes \mathcal{H}_\mathtt{E})$; and for every $i \in I$ operators $B_i \in \blt(\mathcal{H}_{\mathtt{B}_i}; \mathcal{H}_{\mathtt{B}_i} \otimes \mathcal{H}_{\mathtt{E}})$, $K_{\mathtt{A}_i} \in \blt(\mathcal{H}_{\mathtt{A}_i})$, and self-adjoint operators $H_{\mathtt{B}_i} \in \blt(\mathcal{H}_{\mathtt{B}_i})$, such that 
        \begin{itemize}
            \item $V$ and $K$ can be decomposed as
            \begin{align*}
               V &= (P_0^\dagger \otimes \idmat_\mathtt{E})V_0 + \sum_{i, j \in I} (P_i^\dagger \otimes \idmat_\mathtt{E}) V_{ij}^{sc} P_j + \sum_{i \in I} (P_i^\dagger \otimes \idmat_\mathtt{E}) (\idmat_{\mathtt{A}_i} \otimes B_i) P_i,\\
               K &= 
               \sum_{i \in I} P_i^\dagger (\idmat_{\mathtt{A}_i} \otimes B_i^\dagger) V_{ii}^{sc} P_i + \frac{1}{2}\sum_{i \in I} P_i^\dagger (\idmat_{\mathtt{A}_i} \otimes B_i^\dagger B_i) P_i \\
               &\hphantom{=}~+ K_\mathcal{A} + iH_{\mathcal{A}^\prime} + P_0^\dagger K_0,
           \end{align*}
           with $V_{ij}^{sc} = (\idmat_{\mathtt{A}_i} \otimes U_{ij})(A_{ij} \otimes \idmat_{\mathtt{B}_j})$, $K_\mathcal{A} = \sum_{i \in I} P_i^\dagger (K_{\mathtt{A}_i} \otimes \idmat_{\mathtt{B}_i}) P_i$, and $H_{\mathcal{A}^\prime} = \sum_{i \in I} P_i^\dagger (\idmat_{\mathtt{A}_i} \otimes H_{\mathtt{B}_i}) P_i$, s.t. all series SOT-converge.  
            \item The relation $U_{ik}^\dagger U_{il} = \delta_{kl} \idmat$ holds for all $i, k, l \in I$.
        \end{itemize}
   \end{enumerate}
\end{theorem}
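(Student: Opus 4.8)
The plan is to obtain Theorem~\ref{THM:AtomicLindbladNormalForm} as a fusion of the two preceding results: Theorem~\ref{theorem:gkls-invariant-afd-algebra} peels the generator into a CP-piece and a ``drift'' piece, and Theorem~\ref{theorem:cp-map-algebra-to-algebra-stinespring} then resolves the CP-piece into semilocalizable blocks. The entry point is the observation that an atomic algebra is AFD: each summand $\blt(\mathcal{H}_{\mathtt{A}_i})\otimes\idmat_{\mathtt{B}_i}$ is $*$-isomorphic to $\blt(\mathcal{H}_{\mathtt{A}_i})$, which is hyperfinite for separable $\mathcal{H}_{\mathtt{A}_i}$, and the class of AFD algebras is closed under countable direct sums, so $\mathcal{A} = 0_0 \oplus \bigoplus_i (\blt(\mathcal{H}_{\mathtt{A}_i})\otimes\idmat_{\mathtt{B}_i})$ is AFD. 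Hence Theorem~\ref{theorem:gkls-invariant-afd-algebra} applies to $\mathcal{A}$ verbatim.

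For the direction $L(\mathcal{A})\subseteq\mathcal{A}\Rightarrow$ normal form, I would first invoke condition~\ref{cond:InvariantSubalgebraStinespring} of Theorem~\ref{theorem:gkls-invariant-afd-algebra} to write $V=(P_0^\dagger\otimes\idmat_\mathtt{E})V_0+A+B$ and $K=B^\dagger A+\frac{1}{2} B^\dagger B+K_\mathcal{A}+iH_{\mathcal{A}^\prime}+P_0^\dagger K_0$, where $A^\dagger(X_\mathcal{A}\otimes\idmat_\mathtt{E})A\in\mathcal{A}$ and $(X_\mathcal{A}\otimes\idmat_\mathtt{E})B=BX_\mathcal{A}$ for all $X_\mathcal{A}\in\mathcal{A}$. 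Next I would pin down the structure of $B$: feeding the elements $X_\mathcal{A}$ supported on a single block into the intertwining relation forces the off-diagonal blocks $(P_k\otimes\idmat_\mathtt{E})BP_i^\dagger$ ($k\neq i$) to vanish, while on the diagonal the relation $(X_{\mathtt{A}_i}\otimes\idmat_{\mathtt{B}_i}\otimes\idmat_\mathtt{E})B_{ii}=B_{ii}(X_{\mathtt{A}_i}\otimes\idmat_{\mathtt{B}_i})$ exhibits $B_{ii}$ as an intertwiner between two amplifications of the identity representation of $\blt(\mathcal{H}_{\mathtt{A}_i})$, hence of the form $\idmat_{\mathtt{A}_i}\otimes B_i$ with $B_i\in\blt(\mathcal{H}_{\mathtt{B}_i};\mathcal{H}_{\mathtt{B}_i}\otimes\mathcal{H}_\mathtt{E})$. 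This gives the $B$-term of $V$ and, via $B^\dagger B=\sum_i P_i^\dagger(\idmat_{\mathtt{A}_i}\otimes B_i^\dagger B_i)P_i$, the second sum of $K$.

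The CP-piece is then handled by applying Theorem~\ref{theorem:cp-map-algebra-to-algebra-stinespring} with $\mathcal{C}=\mathcal{A}$ to the normal CP-map $\Phi_A(X)=A^\dagger(X\otimes\idmat_\mathtt{E})A$, which by the first part satisfies $\Phi_A(\mathcal{A})\subseteq\mathcal{A}$. This produces $A=\sum_{i,j\in I}(P_i^\dagger\otimes\idmat_\mathtt{E})V_{ij}^{sc}P_j$ with $V_{ij}^{sc}=(\idmat_{\mathtt{A}_i}\otimes U_{ij})(A_{ij}\otimes\idmat_{\mathtt{B}_j})$ and the orthogonality relations $U_{ik}^\dagger U_{il}=\delta_{kl}\idmat$. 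The $\mathcal{H}_0$-codomain component of $A$ returned by that theorem I would absorb into the overall $V_0$; this is harmless because $B^\dagger(P_0^\dagger\otimes\idmat_\mathtt{E})=0$, so it does not alter $K$. Substituting the block decomposition into $V$ produces the middle sum of the claimed form for $V$, and substituting it into $B^\dagger A$, using the block-diagonal shape of $B$, yields the first contribution to $K$. For the converse I would read the normal form as an instance of condition~\ref{cond:InvariantSubalgebraStinespring}: the $A$-part satisfies the required CP-invariance by the implication \ref{CharacterizationConditionCPTheorem}$\Rightarrow$\ref{it:AToCCondition} of Theorem~\ref{theorem:cp-map-algebra-to-algebra-stinespring} (this is precisely where the orthogonality $U_{ik}^\dagger U_{il}=\delta_{kl}\idmat$ is consumed), the $B$-part manifestly intertwines $\mathcal{A}$ with $\mathcal{A}\otimes\idmat_\mathtt{E}$, and the displayed $K$ is of the requisite shape; invariance then follows from \ref{cond:InvariantSubalgebraStinespring}$\Rightarrow$\ref{cond:InvariantSubalgebra}.

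I expect the main obstacle to be the purely algebraic bookkeeping, of two kinds. First, Theorem~\ref{theorem:gkls-invariant-afd-algebra} only guarantees the \emph{existence} of a decomposition, so I must either re-derive (from the explicit Haar-averaged construction $B=\mathbb{E}(V)$, which satisfies $BP_0^\dagger=0$ and $(P_0\otimes\idmat_\mathtt{E})B=0$) or arrange by absorption that $B$ annihilates the null space $\mathcal{H}_0$ and that $A$ has no $\mathcal{H}_0$-codomain component, so that every null-space contribution is routed into $V_0$ and $P_0^\dagger K_0$ without perturbing the $B^\dagger A$ and $B^\dagger B$ terms. Second, the collection of $B^\dagger A$ into the first sum of $K$ must be carried out carefully, keeping track of how the block-diagonal factors $\idmat_{\mathtt{A}_i}\otimes B_i^\dagger$ pair with the blocks $V_{ij}^{sc}$ of $A$. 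Everything else is a direct substitution into the two quoted theorems, with the Schur/amplification identification $B_{ii}=\idmat_{\mathtt{A}_i}\otimes B_i$ being the only genuinely structural step.
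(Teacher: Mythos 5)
Your strategy coincides with the paper's own proof: reduce via Theorem \ref{theorem:gkls-invariant-afd-algebra} applied to the atomic algebra (your explicit atomic-implies-AFD argument is a detail the paper leaves implicit), resolve the CP-part via Theorem \ref{theorem:cp-map-algebra-to-algebra-stinespring}, identify the block structure of $B$ (this is exactly the content of the paper's Lemma \ref{lem:HalfCommutingBsLemma}, which, like your ``absorption'' route, keeps a possibly nonzero block $B_0$ on $\mathcal{H}_0$ and routes it into $V_0$ and $K_0$ rather than showing it vanishes), and substitute back. Your converse via condition \ref{cond:InvariantSubalgebraStinespring} $\Rightarrow$ \ref{cond:InvariantSubalgebra} of Theorem \ref{theorem:gkls-invariant-afd-algebra} is also the intended reading.

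However, the step you defer as ``algebraic bookkeeping''---collecting $B^\dagger A$ into the first sum of $K$---is a genuine gap, and it cannot be closed in the stated form. With $B=\sum_{i}(P_i^\dagger\otimes\idmat_\mathtt{E})(\idmat_{\mathtt{A}_i}\otimes B_i)P_i$ (up to the $\mathcal{H}_0$-part) and $A=\sum_{i,j}(P_i^\dagger\otimes\idmat_\mathtt{E})V_{ij}^{sc}P_j$, the substitution produces the \emph{double} sum
\begin{align*}
B^\dagger A \;=\; \sum_{i,j\in I}P_i^\dagger\bigl(\idmat_{\mathtt{A}_i}\otimes B_i^\dagger\bigr)V_{ij}^{sc}P_j\, ,
\end{align*}
whereas the theorem's $K$ contains only the diagonal terms $V_{ii}^{sc}$. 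The off-diagonal terms ($i\neq j$) are strictly block-off-diagonal, so they cannot be absorbed into $K_\mathcal{A}$, $iH_{\mathcal{A}^\prime}$, $P_0^\dagger K_0$, or the remaining block-diagonal sums, and they do not vanish in general. Concretely: take $\mathcal{H}=\mathbb{C}^2$, $\mathcal{A}$ the diagonal matrices (so $I=\{1,2\}$, $\mathcal{H}_{\mathtt{A}_i}=\mathcal{H}_{\mathtt{B}_i}=\mathbb{C}$, $\mathcal{H}_0=\{0\}$), $\mathcal{H}_\mathtt{E}=\mathbb{C}^2$, $V=(\ket{1}\bra{1}+\ket{1}\bra{2})\otimes\ket{e_2}$, $K=\ket{1}\bra{2}$. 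For $X=x_1\ket{1}\bra{1}+x_2\ket{2}\bra{2}$ one computes $V^\dagger(X\otimes\idmat_\mathtt{E})V=x_1(\ket{1}+\ket{2})(\bra{1}+\bra{2})$ and $K^\dagger X+XK=x_1(\ket{1}\bra{2}+\ket{2}\bra{1})$, hence $L(X)=x_1\idmat\in\mathcal{A}$; yet every term in the stated decomposition of $K$ is block diagonal (and $P_0^\dagger K_0=0$ here), so no admissible choice of data reproduces $\bra{1}K\ket{2}=1$. Thus statement 1 does not imply statement \ref{item:NormalFormAtomicLindblad} as written. The same defect breaks your converse: with only the diagonal terms present, $K$ is \emph{not} ``of the requisite shape'' of Theorem \ref{theorem:gkls-invariant-afd-algebra}, since it differs from $B^\dagger A+\tfrac{1}{2}B^\dagger B+K_\mathcal{A}+iH_{\mathcal{A}^\prime}+P_0^\dagger K_0$ by the off-diagonal part of $B^\dagger A$; indeed, decomposing the same $V$ with $A_{11}=0$, $B_1=\ket{e_2}$, $U_{11}=\ket{e_1}$, $A_{12}=1$, $U_{12}=\ket{e_2}$ and taking $K$ from the stated diagonal formula yields $L(X)$ with off-diagonal entries $x_1$, so statement \ref{item:NormalFormAtomicLindblad} does not imply statement 1 either. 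To be fair, the paper's own proof performs the identical silent collapse from the double sum to the diagonal sum, so you have inherited a flaw of the statement itself rather than introduced one; replacing $\sum_{i}P_i^\dagger(\idmat_{\mathtt{A}_i}\otimes B_i^\dagger)V_{ii}^{sc}P_i$ by $\sum_{i,j}P_i^\dagger(\idmat_{\mathtt{A}_i}\otimes B_i^\dagger)V_{ij}^{sc}P_j$ in the theorem makes both directions of your argument (and of the paper's) correct.
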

\begin{proof}
The basic strategy is to use Theorem \ref{theorem:gkls-invariant-afd-algebra} to reduce the problem to CP-maps with invariant algebra $\mathcal{A}$, followed by an application of Theorem \ref{theorem:cp-map-algebra-to-algebra-stinespring}. In detail: Part \ref{cond:InvariantSubalgebraStinespring} of Theorem \ref{theorem:gkls-invariant-afd-algebra} provides us with operators $\tilde{A}$, $\tilde{B}$, $\tilde{V}_0$, $\tilde{K}_0$, $K_\mathcal{A}$ and $\tilde{H}_{\mathcal{A}^\prime}$ such that
\begin{subequations}
\begin{align}
    V &= (P_0^\dagger \otimes \idmat_\mathtt{E})\tilde{V}_0 + \tilde{A} + \tilde{B}, \label{eq:VInProof}\\
    K &= \tilde{B}^\dagger \tilde{A} + \frac{1}{2} \tilde{B}^\dagger \tilde{B} + K_\mathcal{A} + i\tilde{H}_{\mathcal{A}^\prime} + P_0^\dagger \tilde{K}_0. \label{eq:KInProofObtainedFromGeneralInvarinatTheorem}
\end{align}
\end{subequations}
We observe the following:
\begin{itemize}
    \item Since $K_\mathcal{A} \in \mathcal{A}$, it can be decomposed as $K_\mathcal{A} = \sum_{i \in I} P_i^\dagger (K_{\mathtt{A}_i} \otimes \idmat_{\mathtt{B}_i}) P_i$ for operators $K_{\mathtt{A}_i} \in \blt(\mathcal{H}_{\mathtt{A}_i})$.
    \item Since $\Phi(X_\mathcal{A}) := \tilde{A}^\dagger (X_\mathcal{A} \otimes \idmat_\mathtt{E}) \tilde{A} \in \mathcal{A}$ for all $X_\mathcal{A} \in \mathcal{A}$, we can apply Theorem \ref{theorem:cp-map-algebra-to-algebra-stinespring}, which implies that there exist $A_0 \in \blt(\mathcal{H}; \mathcal{H}_0 \otimes \mathcal{H}_\mathtt{E})$; and for all $i, j \in I$, a Hilbert space $\mathcal{H}_{\mathtt{F}_{ij}}$, operators $A_{ij} \in \blt(\mathcal{H}_{\mathtt{A}_j}; \mathcal{H}_{\mathtt{A}_i} \otimes \mathcal{H}_{\mathtt{F}_{ij}})$, and isometries $U_{ij} \in \blt(\mathcal{H}_{\mathtt{F}_{ij}} \otimes \mathcal{H}_{\mathtt{B}_j}; \mathcal{H}_{\mathtt{B}_i} \otimes \mathcal{H}_\mathtt{E})$ such that $\tilde{A} = (P_0^\dagger \otimes \idmat_\mathtt{E})A_0 + \sum_{i,j \in I} (P_i^\dagger \otimes \idmat_\mathtt{E}) V_{ij}^{sc} P_j$, where $V_{ij}^{sc} = (\idmat_{\mathtt{A}_i} \otimes U_{ij})(A_{ij} \otimes \idmat_{\mathtt{B}_j})$ and $U_{ik}^\dagger U_{il} = \delta_{kl}\idmat$ for all $i,k,l \in I$.
    \item Since $\tilde{B}$ satisfies $(X_\mathcal{A} \otimes \idmat_{\mathtt{E}})\tilde{B} = \tilde{B} X_\mathcal{A}$ for all $X_\mathcal{A} \in \mathcal{A}$, a calculation executed in Lemma \ref{lem:HalfCommutingBsLemma} shows that there exist $B_0 \in \blt(\mathcal{H}_0; \mathcal{H}_0 \otimes \mathcal{H}_\mathtt{E})$ and operators $B_i \in \blt(\mathcal{H}_{\mathtt{B}_i}; \mathcal{H}_{\mathtt{B}_i} \otimes \mathcal{H}_\mathtt{E})$ such that $\tilde{B} = (P_0^\dagger \otimes \idmat_\mathtt{E}) B_0 P_0 + \sum_{i \in I} (P_i^\dagger \otimes \idmat_\mathtt{E}) (\idmat_{\mathtt{A}_i} \otimes B_i) P_i$.
    \item Since $\tilde{H}_{\mathcal{A}^\prime} \in \mathcal{A}^\prime$, the discussion around Eq.~\eqref{eq:FormOfAtomicCommutant} yields that it can be decomposed as $\tilde{H}_{\mathcal{A}^\prime} = P_0^\dagger H_0 P_0 + \sum_{i \in I} P_i^\dagger (\idmat_{\mathtt{A}_i} \otimes H_{\mathtt{B}_i}) P_i$, for self-adjoint $H_0 \in \blt(\mathcal{H}_0)$ and $H_{\mathtt{B}_i} \in \blt(\mathcal{H}_{\mathtt{B}_i})$. 
\end{itemize}
Each of the points above provides an explicit representation of the operators in Eqs.~\eqref{eq:VInProof} and \eqref{eq:KInProofObtainedFromGeneralInvarinatTheorem}. Plugging them in yields
\begin{align*}
    V &= (P_0^\dagger \otimes \idmat_\mathtt{E})\tilde{V}_0 + (P_0^\dagger \otimes \idmat_\mathtt{E})A_0 + (P_0^\dagger \otimes \idmat_\mathtt{E})B_0P_0 \\
    &\hphantom{=}~+ \sum_{i, j \in I} (P_i^\dagger \otimes \idmat_\mathtt{E}) V_{ij}^{sc} P_j + \sum_{i \in I} (P_i^\dagger \otimes \idmat_\mathtt{E}) (\idmat_{A_i} \otimes B_i) P_i,
\end{align*}
which has the desired form after defining $V_0 = \tilde{V}_0 + A_0 + B_0P_0$. And
\begin{align*}
    K &= P_0^\dagger \tilde{K}_0 +  P_0^\dagger B_0^\dagger A_0P_0 + \frac{1}{2} P_0^\dagger B_0^\dagger B_0P_0 + iP_0^\dagger H_0P_0\\
    &\hphantom{=}~+ \sum_{i \in I} P_i^\dagger (\idmat_{\mathtt{A}_i} \otimes B_i^\dagger) V_{ii}^{sc} P_i + \frac{1}{2}\sum_{i \in I} P_i^\dagger (\idmat_{\mathtt{A}_i} \otimes B_i^\dagger B_i) P_i + K_\mathcal{A} + iH_{\mathcal{A}^\prime},
\end{align*}
which has the desired form after defining $K_0 = \tilde{K}_0 + B_0^\dagger A_0P_0 + \frac{1}{2} B_0^\dagger B_0P_0 + iH_0P_0$.
\end{proof}

The representation in Part \ref{item:NormalFormAtomicLindblad} of Theorem \ref{THM:AtomicLindbladNormalForm} is not unique. The following theorem quantifies the freedom in that representation.  

\begin{theorem}{Theorem} \label{thm:UniquenessStatement}
The operators and spaces in Part \ref{item:NormalFormAtomicLindblad} of Theorem \ref{THM:AtomicLindbladNormalForm} can be chosen to satisfy the following minimality conditions: \labeltext{a)}{cond:minimalityLindblad}\, For all $i \in I$, the set $\left\{(X_{\mathtt{A}_i} \otimes \idmat_{\mathtt{F}_{ii}}) A_{ii} - A_{ii}X_{\mathtt{A}_i} \ket{\psi}  \,|\, X_{\mathtt{A_i}} \in \blt(\mathcal{H}_{\mathtt{A}_i}), \ket{\psi} \in \mathcal{H}_{\mathtt{A}_i} \right\}$ is total in $\mathcal{H}_{\mathtt{A}_i} \otimes \mathcal{H}_{\mathtt{F}_{ii}}$,
and \labeltext{b)}{cond:minimalityCP} the set $\left\{(X_{\mathtt{A}_i} \otimes \idmat_{\mathtt{F}_{ij}}) A_{ij} \ket{\psi} \,|\, X_{\mathtt{A_i}} \in \blt(\mathcal{H}_{\mathtt{A}_i}), \ket{\psi} \in \mathcal{H}_{\mathtt{A}_j} \right\}$ is total in $\mathcal{H}_{\mathtt{A}_i} \otimes \mathcal{H}_{\mathtt{F}_{ij}}$ for all $i, j \in I$ with $i \neq j$. \\
Let $L : \blt(\mathcal{H}) \rightarrow \blt(\mathcal{H})$ and $\tilde{L} : \blt(\mathcal{H}) \rightarrow \blt(\mathcal{H})$ be given by $L(X) = V^\dagger (X \otimes \idmat_\mathtt{E}) V - K^\dagger X - X K$ and $\tilde{L}(X) = \tilde{V}^\dagger (X \otimes \idmat_{\tilde{\mathtt{E}}}) \tilde{V} - \tilde{K}^\dagger X - X \tilde{K}$, with $V\in\blt (\mathcal{H};\mathcal{H}\otimes\mathcal{H}_{\mathtt{E}})$, $\tilde{V}\in\blt (\mathcal{H};\mathcal{H}\otimes\mathcal{H}_{\tilde{\mathtt{E}}})$, and $K,\tilde{K}\in\blt (\mathcal{H})$, and let $\mathcal{A}$ be an atomic *-subalgebra of $\blt(\mathcal{H})$, with decomposition given by Definition \ref{Defn:NormalFormAtomicAlgebra}. Suppose that $L(\mathcal{A}) \subseteq \mathcal{A}$ and $\tilde{L}(\mathcal{A}) \subseteq \mathcal{A}$ and let the corresponding representations (Theorem \ref{THM:AtomicLindbladNormalForm}) $(V_0, K_0, \{\mathcal{H}_{\mathtt{F}_{ij}} \},\{U_{ij}\}, \{A_{ij}\}, \{B_i\},\\ \{K_{\mathtt{A}_i}\}, \{H_{\mathtt{B}_i}\} )$ and $(\tilde{V}_0, \tilde{K}_0, \{\mathcal{H}_{\tilde{\mathtt{F}}_{ij}} \}, \{\tilde{U}_{ij}\}, \{\tilde{A}_{ij}\}, \{\tilde{B}_i\}, \{\tilde{K}_{\mathtt{A}_i}\}, \{\tilde{H}_{\mathtt{B}_i}\} )$ both satisfy conditions \ref{cond:minimalityLindblad} and \ref{cond:minimalityCP} above. Then, the following hold:
\begin{enumerate}
    \item \label{SubstitutionEquivalenceAlgebra}If $L(X_\mathcal{A}) = \tilde{L}(X_\mathcal{A})$ for all $X_\mathcal{A} \in \mathcal{A}$, then, for every $i \in I$, there exist a unitary $W_{ii} \in \blt(\mathcal{H}_{\mathtt{F}_{ii}}; \mathcal{H}_{\tilde{\mathtt{F}}_{ii}})$, vectors $\ket{\tilde{\psi}_i} \in \mathcal{H}_{\tilde{\mathtt{F}}_{ii}}$ with $\sup_{i \in I} \Vert \tilde{\psi}_i \Vert < \infty$, and numbers $\mu_i \in \mathbb{R}$, with $\sup_{i \in I} |\mu_i| < \infty$, such that $\tilde{A}_{ii} = (\idmat_{\mathtt{A}_i} \otimes W_{ii})A_{ii} + \idmat_{\mathtt{A}_i} \otimes\ket{\tilde{\psi}_i}$ and $\tilde{K}_{\mathtt{A}_i} = K_{\mathtt{A}_i} + (\idmat_{\mathtt{A}_i} \otimes \bra{\tilde{\psi_i}}W_{ii})A_{ii} + \frac{1}{2} \Vert \tilde{\psi}_i \Vert^2 + i\mu_i$. Moreover, for all $i, j \in I$ with $i \neq j$ there exists a unitary $W_{ij} \in \blt(\mathcal{H}_{\mathtt{F}_{ij}}; \mathcal{H}_{\tilde{\mathtt{F}}_{ij}})$ such that $\tilde{A}_{ij} = (\idmat_{\mathtt{A}_i} \otimes W_{ij})A_{ij}$.
    \item \label{SubstitutionEquivalenceAll}If $\tilde{V} = V$ and $\tilde{K} = K$, then $\tilde{V}_0 = V_0$, $\tilde{K}_0 = K_0$, $\tilde{B}_i = B_i - U_{ii} (\ket{W_{ii}^\dagger\tilde{\psi}_i} \otimes \idmat_{\mathtt{B}_i})$, and $\tilde{H}_{\mathtt{B}_i} = H_{\mathtt{B}_i} + \frac{i}{2} (G - G^\dagger) - \mu_i \idmat_{\mathtt{B}_i}$, where $G = B_i^\dagger U_{ii} (\ket{W_{ii}^\dagger \tilde{\psi}_i} \otimes \idmat_{\mathtt{B}_i})$. Moreover, for all $i, j \in I$,  we have  $\tilde{U}_{ij} = U_{ij}(W_{ij}^\dagger \otimes \idmat_{\mathtt{B}_i})$.
\end{enumerate}
\end{theorem}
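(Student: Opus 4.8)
The plan is to reduce everything to the two uniqueness theorems already available --- Theorem \ref{thm:UniquenessStinespringThm} for Stinespring dilations and Theorem \ref{thm:GeneralUniquenessTheorem} for GKLS-generators --- applied \emph{block by block} with respect to the compressions $P_a(\,\cdot\,)P_b^\dagger$, $a,b \in I \cup \{0\}$, coming from the atomic decomposition of $\mathcal{A}$. First I would dispose of the achievability of the minimality conditions. For the off-diagonal blocks, condition \ref{cond:minimalityCP} is exactly the Stinespring-minimality of $A_{ij}$, which the construction in Theorem \ref{theorem:cp-map-algebra-to-algebra-stinespring} already supplies. For the diagonal blocks, condition \ref{cond:minimalityLindblad} is the GKLS-minimality \ref{eq:GKLSMinimality} of $A_{ii}$ (with $A_{ii}$ in the role of $V$), which a generic representation need not satisfy; here I would apply the reduction of Theorem \ref{thm:GeneralUniquenessTheorem} to each diagonal generator $X \mapsto A_{ii}^\dagger(X\otimes\idmat_\mathtt{E})A_{ii} - K_{\mathtt{A}_i}^\dagger X - X K_{\mathtt{A}_i}$ on $\blt(\mathcal{H}_{\mathtt{A}_i})$, replacing $A_{ii}$ by a minimal dilation on a smaller $\mathcal{H}_{\mathtt{F}_{ii}}$, absorbing the extracted vector into $B_i$ through $U_{ii}$, the scalar into $K_{\mathtt{A}_i}$ and $H_{\mathtt{B}_i}$, and $U_{ii}$ into $U_{ii}(W\otimes\idmat)$. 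One checks that this leaves the off-diagonal data untouched and preserves both the normal form and the relation $U_{ik}^\dagger U_{il}=\delta_{kl}\idmat$ (since $W$ is an isometry), and that the uniform boundedness needed for SOT-convergence survives.

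For Part \ref{SubstitutionEquivalenceAlgebra} the key observation is that $L|_\mathcal{A}$ is block-diagonal: combining the Remark after Theorem \ref{theorem:gkls-invariant-afd-algebra} with the computation in Theorem \ref{theorem:cp-map-algebra-to-algebra-stinespring}, the $j$-th block of $L(X_\mathcal{A})$ equals $\sum_{i\in I} A_{ij}^\dagger(X_{\mathtt{A}_i}\otimes\idmat_{\mathtt{F}_{ij}})A_{ij} - K_{\mathtt{A}_j}^\dagger X_{\mathtt{A}_j} - X_{\mathtt{A}_j}K_{\mathtt{A}_j}$, tensored with $\idmat_{\mathtt{B}_j}$. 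Turning on only $X_{\mathtt{A}_j}$ isolates the diagonal part, an honest GKLS-generator on $\blt(\mathcal{H}_{\mathtt{A}_j})$; since $L=\tilde L$ on $\mathcal{A}$, the pairs $(A_{jj},K_{\mathtt{A}_j})$ and $(\tilde A_{jj},\tilde K_{\mathtt{A}_j})$ represent the same generator and are both minimal by \ref{cond:minimalityLindblad}, so Theorem \ref{thm:GeneralUniquenessTheorem} yields a \emph{unitary} $W_{jj}$, a vector $\ket{\tilde\psi_j}$ and a real $\mu_j$ with the asserted relations. Turning on instead a single off-diagonal $X_{\mathtt{A}_i}$ while keeping $X_{\mathtt{A}_j}=0$ (so the diagonal contribution vanishes) isolates $A_{ij}^\dagger(X_{\mathtt{A}_i}\otimes\idmat)A_{ij} = \tilde A_{ij}^\dagger(X_{\mathtt{A}_i}\otimes\idmat)\tilde A_{ij}$, and \ref{cond:minimalityCP} together with Theorem \ref{thm:UniquenessStinespringThm} gives the unitary $W_{ij}$ with $\tilde A_{ij}=(\idmat\otimes W_{ij})A_{ij}$. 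The uniform bounds $\sup_i\Vert\tilde\psi_i\Vert<\infty$ and $\sup_i|\mu_i|<\infty$ follow from the uniform boundedness of $A_{ii},\tilde A_{ii},K_{\mathtt{A}_i},\tilde K_{\mathtt{A}_i}$ (each a compression of $V,\tilde V,K,\tilde K$, using that $U_{ii}$ is an isometry), reading $\Vert\tilde\psi_i\Vert=\Vert\tilde A_{ii}-(\idmat\otimes W_{ii})A_{ii}\Vert$ and solving the $\tilde K_{\mathtt{A}_i}$-relation for $\mu_i$.

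For Part \ref{SubstitutionEquivalenceAll} I would exploit $\tilde V=V$ and $\tilde K=K$ blockwise, starting from the relations of Part \ref{SubstitutionEquivalenceAlgebra}. The output-$\mathcal{H}_0$ rows give $V_0=(P_0\otimes\idmat)V$ and $K_0=P_0K$, hence $\tilde V_0=V_0$ and $\tilde K_0=K_0$. Comparing the off-diagonal $V$-blocks $(P_i\otimes\idmat)VP_j^\dagger=V_{ij}^{sc}$ and substituting $\tilde A_{ij}=(\idmat\otimes W_{ij})A_{ij}$ reduces $\tilde U_{ij}=U_{ij}(W_{ij}^\dagger\otimes\idmat)$ to showing that $M:=U_{ij}-\tilde U_{ij}(W_{ij}\otimes\idmat)$ satisfies $(\idmat_{\mathtt{A}_i}\otimes M)(A_{ij}\otimes\idmat)=0$; contracting the $\mathcal{H}_{\mathtt{A}_i}$-leg against arbitrary $\bra a$ and using that $\{(\bra a\otimes\idmat)A_{ij}\ket\psi\}$ is total in $\mathcal{H}_{\mathtt{F}_{ij}}$ (a consequence of \ref{cond:minimalityCP}) forces $M=0$. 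The diagonal $V$-blocks $(P_i\otimes\idmat)VP_i^\dagger=V_{ii}^{sc}+\idmat\otimes B_i$ are the crux: substitution yields an identity of the shape $(\idmat_{\mathtt{A}_i}\otimes M_i)(A_{ii}\otimes\idmat)=\idmat_{\mathtt{A}_i}\otimes N_i$, with left side genuinely using the $\mathcal{H}_{\mathtt{A}_i}$-structure and right side trivial on $\mathcal{H}_{\mathtt{A}_i}$. Since $A_{ii}$ is only GKLS-minimal, I cannot cancel it directly; instead I would apply $(\idmat\otimes M_i)$ to the vectors $[(X\otimes\idmat)A_{ii}-A_{ii}X]\ket\psi\otimes\ket\chi$ and verify, using the identity twice, that the $N_i$-contributions cancel in this commutator-like combination, so $(\idmat\otimes M_i)$ annihilates the total set of \ref{cond:minimalityLindblad}. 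This gives $M_i=0$, hence $\tilde U_{ii}=U_{ii}(W_{ii}^\dagger\otimes\idmat)$ and (from $N_i=0$) $\tilde B_i=B_i-U_{ii}(\ket{W_{ii}^\dagger\tilde\psi_i}\otimes\idmat)$. Finally, comparing the diagonal $K$-blocks and inserting these relations together with $U_{ii}^\dagger U_{ii}=\idmat$ is a routine but bookkeeping-heavy computation: with $C_i=U_{ii}(\ket{W_{ii}^\dagger\tilde\psi_i}\otimes\idmat)$ the terms proportional to $(\idmat\otimes\bra{\tilde\psi_i}W_{ii})A_{ii}$ and to $\Vert\tilde\psi_i\Vert^2$ cancel against the contributions from $\tilde K_{\mathtt{A}_i}$ and from $C_i^\dagger C_i$, leaving exactly $\tilde H_{\mathtt{B}_i}=H_{\mathtt{B}_i}+\tfrac{i}{2}(G-G^\dagger)-\mu_i\idmat$ with $G=B_i^\dagger C_i$.

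The main obstacle is the diagonal analysis in Part \ref{SubstitutionEquivalenceAll}: because the diagonal dilation operator $A_{ii}$ carries the $-K_{\mathtt{A}_i}^\dagger X-XK_{\mathtt{A}_i}$ freedom, it is only GKLS-minimal, so the separation of the ``$A_{ii}$-dependent'' part from the ``$\idmat_{\mathtt{A}_i}\otimes(\cdot)$'' part cannot be read off from Stinespring-minimality and must be extracted through the commutator combination $(X\otimes\idmat)A_{ii}-A_{ii}X$ of \ref{cond:minimalityLindblad}. A secondary subtlety, settled at the outset, is arranging this stronger diagonal minimality simultaneously with the off-diagonal Stinespring-minimality and the orthogonality relations, while retaining the uniform boundedness that underlies SOT-convergence of all series.
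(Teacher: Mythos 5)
Your proposal is correct and follows essentially the same route as the paper: blockwise reduction to Theorems \ref{thm:UniquenessStinespringThm} and \ref{thm:GeneralUniquenessTheorem} for the minimality claim and Part \ref{SubstitutionEquivalenceAlgebra}, then, for Part \ref{SubstitutionEquivalenceAll}, projecting for $V_0, K_0$, exploiting the commutator combination $(X\otimes\idmat)A_{ii}-A_{ii}X$ together with condition \ref{cond:minimalityLindblad} to extract $U_{ii}$ and $B_i$ on the diagonal, condition \ref{cond:minimalityCP} off the diagonal, and a final bookkeeping computation for $H_{\mathtt{B}_i}$. Your $M_i/N_i$ cancellation is just a repackaging of the paper's Eqs.~\eqref{eq:Projection-WiseEquivalenceLindblad}--\eqref{Eq:expansionSomething}, so no substantive difference remains.
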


\noindent \textbf{Remark.} It is the matter of a straightforward calculation to show that the (simultaneous) substitutions in \ref{SubstitutionEquivalenceAlgebra} and \ref{SubstitutionEquivalenceAll} above leave the operators $V$ and $K$ invariant. 
Thus Theorem \ref{thm:UniquenessStatement} quantifies exactly the freedom in our representation. Moreover, Theorem \ref{thm:GeneralUniquenessTheorem} quantifies the freedom in the choice of $(\mathcal{H}_\mathtt{E}, V, K)$. 

\begin{proof}
We start with proving the possibility of a reduction to minimality, as claimed in the first part of the theorem. 
Suppose $L$ is given according to Part \ref{item:NormalFormAtomicLindblad} of Theorem \ref{THM:AtomicLindbladNormalForm}, with data $(\tilde{V}_0, \tilde{K}_0, \{\mathcal{H}_{\tilde{\mathtt{F}}_{ij}} \}, \{\tilde{U}_{ij} \}, \{\tilde{A}_{ij}\}, \{\tilde{B}_i\}, \{\tilde{K}_{\mathtt{A}_i}\}, \{\tilde{H}_{\mathtt{B}_i}\} )$. For any $i \in I$, $X_{\mathtt{A}_i} \in \blt(\mathcal{H}_{\mathtt{A}_i})$, we have 
$P_i L(P_i^\dagger (X_{\mathtt{A}_i} \otimes \idmat_{\mathtt{B}_i})P_i) P_i^\dagger = \left[ \tilde{A}_{ii}^\dagger (X_{\mathtt{A}_i} \otimes \idmat_{\tilde{\mathtt{F}}_{ii}}) \tilde{A}_{ii} - \tilde{K}_{\mathtt{A}_i}^\dagger X_{\mathtt{A}_i} - X_{\mathtt{A}_i} \tilde{K}_{\mathtt{A}_i} \right] \otimes \idmat_{\mathtt{B}_i} =: L_{ii}^\downarrow(X_{\mathtt{A}_i}) \otimes \idmat_{\mathtt{B}_i}$.
By Theorem \ref{thm:GeneralUniquenessTheorem}, there exists $(\mathcal{H}_{\mathtt{F}_{ii}}, A_{ii}, K_{\mathtt{A}_i})$ such that $L_{ii}^\downarrow(X_{\mathtt{A}_i}) = A_{ii}^\dagger (X_{\mathtt{A}_i} \otimes \idmat_{\mathtt{F}_{ii}}) A_{ii} - K_{\mathtt{A}_i}^\dagger X_{\mathtt{A}_i} - X_{\mathtt{A}_i} K_{\mathtt{A}_i}$, $A_{ii}$ satisfies condition \ref{cond:minimalityLindblad}, $\tilde{A}_{ii} = (\idmat_{\mathtt{A}_i} \otimes W_{ii})A_{ii} + \idmat_{\mathtt{A}_i} \otimes \ket{\tilde{\psi}_i}$, and $\tilde{K}_{\mathtt{A}_i} = K_{\mathtt{A}_i} + (\idmat_{\mathtt{A}_i} \otimes \bra{\tilde{\psi}_i} W_{ii})A_{ii} + \frac{1}{2}\Vert \tilde{\psi}_i \Vert^2 + i\mu_i$, for an isometry $W_{ii} \in \blt(\mathcal{H}_{\mathtt{F}_{ii}}; \mathcal{H}_{\tilde{\mathtt{F}}_{ii}})$, a vector $\ket{\tilde{\psi}_i} \in \mathcal{H}_{\tilde{\mathtt{F}}_{ii}}$, and a number $\mu_i \in \mathbb{R}$. 
We define $U_{ii} = \tilde{U}_{ii}(W_{ii} \otimes \idmat_{\mathtt{B}_i})$, so that $U_{ii}$ is an isometry.
Furthermore, we define $B_i = \tilde{B_i} + \ket{\tilde{\psi}_i} \otimes \idmat_{\mathtt{B}_i}$ and $H_{\mathtt{B}_i} = \tilde{H}_{\mathtt{B}_i} + \frac{i}{2} (\tilde{G} - \tilde{G}^\dagger) + \mu_i \idmat_{\mathtt{B}_i}$, where $\tilde{G} = \tilde{B}_i^\dagger \tilde{U}_{ii}(\ket{\tilde{\psi}_i} \otimes \idmat_{\mathtt{B}_i})$.
A direct calculation shows that replacing the operators with `tilde' by the ones without, does not change $V$ and $K$, but now $A_{ii}$ satisfies condition \ref{cond:minimalityLindblad}. The claim that $\sup_{i \in I} \Vert \tilde{\psi}_i \Vert < \infty$ and $\sup_{i \in I} |\mu_i| < \infty$ follows, since $L$ would be unbounded otherwise. For $i, j \in I$ with $i \neq j$, we have $P_jL(P_i^\dagger(X_{\mathtt{A}_i} \otimes \idmat_{\mathtt{B}_i})P_i)P_j^\dagger = \left[ \tilde{A}_{ij}^\dagger (X_{\mathtt{A}_i} \otimes \idmat_{\tilde{\mathtt{F}}_{ij}}) \tilde{A}_{ij} \right] \otimes \idmat_{\mathtt{B}_j} =: L_{ij}^\downarrow(X_{\mathtt{A}_i}) \otimes \idmat_{\mathtt{B}_j}$. 
By Theorem \ref{thm:UniquenessStinespringThm}, there exists $(\mathcal{H}_{\tilde{\mathtt{F}}_{ij}}, A_{ij})$ such that $L_{ij}^\downarrow(X_{\mathtt{A}_i}) = A_{ij}^\dagger (X_{\mathtt{A}_i} \otimes \idmat_{\mathtt{F}_{ij}})A_{ij}$, condition \ref{cond:minimalityCP} holds, and $\tilde{A}_{ij} = (\idmat_{\mathtt{A}_i} \otimes W_{ij})A_{ij}$, for some isometry $W_{ij} \in \blt(\mathcal{H}_{\mathtt{F}_{ij}}; \mathcal{H}_{\tilde{\mathtt{F}}_{ij}})$. 
We define $U_{ij} = \tilde{U}_{ij}(W_{ij} \otimes \idmat_{\mathtt{B}_j})$. 
It follows that $U_{ij}$ is an isometry and also that $U_{ik}^\dagger U_{il} = \delta_{kl} \idmat$. Again, a calculation shows that replacing the operators with `tilde' by the ones without, does not change $K$ and $V$.

Next, we want to prove \ref{SubstitutionEquivalenceAlgebra}. Since $L(X_\mathcal{A}) = \tilde{L}(X_\mathcal{A})$ for all $X_\mathcal{A} \in \mathcal{A}$, we have in particular $P_i L(P_i^\dagger (X_{\mathtt{A}_i} \otimes \idmat_{\mathtt{B}_i})P_i) P_i^\dagger = P_i \tilde{L}(P_i^\dagger (X_{\mathtt{A}_i} \otimes \idmat_{\mathtt{B}_i})P_i) P_i^\dagger$ for all $i \in I$. 
This is equivalent to $A_{ii}^\dagger (X_{\mathtt{A}_i} \otimes \idmat_{\mathtt{F}_{ii}}) A_{ii} - K_{\mathtt{A}_i}^\dagger X_{\mathtt{A}_i} - X_{\mathtt{A}_i} K_{\mathtt{A}_i} = \tilde{A}_{ii}^\dagger (X_{\mathtt{A}_i} \otimes \idmat_{\tilde{\mathtt{F}}_{ii}}) \tilde{A}_{ii} - \tilde{K}_{\mathtt{A}_i}^\dagger X_{\mathtt{A}_i} - X_{\mathtt{A}_i} \tilde{K}_{\mathtt{A}_i}$.
Since \ref{cond:minimalityLindblad} holds for $L$ and $\tilde{L}$, Theorem \ref{thm:GeneralUniquenessTheorem} implies the existence of unitaries $W_{ii} \in \blt(\mathcal{H}_{\mathtt{F}_{ii}}; \mathcal{H}_{\tilde{\mathtt{F}}_{ii}})$, vectors $\ket{\tilde{\psi}_i} \in \mathcal{H}_{\tilde{\mathtt{F}}_{ii}}$, and numbers $\mu_i \in \mathbb{R}$ s.t.~$\tilde{A}_{ii} = (\idmat_{\mathtt{A}_i} \otimes W_{ii})A_{ii} + \idmat_{\mathtt{A}_i} \otimes \ket{\tilde{\psi}_i}$ and $\tilde{K}_{\mathtt{A}_i} = K_{\mathtt{A}_i} + (\idmat_{\mathtt{A}_i} \otimes \bra{\tilde{\psi_i}}W_{ii})A_{ii} + \frac{1}{2} \Vert \tilde{\psi}_i \Vert^2 + i\mu_i$. The claim that $\sup_{i \in I} \Vert \tilde{\psi}_i \Vert < \infty$ and $\sup_{i \in I} |\mu_i| < \infty$ follows, since $\tilde{L}$ would be unbounded otherwise. For $i, j \in I$ with $i \neq j$ we have $P_j L(P_i^\dagger (X_{\mathtt{A}_i} \otimes \idmat_{\mathtt{B}_i})P_i) P_j^\dagger = P_j \tilde{L}(P_i^\dagger (X_{\mathtt{A}_i} \otimes \idmat_{\mathtt{B}_i})P_i) P_j^\dagger$, which is equivalent to $A_{ij}^\dagger(X_{\mathtt{A}_i} \otimes \idmat_{\mathtt{F}_{ij}})A_{ij} = \tilde{A}_{ij}^\dagger(X_{\mathtt{A}_i} \otimes \idmat_{\tilde{\mathtt{F}}_{ij}})\tilde{A}_{ij}$ for all $X_{\mathtt{A}_i} \in \blt(\mathcal{H}_{\mathtt{A}_i})$. Since \ref{cond:minimalityCP} holds for $L$ and $\tilde{L}$, Theorem \ref{thm:UniquenessStinespringThm} implies the existence of a unitary $W_{ij} \in \blt(\mathcal{H}_{\mathtt{F}_{ij}}; \mathcal{H}_{\tilde{\mathtt{F}}_{ij}})$ s.t.~$\tilde{A}_{ij} = (\idmat_{\mathtt{A}_i} \otimes W_{ij})A_{ij}$. This is claim \ref{SubstitutionEquivalenceAlgebra}.

For part \ref{SubstitutionEquivalenceAll}, we first notice that $\tilde{V} = V$ and $\tilde{K} = K$ immediately implies (by projecting into the respective subspace) that $\tilde{V}_0 = V_0$ and $\tilde{K}_0 = K_0$. Moreover, for any $i\in I$,
\begin{equation} \label{eq:Projection-WiseEquivalenceLindblad}
    V_{ii} := (\idmat_{\mathtt{A}_{ii}} \otimes U_{ii})(A_{ii} \otimes \idmat_{\mathtt{B}_{ii}}) + \idmat_{\mathtt{A}_{i}} \otimes B_i = (\idmat_{\mathtt{A}_{ii}} \otimes \tilde{U}_{ii})(\tilde{A}_{ii} \otimes \idmat_{\mathtt{B}_{ii}}) + \idmat_{\mathtt{A}_{i}} \otimes \tilde{B}_i =: \tilde{V}_{ii}\, .
\end{equation}
Thus, 
\begin{align}\label{Eq:expansionSomething}
    &(X_{\mathtt{A}_i} \otimes \idmat_{\mathtt{B}_i\mathtt{E}})\tilde{V}_{ii} - \tilde{V}_{ii}(X_{\mathtt{A}_i} \otimes \idmat_{\mathtt{B}_i}) = (\idmat_{\mathtt{A}_i} \otimes \tilde{U}_{ii})([ (X_{\mathtt{A}_i} \otimes \idmat_{\tilde{\mathtt{F}}_{ii}})\tilde{A}_{ii} - \tilde{A}_{ii}X_{\mathtt{A}_i}  ] \otimes \idmat_{\mathtt{B}_i}) \nonumber \\ 
    &= (\idmat_{\mathtt{A}_i} \otimes (\tilde{U}_{ii}(W_{ii} \otimes \idmat_{\mathtt{B}_i})))([ (X_{\mathtt{A}_i} \otimes \idmat_{\mathtt{F}_{ii}})A_{ii} - A_{ii}X_{\mathtt{A}_i}  ] \otimes \idmat_{\mathtt{B}_i}) \nonumber \\ 
    &= (\idmat_{\mathtt{A}_i} \otimes U_{ii})([ (X_{\mathtt{A}_i} \otimes \idmat_{\mathtt{F}_{ii}})A_{ii} - A_{ii}X_{\mathtt{A}_i}  ] \otimes \idmat_{\mathtt{B}_i})\, ,
\end{align} 
where the second line was obtained by using the relation between $A_{ii}$ and $\tilde{A}_{ii}$ in Part \ref{SubstitutionEquivalenceAlgebra}. 
From the equality of the last two lines and the totality implied by \ref{cond:minimalityLindblad}, we conclude $U_{ii} = \tilde{U}_{ii}(W_{ii} \otimes \idmat_{\mathtt{B}_i})$.
Using this relation, the relation between $A_{ii}$ and $\tilde{A}_{ii}$, and Eq.~\eqref{eq:Projection-WiseEquivalenceLindblad} yields $\tilde{B}_i = B_i - U_{ii} (\ket{W_{ii}^\dagger \tilde{\psi}_i} \otimes \idmat_{\mathtt{B}_i})$.
Moreover, from $P_i K P_i^\dagger = (\idmat_{\mathtt{A}_i} \otimes B_i^\dagger)(\idmat_{\mathtt{A}_i} \otimes U_{ii})(A_{ii} \otimes \idmat_{\mathtt{B}_i}) + \frac{1}{2} (\idmat_{\mathtt{A}_i} \otimes B_i^\dagger B_i) + (K_{\mathtt{A}_i} \otimes \idmat_{\mathtt{B}_i}) + (\idmat_{\mathtt{A}_i} \otimes iH_{\mathtt{B}_i}) = (\idmat_{\mathtt{A}_i} \otimes \tilde{B}_i^\dagger)(\idmat_{\mathtt{A}_i} \otimes \tilde{U}_{ii})(\tilde{A}_{ii} \otimes \idmat_{\mathtt{B}_i}) + \frac{1}{2} (\idmat_{\mathtt{A}_i} \otimes \tilde{B}_i^\dagger \tilde{B}_i) + (\tilde{K}_{\mathtt{A}_i} \otimes \idmat_{\mathtt{B}_i}) + (\idmat_{\mathtt{A}_i} \otimes i\tilde{H}_{\mathtt{B}_i}) = P_i \tilde{K} P_i^\dagger$ and the already established relations between the operators with and without 'tilde', we obtain $\tilde{H}_{\mathtt{B}_i} = H_{\mathtt{B}_i} + \frac{i}{2} (G - G^\dagger) - \mu_i \idmat_{\mathtt{B}_i}$.
Finally, for $i, j \in I$ with $i \neq j$ we get $(X_{\mathtt{A}_i} \otimes \idmat_{\mathtt{B}_i\mathtt{E}})V_{ij} = (\idmat_{\mathtt{A}_i} \otimes U_{ij})([(X_{\mathtt{A}_i} \otimes \idmat_{\mathtt{F}_{ij}})A_{ij}] \otimes \idmat_{\mathtt{B}_j}) = (\idmat_{\mathtt{A}_i} \otimes \tilde{U}_{ij}) ([(X_{\mathtt{A}_i} \otimes \idmat_{\mathtt{F}_{ij}}) \tilde{A}_{ij}] \otimes \idmat_{\mathtt{B}_j}) = (\idmat_{\mathtt{A}_i} \otimes (\tilde{U}_{ij}(W_{ij} \otimes \idmat_{\mathtt{B}_j})))([(X_{\mathtt{A}_i} \otimes \idmat_{\mathtt{F}_{ij}})A_{ij}] \otimes \idmat_{\mathtt{B}_j})$.
By the totality condition \ref{cond:minimalityCP}, we can conclude $U_{ij} = \tilde{U}_{ij}(W_{ij} \otimes \idmat_{\mathtt{B}_j})$. Since $W_{ij}$ is unitary, this finishes the proof.
\end{proof}

For later convenience, we also note the following. 
\begin{theorem}{Corollary} \label{cor:K-onlyPartCorollary}
Let $L : \blt(\mathcal{H}) \rightarrow \blt(\mathcal{H})$ be given by $L(X) = - K^\dagger X - X K$, with $K\in\blt (\mathcal{H})$, and let $\mathcal{A}$ be an atomic *-subalgebra of $\blt(\mathcal{H})$, with decomposition given by Definition \ref{Defn:NormalFormAtomicAlgebra}. If $L(\mathcal{A}) \subseteq \mathcal{A}$, then we can choose $A_{ij} = 0$ and $B_i = 0$, for all $i, j \in I$ in the corresponding representation of Part \ref{item:NormalFormAtomicLindblad} 
of Theorem \ref{THM:AtomicLindbladNormalForm}. Thus, $K = K_\mathcal{A} + iH_{\mathcal{A}^\prime} + P_0^\dagger K_0$.
\end{theorem}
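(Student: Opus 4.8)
The plan is to exploit that $L(X) = -K^\dagger X - XK$ is nothing but a GKLS-generator whose completely positive part vanishes, i.e.\ $L(X) = \Phi(X) - K^\dagger X - XK$ with $\Phi \equiv 0$. Since the corollary only asserts the \emph{existence} of a representation of the form in Part~\ref{item:NormalFormAtomicLindblad} of Theorem~\ref{THM:AtomicLindbladNormalForm} with $A_{ij} = 0$ and $B_i = 0$, it suffices to select one convenient Stinespring operator for $\Phi$ and trace it through the proofs of Theorems~\ref{theorem:gkls-invariant-afd-algebra} and~\ref{theorem:cp-map-algebra-to-algebra-stinespring}. The natural choice is the trivial Stinespring operator $V = 0$, which represents $\Phi \equiv 0$. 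The whole content of the argument is then that averaging, projecting, and dilating the zero operator produce only zeros.

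First I would invoke Theorem~\ref{theorem:gkls-invariant-afd-algebra}, which applies because $L(\mathcal{A}) \subseteq \mathcal{A}$ and $\mathcal{A}$, being atomic, is in particular AFD. Running the construction in the proof of condition~\ref{cond:InvariantSubalgebraStinespring} on the input $V = 0$, every operator built from $V$ vanishes: $\tilde{V}_0 = (P_0 \otimes \idmat_\mathtt{E})V = 0$, the Haar average $\tilde{B} = \mathbb{E}(V) = 0$, and hence $\tilde{A} = V - (P_0^\dagger \otimes \idmat_\mathtt{E})\tilde{V}_0 - \tilde{B} = 0$. Consequently the operator $\kappa = K - \tilde{B}^\dagger \tilde{A} - \tfrac12 \tilde{B}^\dagger \tilde{B}$ reduces to $\kappa = K$, which the proof decomposes as $K = K_\mathcal{A} + i\tilde{H}_{\mathcal{A}^\prime} + P_0^\dagger \tilde{K}_0$ with $K_\mathcal{A} \in \mathcal{A}$ and $\tilde{H}_{\mathcal{A}^\prime} \in \mathcal{A}^\prime$ self-adjoint.

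Next I would carry these vanishing operators into the proof of Theorem~\ref{THM:AtomicLindbladNormalForm}. The induced CP-map $X_\mathcal{A} \mapsto \tilde{A}^\dagger (X_\mathcal{A} \otimes \idmat_\mathtt{E}) \tilde{A}$ is identically zero, so applying Theorem~\ref{theorem:cp-map-algebra-to-algebra-stinespring} to it we may choose $A_0 = 0$ and $A_{ij} = 0$ for all $i,j \in I$: the zero map admits the trivial (minimal) dilation, and the isometries $U_{ij}$ then become immaterial and may be taken to be any fixed choice satisfying $U_{ik}^\dagger U_{il} = \delta_{kl}\idmat$. Likewise, since $\tilde{B} = 0$, its decomposition via Lemma~\ref{lem:HalfCommutingBsLemma} gives $B_0 = 0$ and $B_i = 0$ for all $i \in I$, while $\tilde{H}_{\mathcal{A}^\prime}$ decomposes as $P_0^\dagger H_0 P_0 + \sum_{i \in I} P_i^\dagger (\idmat_{\mathtt{A}_i} \otimes H_{\mathtt{B}_i}) P_i$ as in that proof.

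Finally I would substitute $A_{ij} = 0$ and $B_i = 0$ into the formula for $K$ in Part~\ref{item:NormalFormAtomicLindblad}. Every term containing $V_{ii}^{sc} = (\idmat_{\mathtt{A}_i} \otimes U_{ii})(A_{ii} \otimes \idmat_{\mathtt{B}_i})$ or $B_i$ drops out, leaving precisely $K = K_\mathcal{A} + iH_{\mathcal{A}^\prime} + P_0^\dagger K_0$, with $V_0 = \tilde{V}_0 + A_0 + B_0 P_0 = 0$ and $K_0 = \tilde{K}_0 + iH_0 P_0$ absorbing the (now $V$-independent) leftover scalars exactly as in the proof of Theorem~\ref{THM:AtomicLindbladNormalForm}. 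I expect no genuine obstacle: the only care required is to apply each construction in the cited proofs to the specific input $V = 0$ and to remember that the corollary claims one \emph{admissible} representation rather than a unique one, so that the freedom in choosing $U_{ij}$ (and the trivial dilation spaces $\mathcal{H}_{\mathtt{F}_{ij}}$) causes no difficulty.
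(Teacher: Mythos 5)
Your proposal is correct, but it reaches the conclusion by a genuinely different route than the paper. The paper argues \emph{backwards} through the uniqueness machinery: it takes a representation of $(V,K)=(0,K)$ satisfying the minimality conditions \ref{cond:minimalityLindblad} and \ref{cond:minimalityCP} (such a representation can be chosen by the first part of Theorem \ref{thm:UniquenessStatement}), notes that $V=0$ makes every block $V_{ij}$ vanish, and then uses the totality conditions to force $(\idmat_{\mathtt{A}_i}\otimes U_{ij})$ to annihilate a total set, so $U_{ij}=0$; since $U_{ij}$ is an isometry, $\mathcal{H}_{\mathtt{F}_{ij}}$ must be zero-dimensional, whence $A_{ij}=0$ and then $B_i=0$. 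You argue \emph{forwards}: feed the trivial Stinespring operator $V=0$ into the constructions in the proofs of Theorem \ref{theorem:gkls-invariant-afd-algebra} (the Haar average $\mathbb{E}(0)=0$ gives $\tilde{A}=\tilde{B}=\tilde{V}_0=0$ and $\kappa=K$, so the $\kappa$-averaging argument yields $K=K_\mathcal{A}+i\tilde{H}_{\mathcal{A}^\prime}+P_0^\dagger \tilde{K}_0$), Theorem \ref{theorem:cp-map-algebra-to-algebra-stinespring} (the zero map has the trivial minimal dilation), and Lemma \ref{lem:HalfCommutingBsLemma}, and observe that every CP-related piece vanishes. Both establish the (existential) claim. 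Your route buys independence from Theorem \ref{thm:UniquenessStatement} and is constructive; its cost is that you must open up the cited proofs rather than invoke their statements, and this is genuinely necessary, since the statements alone only assert existence of \emph{some} decomposition summing to $V=0$ (Theorem \ref{theorem:gkls-invariant-afd-algebra} by itself would permit, say, $A=-B\neq 0$) --- you handle this correctly by tracing the explicit constructions. The paper's route yields the slightly stronger fact that \emph{every} minimal representation has $A_{ij}=0$ and $B_i=0$, not merely that one exists. One small imprecision on your side: for nontrivial $\mathcal{H}_{\mathtt{F}_{ij}}$ a family of isometries $U_{ij}$ with mutually orthogonal ranges need not exist (the target $\mathcal{H}_{\mathtt{B}_i}\otimes\mathcal{H}_\mathtt{E}$ may be too small), so ``any fixed choice'' is not quite accurate; but since your dilation spaces are zero-dimensional, $U_{ij}=0$ satisfies $U_{ik}^\dagger U_{il}=\delta_{kl}\idmat$ vacuously and nothing breaks.
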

\begin{proof}
If $V = 0$ and $K$ are given via $(V_0, K_0, \{\mathcal{H}_{\mathtt{F}_{ij}} \}, \{U_{ij}\}, \{A_{ij}\}, \{B_i\}, \{K_{\mathtt{A}_i}\}, \{H_{\mathtt{B}_i}\} )$ such that \ref{cond:minimalityLindblad} and \ref{cond:minimalityCP} in Theorem \ref{thm:UniquenessStatement} hold, $0 = V_{ij} = (P_i \otimes \idmat_\mathtt{E})VP_j$ implies $(\idmat_{\mathtt{A}_i} \otimes U_{ij})([(X_{\mathtt{A}_i} \otimes \idmat_{\mathtt{F}_{ij}})A_{ij}] \otimes \idmat_{\mathtt{B}_j}) = 0$ for $i \neq j$ and $(\idmat_{\mathtt{A}_i} \otimes U_{ii})([ (X_{\mathtt{A}_i} \otimes \idmat_{\mathtt{F}_{ii}})A_{ii} - A_{ii}X_{\mathtt{A}_i}  ] \otimes \idmat_{\mathtt{B}_i})=0$ for all $i \in I$ (compare Eqs.~\eqref{eq:Projection-WiseEquivalenceLindblad} and \eqref{Eq:expansionSomething}). Thus by the totality conditions \ref{cond:minimalityLindblad} and \ref{cond:minimalityCP}, we conclude $U_{ij} = 0$ for all $i, j \in I$, which implies that $\mathcal{H}_{\mathtt{F}_{ij}}$ is zero-dimensional. Hence, $A_{ij} = 0$ and consequently also $B_i = 0$ for all $i \in I$. 
\end{proof}

\section{Applications} \label{applicationSection}

\subsection{Semicausal quantum dynamical semigroups}
As a first application of our results, we use them to reprove the main result of~\cite{hasenoehrl2021quantum}, namely the characterization of GKLS generators of semicausal quantum dynamical semigroups, a crucial step towards characterizing the generators of continuous one-parameter semigroups of quantum superchannels.
Here, we call a CP-map $\Phi:\blt (\mathcal{H}_{\mathtt{A}}\otimes\mathcal{H}_{\mathtt{B}})\to \blt (\mathcal{H}_{\mathtt{A}}\otimes\mathcal{H}_{\mathtt{B}})$ semicausal if there is a CP-map $\Phi_\mathtt{A}: \blt (\mathcal{H}_{\mathtt{A}}) \to \blt (\mathcal{H}_{\mathtt{A}})$ such that $\Phi (X_\mathtt{A} \otimes\idmat_{\mathtt{B}}) = \Phi_\mathtt{A} (X_\mathtt{A}) \otimes\idmat_{\mathtt{B}}$ holds for all $X_\mathtt{A}\in \blt (\mathcal{H}_{\mathtt{A}})$.
That is, $\Phi$ is semicausal if and only if $\Phi (\mathcal{A}_{sc})\subseteq\mathcal{A}_{sc}$ holds for the atomic vN-subalgebra $\mathcal{A}_{sc}:= \blt (\mathcal{H}_{\mathtt{A}})\otimes\idmat_\mathtt{B}\subseteq\blt (\mathcal{H}_{\mathtt{A}}\otimes\mathcal{H}_{\mathtt{B}})$.

Using Theorem~\ref{THM:AtomicLindbladNormalForm}, we see that a GKLS generator $L: \blt (\mathcal{H}_{\mathtt{A}}\otimes\mathcal{H}_{\mathtt{B}})\to \blt (\mathcal{H}_{\mathtt{A}}\otimes\mathcal{H}_{\mathtt{B}})$, $L(X)=V^\dagger (X\otimes\idmat_\mathtt{E})V - K^\dagger X - XK$, satisfies $L(\mathcal{A}_{sc})\subseteq \mathcal{A}_{sc}$ if and only if there exist a Hilbert space $\mathcal{H}_\mathtt{F}$, an operator $A\in\blt (\mathcal{H}_\mathtt{A};\mathcal{H}_\mathtt{A}\otimes \mathcal{H}_\mathtt{F})$, and an isometry $U\in\blt (\mathcal{H}_\mathtt{F}\otimes \mathcal{H}_\mathtt{B}; \mathcal{H}_\mathtt{B}\otimes \mathcal{H}_\mathtt{E})$; operators $B\in \blt(\mathcal{H}_\mathtt{B}; \mathcal{H}_\mathtt{B}\otimes \mathcal{H}_\mathtt{E})$, $K_\mathtt{A}\in \blt (\mathcal{H}_\mathtt{A})$, and a self-adjoint operator $H_\mathtt{B}\in \blt (\mathcal{H}_\mathtt{B})$, such that $V=(\idmat_{\mathtt{A}}\otimes U)(A\otimes\idmat_{\mathtt{B}}) + \idmat_{\mathtt{A}}\otimes B$ and $K=(\idmat_{\mathtt{A}}\otimes B^\dagger U)(A\otimes\idmat_{\mathtt{B}}) + \tfrac{1}{2}\idmat_{\mathtt{A}}\otimes B^\dagger B + K_\mathtt{A}\otimes\idmat_{\mathtt{B}} + \idmat_{\mathtt{A}}\otimes i H_\mathtt{B}$.
This is exactly~\cite[Theorem V.6]{hasenoehrl2021quantum}.

\subsection{Quantum dynamical semigroups with an atomic decoherence-free subalgebra}

For our second application, we first recall that for a uniformly continuous and unital quantum dynamical semigroup $\mathcal{T}=(T_t)_{t\geq 0}$ acting on $\blt (\mathcal{H})$ the decoherence-free subalgebra $\mathcal{N}(\mathcal{T})$ is the largest vN-subalgebra of $\blt (\mathcal{H})$ on which every $T_t$ acts as a $\ast$-automorphism~\cite[Theorem $3$]{doi:10.1063/1.5030954} or, equivalently, such that $T_t(X_{\mathcal{N}(\mathcal{T})}) = e^{i\tilde{H}t} X_{\mathcal{N}(\mathcal{T})} e^{-i\tilde{H}t}$ for all $X_{\mathcal{N}(\mathcal{T})} \in \mathcal{N}(\mathcal{T})$, where $\tilde{H} \in \blt(\mathcal{H})$ is self-adjoint~\cite[Proposition $2$]{doi:10.1063/1.5030954}.
In particular, note that every $T_t$ leaves $\mathcal{N}(\mathcal{T})$ invariant.
As shown in~\cite{doi:10.1142/S0129055X16500033}, a quantum dynamical semigroup inherits structure from its decoherence-free subalgebra. This is the content of the following: 

\begin{theorem}{Theorem}{\cite[Theorem $3.2(1)$]{doi:10.1142/S0129055X16500033}}\label{theorem:gkls-decoherence-free-subalgebra}
   %Let $\mathcal{T}=(T_t)_{t\geq 0}$ be a uniformly continuous quantum dynamical semigroup on $\blt (\mathcal{H})$.
   %Let $\mathcal{N}(\mathcal{T})$ be the corresponding decoherence-free subalgebra. 
   Let $\mathcal{T}$ and $\mathcal{N}(\mathcal{T})$ be as above, and assume that $\mathcal{N}(\mathcal{T})$ is atomic, with normal form
%   \begin{align}
%       \mathcal{N}(\mathcal{T})
%       &= U \left( \bigoplus_{i \in I}  \blt(\mathcal{H}_{\mathtt{A}_i}) \otimes \idmat_{\mathtt{B}_i}\right) U^\dagger,
%   \end{align}
   as in Definition~\ref{Defn:NormalFormAtomicAlgebra}.
   Then, for any GKLS-generator $L$, given in Kraus form by $L(X) = \sum_{n \in N} \phi_n^\dagger X \phi_n - K^\dagger X - XK$, where $K = \tfrac{1}{2} \sum_{n \in N} \phi_n^\dagger \phi_n + i\operatorname{Im}(K)$, we have
   \begin{equation}\label{eq:decoherence-free-gkls}
       \phi_n 
       = \sum_{i \in I} P_i^\dagger (\idmat_{\mathtt{A}_i} \otimes \beta_{n;i})P_i,\quad
       \operatorname{Im}(K)
       = \sum_{i \in I} P_i^\dagger (\kappa_{\mathtt{A}_i}\otimes\idmat_{\mathtt{B}_i} + \idmat_{\mathtt{A}_i}\otimes \kappa_{\mathtt{B}_i}) P_i,
   \end{equation}
   for some $\beta_{n;i}\in\blt (\mathcal{H}_{\mathtt{B}_i})$ and some self-adjoint $\kappa_{\mathtt{A}_i}\in\blt(\mathcal{H}_{\mathtt{A}_i})$ and $\kappa_{\mathtt{B}_i}\in \blt(\mathcal{H}_{\mathtt{B}_i})$.
\end{theorem}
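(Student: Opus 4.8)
The plan is to exploit that the decoherence-free property makes $L$ restrict to a \emph{derivation} on $\mathcal{N} := \mathcal{N}(\mathcal{T})$, and then to feed this into the invariance characterization of Theorem~\ref{theorem:gkls-invariant-afd-algebra}. First I would record the two structural facts that drive everything: since every $T_t$ leaves $\mathcal{N}$ invariant we have $L(\mathcal{N}) \subseteq \mathcal{N}$, and since $T_t$ acts on $\mathcal{N}$ as $X \mapsto e^{i\tilde{H}t} X e^{-i\tilde{H}t}$, differentiating at $t=0$ gives $L(X_\mathcal{N}) = i\comu{\tilde{H}}{X_\mathcal{N}}$ for all $X_\mathcal{N}\in\mathcal{N}$; in particular $L|_\mathcal{N}$ is a $*$-derivation, and $L(\idmat)=0$ by unitality. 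Because $\mathcal{N}$ is atomic it is a von Neumann algebra (so the null part $\mathcal{H}_0$ is absent and all $P_0$-terms drop out) and it is AFD, so Theorem~\ref{theorem:gkls-invariant-afd-algebra} applies and yields $V = A + B$ and $K = B^\dagger A + \tfrac12 B^\dagger B + K_\mathcal{N} + iH_{\mathcal{N}'}$ with $A^\dagger(X_\mathcal{N}\otimes\idmat_\mathtt{E})A\in\mathcal{N}$, $(X_\mathcal{N}\otimes\idmat_\mathtt{E})B = BX_\mathcal{N}$, $K_\mathcal{N}\in\mathcal{N}$, and self-adjoint $H_{\mathcal{N}'}\in\mathcal{N}'$.

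The decisive step is to show that the derivation property kills the dissipative part, i.e.\ $A=0$. For this I would use the completely-dissipative form $\Psi$ from the proof of Theorem~\ref{theorem:gkls-invariant-afd-algebra}: the derivation identity $L(X_\mathcal{N}^\dagger X_\mathcal{N}) = L(X_\mathcal{N}^\dagger)X_\mathcal{N} + X_\mathcal{N}^\dagger L(X_\mathcal{N})$ together with $L(\idmat)=0$ says exactly that $\Psi(\idmat,X_\mathcal{N},X_\mathcal{N}) = G^\dagger G = 0$, where $G = VX_\mathcal{N} - (X_\mathcal{N}\otimes\idmat_\mathtt{E})V$. Positivity forces $G=0$, i.e.\ $VX_\mathcal{N} = (X_\mathcal{N}\otimes\idmat_\mathtt{E})V$ for all $X_\mathcal{N}\in\mathcal{N}$; equivalently each Kraus operator $\phi_n = (\idmat\otimes\bra{e_n})V$ lies in the commutant $\mathcal{N}'$. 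In terms of the Haar-averaged decomposition this forces $B = \mathbb{E}(V) = V$, hence $A=0$.

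Reading off the stated forms is then bookkeeping with the atomic decomposition. Since $\mathcal{N}$ is atomic, Eq.~\eqref{eq:FormOfAtomicCommutant} describes $\mathcal{N}'$, and $\phi_n\in\mathcal{N}'$ immediately gives $\phi_n = \sum_{i\in I} P_i^\dagger(\idmat_{\mathtt{A}_i}\otimes\beta_{n;i})P_i$ with $\beta_{n;i} := (\idmat_{\mathtt{B}_i}\otimes\bra{e_n})\beta_i\in\blt(\mathcal{H}_{\mathtt{B}_i})$, where $\beta_i$ is the $i$-th block of $V$. For $K$, setting $A=0$ in Theorem~\ref{theorem:gkls-invariant-afd-algebra} leaves $K = \tfrac12 V^\dagger V + K_\mathcal{N} + iH_{\mathcal{N}'}$, and $\tfrac12 V^\dagger V = \tfrac12\sum_n\phi_n^\dagger\phi_n = \operatorname{Re}(K)$ by the unitality hypothesis. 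Hence $i\operatorname{Im}(K) = K_\mathcal{N} + iH_{\mathcal{N}'}$ is anti-self-adjoint; since $iH_{\mathcal{N}'}$ is anti-self-adjoint, $K_\mathcal{N}$ is anti-self-adjoint, so $K_\mathcal{N} = i\hat{H}_\mathcal{N}$ with $\hat{H}_\mathcal{N}\in\mathcal{N}$ self-adjoint, and therefore $\operatorname{Im}(K) = \hat{H}_\mathcal{N} + H_{\mathcal{N}'}\in\mathcal{N}+\mathcal{N}'$. Expanding $\hat{H}_\mathcal{N}$ via Definition~\ref{Defn:NormalFormAtomicAlgebra} and $H_{\mathcal{N}'}$ via Eq.~\eqref{eq:FormOfAtomicCommutant} (both block-diagonal with self-adjoint blocks) gives $\operatorname{Im}(K) = \sum_{i\in I} P_i^\dagger(\kappa_{\mathtt{A}_i}\otimes\idmat_{\mathtt{B}_i} + \idmat_{\mathtt{A}_i}\otimes\kappa_{\mathtt{B}_i})P_i$ with self-adjoint $\kappa_{\mathtt{A}_i}$ and $\kappa_{\mathtt{B}_i}$, as claimed.

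The main obstacle is the middle step: extracting $VX_\mathcal{N} = (X_\mathcal{N}\otimes\idmat_\mathtt{E})V$ (equivalently $\phi_n\in\mathcal{N}'$) from the automorphism property. Everything downstream is a routine translation through the atomic normal form, but this step is where positivity (``$G^\dagger G=0\Rightarrow G=0$'') and the precise meaning of the decoherence-free subalgebra are genuinely used. The second delicate point is that the unitality normalization $K = \tfrac12\sum_n\phi_n^\dagger\phi_n + i\operatorname{Im}(K)$ is exactly what lets one peel off the self-adjoint $\operatorname{Re}(K)$ and identify the remainder as the $\mathcal{N}+\mathcal{N}'$ piece; without it the split of $\operatorname{Im}(K)$ would not be forced.
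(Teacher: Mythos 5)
Your proof is correct, but it takes a genuinely different route from the paper's. The paper never manipulates the complete-dissipativity form directly: it introduces the auxiliary generator $\tilde{L}(X) = -(i\tilde{H})^\dagger X - X(i\tilde{H})$, uses Corollary~\ref{cor:K-onlyPartCorollary} to see that $\tilde{L}$ admits a normal form as in Theorem~\ref{THM:AtomicLindbladNormalForm} with $\tilde{A}_{ij}=0$, and then --- since $L$ and $\tilde{L}$ agree on $\mathcal{N}(\mathcal{T})$ --- invokes the uniqueness statement (Part~\ref{SubstitutionEquivalenceAlgebra} of Theorem~\ref{thm:UniquenessStatement}) to force $A_{ii} = \idmat_{\mathtt{A}_i}\otimes\ket{\psi_i}$ and $A_{ij}=0$ for $i\neq j$ in a minimal representation of $L$, from which Eq.~\eqref{eq:decoherence-free-gkls} is read off. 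You instead kill the dissipative part at the source: the derivation identity together with $L(\idmat)=0$ gives $\Psi(\idmat, X_\mathcal{N}, X_\mathcal{N}) = G^\dagger G = 0$ for $G = VX_\mathcal{N} - (X_\mathcal{N}\otimes\idmat_\mathtt{E})V$, so positivity yields $G=0$, every Kraus operator lies in $\mathcal{N}(\mathcal{T})^\prime$, and the block forms follow from Theorem~\ref{theorem:gkls-invariant-afd-algebra} together with Eq.~\eqref{eq:FormOfAtomicCommutant}. Your argument is more elementary --- it bypasses Theorems~\ref{THM:AtomicLindbladNormalForm} and~\ref{thm:UniquenessStatement} and all minimality bookkeeping (no vector freedom $\ket{\psi_i}$ ever appears), and it isolates exactly where the automorphism property enters --- whereas the paper's route deliberately exercises the normal-form-plus-uniqueness machinery as a black box, which is the point of the applications section. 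Two small remarks. First, your step ``$B=\mathbb{E}(V)=V$, hence $A=0$'' appeals to the Haar-average construction \emph{inside} the proof of Theorem~\ref{theorem:gkls-invariant-afd-algebra} rather than to its statement; this is legitimate in context, but it can also be avoided: for \emph{any} decomposition $V=A+B$ provided by the statement, $G=0$ forces $(X_\mathcal{N}\otimes\idmat_\mathtt{E})A = AX_\mathcal{N}$ as well, whence $A^\dagger A\in\mathcal{N}(\mathcal{T})\cap\mathcal{N}(\mathcal{T})^\prime$ and $B^\dagger A - A^\dagger B$ is an anti-self-adjoint element of $\mathcal{N}(\mathcal{T})^\prime$, so $K - \tfrac{1}{2}V^\dagger V$ is still of the form $K_\mathcal{N} + iH_{\mathcal{N}^\prime}$ and your downstream argument goes through unchanged. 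Second, atomicity alone does not make $\mathcal{N}(\mathcal{T})$ a von Neumann algebra (atomic weakly closed *-algebras may have a null part, cf.\ Definition~\ref{Defn:NormalFormAtomicAlgebra}); it is the definition of the decoherence-free subalgebra as a vN-subalgebra that guarantees $\idmat\in\mathcal{N}(\mathcal{T})$ and hence the absence of the $P_0$ terms, so your conclusion is right but the stated reason should be adjusted.
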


We can recover Theorem~\ref{theorem:gkls-decoherence-free-subalgebra} as a special case of our results as follows. 
By the characterization of the decoherence-free subalgebra described above, we have $L(X_{\mathcal{N}(\mathcal{T})}) = i[\tilde{H},X_{\mathcal{N}(\mathcal{T})}]\in \mathcal{N}(\mathcal{T})$ for all $X_{\mathcal{N}(\mathcal{T})}\in \mathcal{N}(\mathcal{T})$.
Therefore, if we define $\tilde{L}(X):= -(i\tilde{H})^\dagger X - X (i\tilde{H})$ for all $X\in \blt (\mathcal{H})$, then Corollary~\ref{cor:K-onlyPartCorollary} implies $\tilde{H} = \sum_{i \in I} P_i^\dagger (\tilde{H}_{\mathtt{A}_i} \otimes \idmat_{\mathtt{B}_i} + \idmat_{\mathtt{A}_i} \otimes \tilde{H}_{\mathtt{B}_i}) P_i$ for self-adjoint $\tilde{H}_{\mathtt{A}_i}\in\blt (\mathcal{H}_{\mathtt{A}_i})$ and $\tilde{H}_{\mathtt{B}_i}\in\blt (\mathcal{H}_{\mathtt{B}_i})$.
This already provides a normal form representation of $\tilde{L}$ as in Theorem~\ref{THM:AtomicLindbladNormalForm}, and the minimality conditions of Theorem~\ref{thm:UniquenessStatement} are satisfied, since $\tilde{A}_{ij} = 0$. 
Now, if $L$ is w.l.o.g.~also minimal, Part \ref{SubstitutionEquivalenceAlgebra} of Theorem~\ref{thm:UniquenessStatement} (with the roles of the operators with and without `tilde' interchanged) implies that $A_{ii} = \mathds{1}_{\mathtt{A}_i} \otimes \ket{\psi_i}$ for some vectors $\ket{\psi_i}\in \mathcal{H}_{\mathtt{F}_{ii}}$ and that $A_{ij}=0$ for $i\neq j$.
On the level of the Kraus operators, this yields $\phi_n = (\mathds{1}\otimes \bra{e_n})\big(\sum_{i\in I} (P_i^\dagger \otimes\mathds{1}_{\mathtt{E}}) (\mathds{1}_{\mathtt{A}_{i}} \otimes U_{ii} (\ket{\psi_i}\otimes \mathds{1}_{\mathtt{B}_i}) + \mathds{1}_{\mathtt{A}_{i}}\otimes B_i ) P_i \big)$, which has the desired form with $\beta_{n;i} = (\mathds{1}_{\mathtt{B}_i}\otimes \bra{e_n})( U_{ii} (\ket{\psi_i}\otimes \mathds{1}_{\mathtt{B}_i}) + B_i)$.
(Note: If $L$ is not already minimal, we can first follow the steps in the proof of Theorem \ref{thm:UniquenessStatement} to reduce to a minimal generator and then apply the above reasoning.)
Moreover, using the representation of Theorem~\ref{THM:AtomicLindbladNormalForm} and
again Part \ref{SubstitutionEquivalenceAlgebra} of Theorem~\ref{thm:UniquenessStatement}, 
\begin{align*}
    \operatorname{Im}(K)
    &= \sum_{i\in I}P_i^\dagger \big( \operatorname{Im}( (\mathds{1}_{\mathtt{A}_i}\otimes B_i^\dagger U_{ii}) (A_{ii}\otimes \mathds{1}_{\mathtt{B}_i})) + \operatorname{Im}(K_{\mathtt{A}_i})\otimes \mathds{1}_{\mathtt{B}_i} + \mathds{1}_{\mathtt{A}_i}\otimes H_{\mathtt{B}_i} \big) P_i\\
    &= \sum_{i\in I}P_i^\dagger \big( \mathds{1}_{\mathtt{A}_i}\otimes \operatorname{Im}( B_i^\dagger U_{ii} (\ket{\psi_i}\otimes \mathds{1}_{\mathtt{B}_i})) + \operatorname{Im}(\tilde{K}_{\mathtt{A}_i})\otimes \mathds{1}_{\mathtt{B}_i} + \mu_i + \mathds{1}_{\mathtt{A}_i}\otimes H_{\mathtt{B}_i} \big) P_i ,
\end{align*}
which has the desired form with $\kappa_{\mathtt{A}_i} := \operatorname{Im}(\tilde{K}_{\mathtt{A}_i})$ and $\kappa_{\mathtt{B}_i}:= H_{\mathtt{B}_i} + \operatorname{Im}(B_i^\dagger U_{ii} (\ket{\psi_i}\otimes \mathds{1}_{\mathtt{B}_i})) + \mu_i$.

As a final remark on our short discussion of the decoherence-free subalgebra, we point out that~\cite[Corollary $21$]{sasso2021general} showed that $\mathcal{N}(\mathcal{T})$ is atomic whenever the quantum dynamical semigroup $\mathcal{T}$ admits a normal faithful invariant state. 
In many situations of interest, we can therefore focus on $\mathcal{N}(\mathcal{T})$ being atomic, as in Theorems~\ref{THM:AtomicLindbladNormalForm} and~\ref{theorem:gkls-decoherence-free-subalgebra}.

\subsection{Quantum dynamical semigroups and CP-maps with an invariant maximal abelian subalgebra}

Our third application is concerned with  the following question: Given a maximal abelian vN-subalgebra $\mathcal{C}$ of $\blt (\mathcal{H})$, that is $\mathcal{C}^\prime = \mathcal{C}$, what is the most general form of a GKLS-generator %such that the corresponding quantum dynamical semigroup 
that leaves $\mathcal{C}$ invariant? 
%Here, a maximal abelian vN-subalgebra is an abelian vN-subalgebra such that every proper vN-superalgebra is no longer abelian. 
%We denote such a vN-subalgebra by $\mathcal{C}$, because it is its own center.
According to Theorem~\ref{theorem:gkls-invariant-afd-algebra}, we can reduce the above question to characterizing CP-maps with an invariant maximal abelian vN-subalgebra.
The latter question was previously investigated in~\cite{doi:10.1007/s11080-005-0485-3, FrancoFagnola2007}. More precisely, \cite[Theorem $1$]{doi:10.1007/s11080-005-0485-3} gave an abstract characterization of such GKLS-generators in terms of a commutation relation and a sufficient condition on the Kraus operators of the CP part of the GKLS-generator.
Ref.~\cite{FrancoFagnola2007} extended these deliberations and, in particular, gave a necessary and sufficient condition for a normal CP-map to leave a maximal abelian vN-subalgebra invariant:

\begin{theorem}{Theorem}{\cite[Corollary $3.4$]{FrancoFagnola2007}}\label{corollary:cp-map-invariant-maximal-abelian-subalgebra}
    Let $\Phi$ be a normal CP-map on $\blt (\mathcal{H})$ with Kraus decomposition $\Phi(X)=\sum_{n\in N} \phi_n^\dagger X \phi_n$. % for some $\{\phi_n\}_{n\in N}\in \blt(\mathcal{H})$ such that $\sum_{n\in N} \phi_n^\dagger \phi_n$ SOT-converges.
    Let $\mathcal{C}$ be a maximal abelian vN-subal\-ge\-bra of $\blt (\mathcal{H})$.
    Then, $\Phi$ leaves $\mathcal{C}$ invariant if and only if for every $c\in\mathcal{C}$ there exist $c_{mn}=c_{mn}(c)\in\mathcal{C}$ for $m,n\in N$ s.t.~\labeltext{1)}{cond:maximally-abelian-condition-1} $c_{mn}(c^\dagger) = c_{nm}(c)^\dagger$ and \labeltext{2)}{cond:maximally-abelian-condition-2} $[c,\phi_m] = \sum_{n\in N} c_{mn} \phi_n$.
\end{theorem}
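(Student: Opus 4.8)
The plan is to prove the two implications separately, using throughout that maximal abelianness gives $\mathcal{C}=\mathcal{C}^\prime$, so that the invariance $\Phi(c)\in\mathcal{C}$ is equivalent to $[\,d,\Phi(c)\,]=0$ for all $d\in\mathcal{C}$. The implication ``conditions $\Rightarrow$ invariance'' I would establish by a direct algebraic computation; the converse I would obtain from a minimal Stinespring dilation together with an equivariant Douglas-type factorization.

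For the direction \ref{cond:maximally-abelian-condition-1},\,\ref{cond:maximally-abelian-condition-2} $\Rightarrow$ invariance, I would first record the adjoint companion of \ref{cond:maximally-abelian-condition-2}: taking the adjoint of $[c,\phi_m]=\sum_n c_{mn}(c)\phi_n$ and then replacing $c$ by $c^\dagger$ yields $[\phi_m^\dagger,c]=\sum_n\phi_n^\dagger\, c_{mn}(c^\dagger)^\dagger$, and condition \ref{cond:maximally-abelian-condition-1} turns this into $[\phi_m^\dagger,c]=\sum_n\phi_n^\dagger\, c_{nm}(c)$. Now fix $c,d\in\mathcal{C}$ and expand $\Phi(c)d-d\Phi(c)=\sum_n(\phi_n^\dagger c\,\phi_n d-d\,\phi_n^\dagger c\,\phi_n)$, substituting $\phi_n d=d\phi_n-[d,\phi_n]$ and $d\phi_n^\dagger=\phi_n^\dagger d-[\phi_n^\dagger,d]$ and invoking \ref{cond:maximally-abelian-condition-2} and its companion for $d$. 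The terms containing $cd-dc$ vanish because $\mathcal{C}$ is abelian, and after relabelling the summation indices the remaining double sum collapses to $\sum_{n,k}\phi_n^\dagger\,(d_{nk}c-c\,d_{nk})\,\phi_k$ with $d_{nk}:=c_{nk}(d)\in\mathcal{C}$, which is again zero since $c$ and $d_{nk}$ commute. Hence $\Phi(c)$ commutes with every $d\in\mathcal{C}$, so $\Phi(c)\in\mathcal{C}^\prime=\mathcal{C}$.

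For the converse I would pass to a minimal Stinespring representation $(\mathcal{H}_\mathtt{E},V)$ of $\Phi$ (Theorem~\ref{thm:UniquenessStinespringThm}), so that $\phi_m=(\idmat\otimes\bra{e_m})V$ for an orthonormal basis $\{\ket{e_n}\}_{n\in N}$ of $\mathcal{H}_\mathtt{E}$. Writing $W_c:=(c\otimes\idmat_\mathtt{E})V-Vc$ one has $[c,\phi_m]=(\idmat\otimes\bra{e_m})W_c$, so it suffices to factor $W_c=CV$ with $C$ in the commutant $\{c^\prime\otimes\idmat_\mathtt{E}\mid c^\prime\in\mathcal{C}\}^\prime=\mathcal{C}\mathbin{\overline{\otimes}}\blt(\mathcal{H}_\mathtt{E})$ (here $\mathcal{C}=\mathcal{C}^\prime$ is used); the matrix entries $c_{mn}:=(\idmat\otimes\bra{e_m})C(\idmat\otimes\ket{e_n})\in\mathcal{C}$ then realize \ref{cond:maximally-abelian-condition-2}. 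The estimate making Douglas' factorization lemma applicable is $W_c^\dagger W_c\le 4\norm{c}^2\,V^\dagger V$, which I would prove as follows: the support projection $e:=\mathrm{supp}\,\Phi(\idmat)\in\mathcal{C}$ is the projection onto $(\ker V)^\perp$, so $\ker V=(\idmat-e)\mathcal{H}$ is $\mathcal{C}$-invariant and $W_c$ annihilates $\ker V$; combined with $\norm{Vc\psi}^2=\braket{\psi}{c^\dagger\Phi(\idmat)c\,\psi}\le\norm{c}^2\norm{V\psi}^2$ (valid because $c$ and $\Phi(\idmat)$ lie in the abelian $\mathcal{C}$), the triangle inequality gives $\norm{W_c\psi}\le 2\norm{c}\norm{V\psi}$.

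The main obstacle is to arrange that the Douglas factor can be taken \emph{inside} $\mathcal{C}\mathbin{\overline{\otimes}}\blt(\mathcal{H}_\mathtt{E})$ rather than merely in $\blt(\mathcal{H}\otimes\mathcal{H}_\mathtt{E})$, i.e.\ an equivariant factorization: since $\overline{\mathrm{ran}\,V}$ is not $\mathcal{C}$-invariant, the canonical Douglas solution need not a priori commute with $c^\prime\otimes\idmat_\mathtt{E}$, and this is exactly where the maximal abelianness of $\mathcal{C}$ must be used decisively. Once $C=:\Gamma(c)\in\mathcal{C}\mathbin{\overline{\otimes}}\blt(\mathcal{H}_\mathtt{E})$ is secured and the assignment $c\mapsto\Gamma(c)$ is chosen linear and adjoint-compatible, condition \ref{cond:maximally-abelian-condition-1} reduces to the operator identity $\Gamma(c^\dagger)=\Gamma(c)^\dagger$, which I would verify from $V^\dagger(X\otimes\idmat_\mathtt{E})W_{c^\dagger}=\Phi(Xc^\dagger)-\Phi(X)c^\dagger$ together with minimality of $V$. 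I also expect to handle SOT-convergence of $\sum_n c_{mn}\phi_n$ when $N$ is infinite. Finally, I note that in the atomic (purely discrete, $\mathcal{C}\cong\ell^\infty(I)$ with one-dimensional summands) case the entire converse follows immediately from Theorem~\ref{theorem:cp-map-algebra-to-algebra-stinespring} applied with $\mathcal{A}=\mathcal{C}$, so the difficulty is genuinely confined to the presence of a continuous part of the maximal abelian subalgebra.
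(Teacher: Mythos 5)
Your forward direction (conditions \ref{cond:maximally-abelian-condition-1}, \ref{cond:maximally-abelian-condition-2} $\Rightarrow$ invariance) is correct, and it is in fact more explicit than the paper, which dispatches this implication in one sentence. Your reduction of invariance to $[\Phi(c),d]=0$ for all $d\in\mathcal{C}$ via $\mathcal{C}'=\mathcal{C}$, the adjoint companion $[\phi_m^\dagger,c]=\sum_n\phi_n^\dagger c_{nm}(c)$ obtained from \ref{cond:maximally-abelian-condition-1}, and the collapse of the double sum to $\sum_{n,k}\phi_n^\dagger(d_{nk}c-cd_{nk})\phi_k=0$ all check out (modulo the routine justification of rearranging the convergent double series, which you should state).

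The genuine gap is in the converse, and it is exactly the one you flag but do not close. Your Douglas estimate $W_c^\dagger W_c\le 4\norm{c}^2V^\dagger V$ is fine (it uses $\Phi(\idmat)\in\mathcal{C}$ and commutativity), but Douglas' lemma only produces \emph{some} factor $C\in\blt(\mathcal{H}\otimes\mathcal{H}_\mathtt{E})$ with $W_c=CV$; since $\overline{\mathrm{ran}\,V}$ is not invariant under $\mathcal{C}\otimes\idmat_\mathtt{E}$, nothing places the canonical factor in the commutant of $\mathcal{C}\otimes\idmat_\mathtt{E}$, and without $C\in(\mathcal{C}\otimes\idmat_\mathtt{E})'$ the matrix entries $c_{mn}$ need not lie in $\mathcal{C}$, so condition \ref{cond:maximally-abelian-condition-2} is not obtained. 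Saying that ``this is exactly where maximal abelianness must be used decisively'' names the obstruction without overcoming it; condition \ref{cond:maximally-abelian-condition-1} and the convergence of $\sum_n c_{mn}\phi_n$ are likewise deferred. You should be aware, however, that the paper does not prove the general converse either: it cites \cite{FrancoFagnola2007} for the full statement and re-derives the ``only if'' direction \emph{only for atomic} $\mathcal{C}$, via Theorem~\ref{theorem:cp-map-algebra-to-algebra-stinespring} --- precisely your closing remark. Indeed, the paper's atomic construction is nothing but the equivariant factorization you are missing: the normal form $V=\sum_{i,j}\ket{p_i}\bra{p_j}\otimes\ket{\psi_{ij}}$ with the orthogonality relations $\braket{\psi_{ij}}{\psi_{ik}}=0$ for $j\neq k$ (these come out of Theorem~\ref{theorem:cp-map-algebra-to-algebra-stinespring} and are the decisive structural input) makes the operators $C_i$, defined by $C_i\ket{\psi_{ij}}=(c_i-c_j)\ket{\psi_{ij}}$, well defined and bounded, and $C=\sum_i\ket{p_i}\bra{p_i}\otimes C_i$ is then a factor of exactly the required commuting form. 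So if you restrict your converse to atomic $\mathcal{C}$ and run it through Theorem~\ref{theorem:cp-map-algebra-to-algebra-stinespring} in this way, you recover the paper's proof; in the generality you aim for, the key step is missing both in your proposal and in the paper.
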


The ``if''-direction in Theorem~\ref{corollary:cp-map-invariant-maximal-abelian-subalgebra} can be seen using the von Neumann bicommutant theorem. 
We can recover the ``only if''-direction, albeit only for atomic $\mathcal{C}$, as a consequence of Theorem~\ref{theorem:cp-map-algebra-to-algebra-stinespring} as follows:
If $\mathcal{C}$ is a maximal abelian and atomic vN-subal\-ge\-bra of $\blt (\mathcal{H})$, then its decomposition as in Definition~\ref{Defn:NormalFormAtomicAlgebra} becomes particularly simple, with $\operatorname{dim}(\mathcal{H}_{\mathtt{A}_i})=1=\operatorname{dim}(\mathcal{H}_{\mathtt{B}_i})$ for all $i\in I$. Thus, for any $i\in I$, there exists $\ket{p_i}\in\mathcal{H}$ such that $P_i = \bra{p_i}\in\blt (\mathcal{H};\mathbb{C})$ such that $\{\ket{p_i}\}_{i \in I}$ forms an orthonormal basis of $\mathcal{H}$.
The decomposition of $V$ in Theorem~\ref{theorem:cp-map-algebra-to-algebra-stinespring} in this case simplifies to $V = \sum_{i,j\in I} \ket{p_i}\bra{p_j} \otimes\ket{\psi_{ij}}$ for some vectors $\ket{\psi_{ij}}\in \mathcal{H}_{\mathtt{E}}$ satisfying $\braket{\psi_{ij}}{\psi_{ik}}=0$ for all $i,j,k\in I$ with $j\neq k$. 
Accordingly, the Kraus operators of $\Phi$ can be written as $\phi_n = \sum_{i,j\in I} \braket{e_n}{\psi_{ij}}\ket{p_i}\bra{p_j}$ for some orthonormal basis $\{\ket{e_n}\}_{n\in N}$ of $\mathcal{H}_{\mathtt{E}}$. 

Since any $c \in \mathcal{C}$ can be decomposed as $c = \sum_{i\in I} c_i\ket{p_i}\bra{p_i}$, we obtain for any $m\in N$:
\begin{equation}\label{eq:commutator-maximally-abelian-kraus}
    [c, \phi_m]
    = \sum_{i,j\in I} (c_i - c_j)\braket{e_m}{\psi_{ij}}\ket{p_i}\bra{p_j}\, .
\end{equation}
As $\braket{\psi_{ij}}{\psi_{ik}}=0$ for $j\neq k$, we can define $C_i\in \blt (\mathcal{H}_\mathtt{E})$ by linearly extending
\begin{equation}\label{eq:defining-relation-linear-maps-maximally-abelian}
    C_i \ket{\psi_{ij}}
    = (c_i - c_j)\ket{\psi_{ij}},\quad\text{ for all } i,j\in I\, .
\end{equation}
We consider a basis expansion $C_i:= \sum_{m,n\in N} c_{mn;i}\ket{e_m}\bra{e_n}$ and define $c_{mn}=\sum_{i\in I} c_{mn;i}\ket{p_i}\bra{p_i} \in\mathcal{C}$.
Then, we get
\begin{align*}
    \sum_{n\in N} c_{mn} \phi_n
    &= \sum_{n\in N}\sum_{i,j\in I} \braket{e_n}{\psi_{ij}} c_{mn;i}\ket{p_i}\bra{p_j}\\
    &= \sum_{i,j\in I}\ket{p_i}\bra{p_j}\cdot \underbrace{\sum_{n\in N}\braket{e_n}{\psi_{ij}} c_{mn;i}}_{=\bra{e_m}C_i\ket{\psi_{ij}} \overset{\eqref{eq:defining-relation-linear-maps-maximally-abelian}}{=} (c_i - c_j)\braket{e_m}{\psi_{ij}}}\\
    &= \sum_{i,j\in I} (c_i - c_j)\braket{e_m}{\psi_{ij}}\ket{p_i}\bra{p_j}\, ,
\end{align*}
which equals $[c, \phi_m]$ by Eq.~\eqref{eq:commutator-maximally-abelian-kraus}.
Thus,~\ref{cond:maximally-abelian-condition-2} is satisfied. Also~\ref{cond:maximally-abelian-condition-1} holds, since the replacement $c\mapsto c^\dagger$ in the above reasoning leads to $c_i\mapsto \Bar{c}_i$, which in turn gives $C_i\mapsto C_i^\dagger$, finally implying $c_{mn;i}\mapsto \Bar{c}_{nm;i}$ and thus $c_{mn}\mapsto c_{nm}^\dagger$.

We note that our Theorem~\ref{theorem:cp-map-algebra-to-algebra-stinespring} not only gives rise to Theorem~\ref{corollary:cp-map-invariant-maximal-abelian-subalgebra}, but also provides a concrete characterization of the most general CP-maps satisfying the criterion of Theorem~\ref{corollary:cp-map-invariant-maximal-abelian-subalgebra} for the atomic case. 
Similarly, we can reproduce and concretize~\cite[Theorem $1.2$]{rajarama2008maximal} for atomic $\mathcal{C}$.

\subsection{Completely positive and trace-preserving maps with fixed points}

In this section, we look at the Koashi-Imoto Theorem \cite[Eq.~(85) and Theorem 3]{PhysRevA.66.022318} and restrict ourselves to finite-dimensional systems. 
The Koashi-Imoto Theorem characterizes the form of CP-maps $T \in \CP(\mathcal{H})$ that are trace-preserving and state-preserving, in the sense that $T(\rho_s) = \rho_s$ for some set of density matrices $\{\rho_s\}_{s \in S}$. 
If $T(\rho) = \ptr{\mathtt{E}}{V\rho V^\dagger}$, with $V \in \blt(\mathcal{H}; \mathcal{H} \otimes \mathcal{H}_{\mathtt{E}})$; $\mathcal{F}_T$ is the set of fixed points of $T$; $\mathcal{H}_{\mathcal{F}_T} := \cup_{\rho \in \mathcal{F}_T} \mathrm{supp}(\rho) \subseteq \mathcal{H}$ is the support of a maximal rank fixed point; and $Q : \mathcal{H} \rightarrow \mathcal{H}_{\mathcal{F}_T}$ is the corresponding projection, then the Koashi-Imoto Theorem states that there is a decomposition of $\mathcal{H}_{\mathcal{F}_T} = U_{\mathcal{F}_{\tilde{T}^*}} (\bigoplus_{i \in I} (\mathcal{H}_{\mathtt{A}_i} \otimes \mathcal{H}_{\mathtt{B}_i}))$ for some unitary $U_{\mathcal{F}_{\tilde{T}^*}}$ (the notation will soon make sense) and an index set $I$ such that \begin{align} \label{eq:mainMessageJapaneseTheorem}
    \hat{V} =  \bigoplus_{i \in I} (\idmat_{\mathtt{A}_i} \otimes V_i)\quad \text{and} \quad  \mathcal{F}_{T} = Q^\dagger U_{\mathcal{F}_{\tilde{T}^*}}\bigoplus_{i \in I} (\blt(\mathcal{H}_{\mathtt{A}_i}) \otimes \sigma_i) U_{\mathcal{F}_{\tilde{T}^*}}^\dagger Q, 
\end{align}
where $\hat{V} = (U_{\mathcal{F}_{\tilde{T}^*}} \otimes \idmat_\mathtt{E})\left[(Q \otimes \idmat_\mathtt{E})V Q^\dagger\right] U_{\mathcal{F}_{\tilde{T}^*}}^\dagger$, all $V_i \in \blt(\mathcal{H}_{\mathtt{B}_i}; \mathcal{H}_{\mathtt{B}_i} \otimes \mathcal{H}_\mathtt{E})$ are isometries and all $\sigma_i \in \blt(\mathcal{H}_{\mathtt{B}_i})$ are density matrices. 

We can reproduce this result as follows: By elementary considerations (exploiting the positivity of $T$), the map $\tilde{T} : \blt(\mathcal{H}_{\mathcal{F}_T}) \rightarrow \blt(\mathcal{H}_{\mathcal{F}_T})$, $\tilde{T}(X) = QT(Q^\dagger X Q)Q^\dagger$ is a trace-preserving CP-map (see \cite[Proof of Lemma 6.4]{WolfQuantumChannels} for details), which, by construction, has a full-rank fixed-point. Since Lindblad \cite[Section 3]{lindblad1999general} we know that if $\tilde{T}$ has a full rank fixed-point, then the set of fixed points of the (unital) dual map $\tilde{T}^*$ forms a vN-algebra, $\mathcal{F}_{\tilde{T}^*}$. As $\mathrm{dim}(\mathcal{H}) < \infty$, $\mathcal{F}_{\tilde{T}^*}$ it is atomic and can be decomposed according to Definition \ref{Defn:NormalFormAtomicAlgebra}. Moreover, $\tilde{T}^*(X) = W^\dagger (X \otimes \idmat_\mathtt{E}) W$ with $W = (Q \otimes \idmat_\mathtt{E})V Q^\dagger$ and $\tilde{T}^*(\mathcal{F}_{\tilde{T}^*}) \subseteq \mathcal{F}_{\tilde{T}^*}$. Hence, $W$ decomposes according to Theorem \ref{theorem:cp-map-algebra-to-algebra-stinespring}. This implies $P_i\tilde{T}^*(P_i^\dagger (X_{\mathtt{A}_i} \otimes \idmat_{\mathtt{B}_i})P_i)P_i^\dagger = \left[A_{ii}^\dagger (X_{\mathtt{A}_i} \otimes \idmat_{\mathtt{F}_{ii}})A_{ii}\right] \otimes \idmat_{\mathtt{B}_i} = X_{\mathtt{A}_i} \otimes \idmat_{\mathtt{B}_i}$ for $i \in I$ and $X_{\mathtt{A}_i} \in \blt(\mathcal{H}_{\mathtt{A}_i})$. Since we can choose $A_{ii}$ minimal, this implies $\mathcal{H}_{\mathtt{F}_{ii}} = \mathbb{C}$ and $A_{ii} = \idmat_{\mathtt{A}_i}$. For $i, j \in I$ with $i \neq j$, we have $P_j\tilde{T}^*(P_i^\dagger (X_{\mathtt{A}_i} \otimes \idmat_{\mathtt{B}_i})P_i)P_j^\dagger = \left[A_{ij}^\dagger (X_{\mathtt{A}_i} \otimes \idmat_{\mathtt{F}_{ij}})A_{ij}\right] \otimes \idmat_{\mathtt{B}_i} = 0$ for all $X_{\mathtt{A}_i} \in \blt(\mathcal{H}_{\mathtt{A}_i})$. Thus $A_{ij} = 0$. In conclusion, we have $W = \sum_{i \in I} P_i^\dagger (\idmat_{\mathtt{A}_i} \otimes U_{ii}) P_i$, with $P_i$ given by Definition \ref{Defn:NormalFormAtomicAlgebra}.
From the definition of $P_i$ this is exactly the first part of Eq.~\eqref{eq:mainMessageJapaneseTheorem}, if we identify $V_i$ with $U_{ii}$. 

For the second part, first note that by construction, $\mathcal{F}_{T} = Q^\dagger \mathcal{F}_{\tilde{T}}Q$. By the Brouwer fixed-point theorem, there exists a density matrix $\sigma_i$ such that $\ptr{\mathtt{E}}{V_i \sigma_i V_i^\dagger} = \sigma_i$. With that choice, it is easy to see that operators of the form of the second part in Eq.~\eqref{eq:mainMessageJapaneseTheorem} are fixed-points. 
But since, as a general property of linear maps on finite-dimensional spaces, the dimension of the fixed-point space of $\tilde{T}$ equals the dimension the fixed-point space of $\tilde{T}^*$, the claim follows. Thus we have arrived at the Koashi-Imoto Theorem.

\section{Conclusion} \label{ConclusionSection}

In this work, we have fully characterized the generators of quantum dynamical semigroups with an invariant vN-subalgebra. We have provided a constructive normal form for such restricted GKLS-generators and determined the freedom in their representation. In particular, these results encompass corresponding characterizations for CP-maps with invariant vN-subalgebras.

The assumption of an invariant atomic vN-subalgebra implies that the restriction of the quantum dynamical semigroup to that subalgebra is again a valid quantum dynamical semigroup. This means that we can also interpret Theorem~\ref{THM:AtomicLindbladNormalForm} as providing, given a GKLS generator on a vN-subalgebra, a complete characterization of the possible extensions to a GKLS generator on $\blt(\mathcal{H})$.
In particular, Theorem~\ref{theorem:cp-map-algebra-to-algebra-stinespring} can be regarded as a constructive version of Arveson's extension theorem~\cite{arveson1969subalgebras, paulsen2002completely}, describing the most general CP extension on $\blt (\mathcal{H})$ of a given CP-map defined on a vN-subalgebra, if that subalgebra is atomic.

As demonstrated in Section~\ref{applicationSection}, our characterization of GKLS-generators with an invariant vN-subalgebra provides a unifying perspective on the results of different prior works. 
We expect that this point of view can be useful for further scenarios, such as the study of dynamical semigroups of higher-order quantum maps~\cite{Chiribella_2008, bisio2019theoretical}, generalizing dynamical semigroups of quantum superchannels~\cite{hasenoehrl2021quantum}.

\paragraph{Acknowledgements:} M.H. and M.C.C. thank Michael M. Wolf, Andreas Bluhm, Li Gao, and Zahra Baghali Khanian for insightful and encouraging discussions.  
\newpage

\bibliography{references}{}
\bibliographystyle{plain}

\newpage
\appendix

\section{Auxiliary Lemma}

In this appendix, we provide a full statement and a complete proof of a lemma useful in proving Theorem~\ref{THM:AtomicLindbladNormalForm}.
Here, we use the notation from Section \ref{Defn:NormalFormAtomicAlgebra}.
\begin{theorem}{Lemma} \label{lem:HalfCommutingBsLemma}
Let $\mathcal{A} \subseteq \blt(\mathcal{H})$ be an atomic weakly closed *-algebra. An operator $B \in \blt(\mathcal{H} \otimes \mathcal{H}_{\tilde{\mathtt{E}}}; \mathcal{H} \otimes \mathcal{H}_\mathtt{E})$ satisfies $(X_\mathcal{A} \otimes \idmat_{\mathtt{E}})B = B (X_\mathcal{A} \otimes \idmat_{\tilde{\mathtt{E}}})$ for all $X_\mathcal{A} \in \mathcal{A}$ if and only if there exist $B_0 \in \blt(\mathcal{H}_0 \otimes \mathcal{H}_{\tilde{\mathtt{E}}}; \mathcal{H}_0 \otimes \mathcal{H}_\mathtt{E})$ and for every $i \in I$ an operator $B_i \in \blt(\mathcal{H}_{\mathtt{B}_i} \otimes \mathcal{H}_{\tilde{\mathtt{E}}}; \mathcal{H}_{\mathtt{B}_i} \otimes \mathcal{H}_\mathtt{E})$ such that $B$ is given by the SOT-convergent series 
\begin{equation}
    B = (P_0^\dagger \otimes \idmat_\mathtt{E}) B_0 (P_0 \otimes \idmat_{\tilde{\mathtt{E}}}) + \sum_{i \in I} (P_i^\dagger \otimes \idmat_\mathtt{E}) (\idmat_{\mathtt{A}_i} \otimes B_i) (P_i \otimes \idmat_{\tilde{\mathtt{E}}}).\nonumber
\end{equation}
\end{theorem}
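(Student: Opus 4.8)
The plan is to prove both implications by exploiting the block structure of $\mathcal{A}$ encoded in the projections $P_k$, $k \in I \cup \{0\}$, together with the relations $P_k X_\mathcal{A} = (X_{\mathtt{A}_k} \otimes \idmat_{\mathtt{B}_k}) P_k$ for $k \in I$ and $P_0 X_\mathcal{A} = 0$ (and their adjoints $X_\mathcal{A} P_0^\dagger = 0$), valid for every $X_\mathcal{A} \in \mathcal{A}$ with components $X_{\mathtt{A}_i}$. The ``if'' direction is then a direct computation: substituting the proposed series into $(X_\mathcal{A} \otimes \idmat_\mathtt{E}) B$ and into $B(X_\mathcal{A} \otimes \idmat_{\tilde{\mathtt{E}}})$, I move $X_\mathcal{A}$ past the projections. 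The $B_0$-term is annihilated on both sides because $X_\mathcal{A} P_0^\dagger = 0 = P_0 X_\mathcal{A}$, while for the $i$-th term the factor $X_{\mathtt{A}_i}$ acts only on $\mathcal{H}_{\mathtt{A}_i}$ and hence commutes with $\idmat_{\mathtt{A}_i} \otimes B_i$; both sides then visibly agree.

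For the ``only if'' direction I first insert the SOT-convergent resolution $\idmat_\mathcal{H} = \sum_{k} P_k^\dagger P_k$ on either side of $B$ and set $B_{kl} := (P_k \otimes \idmat_\mathtt{E}) B (P_l^\dagger \otimes \idmat_{\tilde{\mathtt{E}}})$, so that $B = \sum_{k,l}(P_k^\dagger \otimes \idmat_\mathtt{E}) B_{kl}(P_l \otimes \idmat_{\tilde{\mathtt{E}}})$, the convergence being guaranteed by the separate SOT-continuity of left and right multiplication by the fixed bounded operator $B$. Compressing the intertwining identity $(X_\mathcal{A}\otimes\idmat_\mathtt{E})B = B(X_\mathcal{A}\otimes\idmat_{\tilde{\mathtt{E}}})$ by $P_k \otimes \idmat_\mathtt{E}$ on the left and $P_l^\dagger \otimes \idmat_{\tilde{\mathtt{E}}}$ on the right, and using the projection relations, turns it for $k, l \in I$ into
\begin{equation*}
    (X_{\mathtt{A}_k} \otimes \idmat_{\mathtt{B}_k} \otimes \idmat_\mathtt{E}) B_{kl} = B_{kl}(X_{\mathtt{A}_l} \otimes \idmat_{\mathtt{B}_l} \otimes \idmat_{\tilde{\mathtt{E}}}),
\end{equation*}
while every block carrying the index $0$ satisfies the same identity with one side vanishing (since $P_0 X_\mathcal{A} = 0 = X_\mathcal{A} P_0^\dagger$). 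Choosing $X_\mathcal{A}$ supported on a single block --- e.g.\ $X_{\mathtt{A}_k} = \idmat_{\mathtt{A}_k}$ with all other components zero, which lies in $\mathcal{A}$ since $\sup_i\norm{X_{\mathtt{A}_i}}<\infty$ --- immediately forces $B_{kl} = 0$ whenever $k \neq l$, and likewise $B_{0l} = B_{k0} = 0$ for $k, l \in I$; the block $B_{00} =: B_0$ is left unconstrained.

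It remains to treat the diagonal blocks $B_{ii}$ for $i \in I$, which satisfy $(X_{\mathtt{A}_i} \otimes \idmat) B_{ii} = B_{ii}(X_{\mathtt{A}_i} \otimes \idmat)$ for all $X_{\mathtt{A}_i} \in \blt(\mathcal{H}_{\mathtt{A}_i})$; here $X_{\mathtt{A}_i}\otimes\idmat$ stands for $X_{\mathtt{A}_i}\otimes\idmat_{\mathtt{B}_i}\otimes\idmat_\mathtt{E}$ on the target and $X_{\mathtt{A}_i}\otimes\idmat_{\mathtt{B}_i}\otimes\idmat_{\tilde{\mathtt{E}}}$ on the source. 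This says precisely that $B_{ii}$ intertwines the (irreducible) amplified representation $X \mapsto X \otimes \idmat$ of $\blt(\mathcal{H}_{\mathtt{A}_i})$ on $\mathcal{H}_{\mathtt{A}_i} \otimes (\mathcal{H}_{\mathtt{B}_i} \otimes \mathcal{H}_{\tilde{\mathtt{E}}})$ with the one on $\mathcal{H}_{\mathtt{A}_i} \otimes (\mathcal{H}_{\mathtt{B}_i} \otimes \mathcal{H}_\mathtt{E})$, and the conclusion I want is $B_{ii} = \idmat_{\mathtt{A}_i} \otimes B_i$ for some $B_i \in \blt(\mathcal{H}_{\mathtt{B}_i} \otimes \mathcal{H}_{\tilde{\mathtt{E}}}; \mathcal{H}_{\mathtt{B}_i} \otimes \mathcal{H}_\mathtt{E})$. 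I expect this step --- the infinite-dimensional intertwiner form of the commutant identity $(\blt(\mathcal{H}_{\mathtt{A}_i}) \otimes \idmat)' = \idmat_{\mathtt{A}_i} \otimes \blt(\cdot)$ --- to be the one genuine obstacle. I would establish it directly and elementarily: fixing an orthonormal basis $\{\ket{a}\}$ of $\mathcal{H}_{\mathtt{A}_i}$ with a distinguished vector $\ket{a_0}$, I set $B_i := (\bra{a_0} \otimes \idmat) B_{ii} (\ket{a_0} \otimes \idmat)$ and test the intertwining relation against the matrix units $X_{\mathtt{A}_i} = \ket{a_0}\bra{a'}$, which yields $(\bra{a'} \otimes \idmat) B_{ii}(\ket{a''} \otimes \idmat) = \delta_{a'a''}\, B_i$ and hence $B_{ii} = \idmat_{\mathtt{A}_i} \otimes B_i$. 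Substituting the surviving blocks $B_0$ and $\idmat_{\mathtt{A}_i}\otimes B_i$ back into the series for $B$ produces exactly the claimed normal form, and the SOT-convergence argument used for the resolution of the identity guarantees convergence of the resulting series.
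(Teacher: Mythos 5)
Your proof is correct, but it takes a route that is essentially the transpose of the paper's. The paper slices $B$ along the \emph{environment}: it fixes orthonormal bases $\{\ket{e_n}\}$ of $\mathcal{H}_\mathtt{E}$ and $\{\ket{\tilde{e}_m}\}$ of $\mathcal{H}_{\tilde{\mathtt{E}}}$, forms the square operators $\beta_{nm} = (\idmat_\mathcal{H}\otimes\bra{e_n})B(\idmat_\mathcal{H}\otimes\ket{\tilde{e}_m})$, observes that the intertwining relation forces every $\beta_{nm}$ into $\mathcal{A}^\prime$, and then simply invokes the already-stated normal form of the commutant of an atomic algebra, Eq.~\eqref{eq:FormOfAtomicCommutant}, before reassembling the components $\beta_{nm;0}$ and $\beta_{nm;i}$ into $B_0$ and the $B_i$. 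You instead slice $B$ along the \emph{central projections} $P_k$ of $\mathcal{A}$, kill the off-diagonal and index-$0$ mixed blocks by testing against block-supported elements of $\mathcal{A}$, and then prove the one structural fact you need --- that an operator intertwining the two amplifications $X\mapsto X\otimes\idmat$ of $\blt(\mathcal{H}_{\mathtt{A}_i})$ must be of the form $\idmat_{\mathtt{A}_i}\otimes B_i$ --- by hand, via matrix units. The trade-off: the paper's argument is shorter because all structural content is delegated to the cited commutant theorem (which also silently reduces the rectangular, two-environment intertwiner problem to the square commutant case), and it needs no bookkeeping about vanishing blocks or double-series convergence; your argument is self-contained, re-deriving exactly the special case of that theorem it needs and handling the rectangular situation directly. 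Your convergence and boundedness claims also hold up: the $B_i$ are compressions of $B$, hence uniformly bounded, and once the off-diagonal blocks are shown to vanish, the double series collapses to the diagonal one, which converges in the strong operator topology.
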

\begin{proof}
We fix orthonormal bases $\{\ket{e_n}\}_{n \in N}$ and $\{\ket{\tilde{e}_m}\}_{m \in M}$ of $\mathcal{H}_\mathtt{E}$ and $\mathcal{H}_{\tilde{\mathtt{E}}}$, respectively. Since $(X_\mathcal{A} \otimes \idmat_\mathtt{E}) B = B(X_\mathcal{A} \otimes \idmat_{\tilde{\mathtt{E}}})$ for all $X_\mathcal{A} \in \mathcal{A}$, it follows that the operators $\beta_{n m} := (\idmat_\mathcal{H} \otimes \bra{e_n})B(\idmat_\mathcal{H} \otimes \ket{\tilde{e}_m})$ belong to $\mathcal{A}^\prime$. Thus (following the discussion around Eq.~\eqref{eq:FormOfAtomicCommutant}), there are operators $\beta_{n m;0} \in \blt(\mathcal{H}_0)$ and $\beta_{n m;i} \in \blt(\mathcal{H}_{\mathtt{B}_i})$ such that $\beta_{n m} = P_0^\dagger \beta_{n m;0} P_0 + \sum_{i \in I} P_i^\dagger (\idmat_{\mathtt{A}_i} \otimes \beta_{n m;i}) P_i$. We define $B_0 = \sum_{n \in N, m \in M} (\idmat_{0} \otimes \ket{e_n})\beta_{n m;0}(\idmat_{0} \otimes \bra{\tilde{e}_m}) \in \blt(\mathcal{H}_0 \otimes \mathcal{H}_{\tilde{\mathtt{E}}}; \mathcal{H}_0 \otimes \mathcal{H}_\mathtt{E})$ and for all $i \in I$ the operator $B_i = \sum_{n \in N, m \in M} (\idmat_{\mathtt{B}_i} \otimes \ket{e_n})\beta_{n m;i}(\idmat_{\mathtt{B}_i} \otimes \bra{\tilde{e}_m}) \in \blt(\mathcal{H}_{\mathtt{B}_i} \otimes \mathcal{H}_{\tilde{\mathtt{E}}}; \mathcal{H}_{\mathtt{B}_i} \otimes \mathcal{H}_\mathtt{E})$. We then have
\begin{align*}
    B &= \sum_{n \in N, m \in M } (\idmat_{\mathcal{H}} \otimes \ket{e_n}) \beta_{n m} (\idmat_{\mathcal{H}} \otimes \bra{\tilde{e}_m})\\
    &= \sum_{n \in N, m \in M} (\idmat_{\mathcal{H}} \otimes \ket{e_n}) P_0^\dagger \beta_{n m;0} P_0 (\idmat_{\mathcal{H}} \otimes \bra{\tilde{e}_m}) \\
    &\;\qquad+  \sum_{\substack{n \in N, m \in M,\\i \in I}} (\idmat_{\mathcal{H}} \otimes \ket{e_n}) P_i^\dagger (\idmat_{\mathtt{A}_i} \otimes \beta_{n;i}) P_i(\idmat_{\mathcal{H}} \otimes \bra{\tilde{e}_m}) \\
    &= \sum_{n \in N, m \in M} (P_0^\dagger \otimes \idmat_\mathtt{E}) (\idmat_{0} \otimes \ket{e_n}) \beta_{n m;0} (\idmat_{0} \otimes \bra{\tilde{e}_m}) (P_0 \otimes \idmat_{\tilde{\mathtt{E}}}) \\
    &\;\qquad+ \sum_{\substack{n \in N, m \in M,\\ i \in I}} (P_i^\dagger \otimes \idmat_\mathtt{E}) \left(\idmat_{\mathtt{A}_i} \otimes \left[(\idmat_{\mathtt{B}_i} \otimes \ket{e_n}) \beta_{nm;i}(\idmat_{\mathtt{B}_i} \otimes \bra{\tilde{e}_m})\right]\right) (P_i \otimes \idmat_{\tilde{\mathtt{E}}}) \\ 
    &= (P_0^\dagger \otimes \idmat_\mathtt{E}) B_0 (P_0 \otimes \idmat_{\tilde{\mathtt{E}}}) + \sum_{i \in I} (P_i^\dagger \otimes \idmat_\mathtt{E}) (\idmat_{\mathtt{A}_i} \otimes B_i) (P_i \otimes \idmat_{\tilde{\mathtt{E}}}).
\end{align*}
%\textcolor{red}{MCC: If we want, I think we could shorten this chain of equalities a bit. E.g., we could write only the first two $=$'s and then explain that, because $(\idmat_{\mathcal{H}} \otimes \ket{e_n}) P_0^\dagger = (P_0^\dagger \otimes \idmat_\mathtt{E}) (\idmat_{0} \otimes \ket{e_n})$ etc. and by definition of $B_0$ and $B_i$, this is the claimed result.}
This is the claimed result. 
\end{proof}

\section{Proof of Theorem \ref{thm:GeneralUniquenessTheorem}} \label{ProofUniquenessAppendix}

\begin{theorem}{Lemma} \label{Lem:KFreedom}
Two operators $K, \tilde{K} \in \blt(\mathcal{H}_\mathtt{A} \otimes \mathcal{H}_\mathtt{B})$ satisfy a) $(X_\mathtt{A} \otimes \idmat_\mathtt{B})K^\dagger + K(X_\mathtt{A} \otimes \idmat_\mathtt{B}) = (X_\mathtt{A} \otimes \idmat_\mathtt{B})\tilde{K}^\dagger + \tilde{K}(X_\mathtt{A} \otimes \idmat_\mathtt{B})$ for all $X_\mathtt{A} \in \blt(\mathcal{H}_\mathtt{A})$ if and only if there exists a self-adjoint $H_\mathtt{B} \in \blt(\mathcal{H}_\mathtt{B})$ such that b) $\tilde{K} = K + \idmat_\mathtt{A} \otimes iH_\mathtt{B}$. 
\end{theorem}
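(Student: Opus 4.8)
The plan is to study the single difference operator $D := \tilde{K} - K$, since condition a) involves $K$ and $\tilde{K}$ only through their difference. Subtracting the two sides of a), the condition is seen to be equivalent to the single relation $(X_\mathtt{A} \otimes \idmat_\mathtt{B})D^\dagger + D(X_\mathtt{A} \otimes \idmat_\mathtt{B}) = 0$ for all $X_\mathtt{A} \in \blt(\mathcal{H}_\mathtt{A})$, whereas condition b) is precisely $D = \idmat_\mathtt{A} \otimes iH_\mathtt{B}$ for some self-adjoint $H_\mathtt{B}$. The implication b) $\implies$ a) is then a one-line verification: if $D = \idmat_\mathtt{A} \otimes iH_\mathtt{B}$ with $H_\mathtt{B} = H_\mathtt{B}^\dagger$, then $D^\dagger = -D$, and the two terms $(X_\mathtt{A} \otimes \idmat_\mathtt{B})D^\dagger$ and $D(X_\mathtt{A} \otimes \idmat_\mathtt{B})$ both reduce to $\pm(X_\mathtt{A} \otimes iH_\mathtt{B})$ and cancel.

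For the substantive direction a) $\implies$ b), I would first specialize to $X_\mathtt{A} = \idmat_\mathtt{A}$, which immediately yields $D^\dagger + D = 0$, i.e. $D$ is skew-adjoint. Feeding $D^\dagger = -D$ back into the general relation converts the anticommutator-type identity into the commutator identity $D(X_\mathtt{A} \otimes \idmat_\mathtt{B}) = (X_\mathtt{A} \otimes \idmat_\mathtt{B})D$ for all $X_\mathtt{A} \in \blt(\mathcal{H}_\mathtt{A})$. In other words, $D$ lies in the commutant of $\blt(\mathcal{H}_\mathtt{A}) \otimes \idmat_\mathtt{B}$.

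The key structural input is then the identification of that commutant: the bounded operators commuting with every $X_\mathtt{A} \otimes \idmat_\mathtt{B}$ are exactly those of the form $\idmat_\mathtt{A} \otimes M$ with $M \in \blt(\mathcal{H}_\mathtt{B})$. This is the single-factor special case of the commutant description recorded around Eq.~\eqref{eq:FormOfAtomicCommutant} (equivalently, the tensor-product commutation theorem). Applying it gives $D = \idmat_\mathtt{A} \otimes M$, and the skew-adjointness established above forces $M^\dagger = -M$. Writing $M = iH_\mathtt{B}$ with $H_\mathtt{B} := -iM$, one checks $H_\mathtt{B}^\dagger = H_\mathtt{B}$, so $\tilde{K} = K + D = K + \idmat_\mathtt{A} \otimes iH_\mathtt{B}$, which is exactly b).

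The only nontrivial ingredient is the commutant computation; everything else is elementary algebra, so I expect no real obstacle, especially as this fact is standard and is precisely the special case of the atomic-commutant structure already invoked in the main text. The one point requiring a little care in the infinite-dimensional setting is that the commutant statement concerns \emph{all} bounded operators commuting with the whole algebra $\blt(\mathcal{H}_\mathtt{A}) \otimes \idmat_\mathtt{B}$, so I would be sure to use the commutation relation for every $X_\mathtt{A}$ before invoking it.
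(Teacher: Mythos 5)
Your proof is correct and follows essentially the same route as the paper: reduce condition a) to a commutation relation by first plugging in $X_\mathtt{A} = \idmat_\mathtt{A}$, then invoke the commutant identification $(\blt(\mathcal{H}_\mathtt{A}) \otimes \idmat_\mathtt{B})^\prime = \idmat_\mathtt{A} \otimes \blt(\mathcal{H}_\mathtt{B})$. The only cosmetic difference is that you work with the difference $D = \tilde{K} - K$ and its skew-adjointness, whereas the paper splits $K$ and $\tilde{K}$ into self-adjoint real and imaginary parts and compares those; the substance is identical.
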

\begin{proof}
If $\tilde{K} = K + \idmat_\mathtt{A} \otimes iH_\mathtt{B}$, then $a)$ holds trivially.  
For the converse, decompose $K$  and $\tilde{K}$ into real and imaginary part as $K = R + iH$ and $\tilde{K} = \tilde{R} + i\tilde{H}$. By choosing $X_\mathtt{A} = \idmat_\mathtt{A}$, we obtain $\tilde{R} = R$. The relation $a)$ the simplifies to $iH (X_\mathtt{A} \otimes \idmat_\mathtt{B}) - (X_\mathtt{A} \otimes \idmat_\mathtt{B})iH = i\tilde{H} (X_\mathtt{A} \otimes \idmat_\mathtt{B}) - (X_\mathtt{A} \otimes \idmat_\mathtt{B})i\tilde{H}$. In terms of the commutator, this reads $\left[i(H - \tilde{H}), X_\mathtt{A} \otimes \idmat_\mathtt{B} \right] = 0$ for all $X_\mathtt{A} \in \blt(\mathcal{H}_\mathtt{A})$. Thus $i(H - \tilde{H})$ is in the commutant of $\blt(\mathcal{H}_\mathtt{A}) \otimes \idmat_\mathtt{B}$, which is $\idmat_\mathtt{A} \otimes \blt(\mathcal{H}_\mathtt{B})$. Hence, there exists a self-adjoint $H_\mathtt{B} \in \blt(\mathcal{H}_\mathtt{B})$ such that $\tilde{H} = H + \idmat_\mathtt{A} \otimes H_\mathtt{B}$. With $\tilde{K} = R + i\tilde{H}$, the claim follows. 
\end{proof}

We provide a proof of Theorem \ref{thm:GeneralUniquenessTheorem}. The proof follows Chapter 30 in \cite{k1992introduction}, in particular the proof of Proposition 30.14 therein.

\begin{proof}
For any triplets $(\mathcal{H}_\mathtt{E}, V, K)$ and $(\mathcal{H}_{\tilde{\mathtt{E}}}, \tilde{V}, \tilde{K})$, we introduce the shorthand $\pi(X) := (X \otimes \idmat_\mathtt{E})V - VX$ and $\tilde{\pi}(X) := (X \otimes \idmat_{\tilde{\mathtt{E}}})\tilde{V} - \tilde{V}X$. For the triplet $(\mathcal{H}_{\tilde{\mathtt{E}}}, \tilde{V}, \tilde{K})$ given in the statement of the theorem, the space $\tilde{S} := \mathrm{span}\{ \tilde{\pi}(X)\ket{\psi} \, | \, X \in \blt(\mathcal{H}), \ket{\psi} \in \mathcal{H} \}$ is invariant under the action of $Y \otimes \idmat_{\tilde{\mathtt{E}}}$ for all $Y \in \blt(\mathcal{H})$, since for any $\tilde{S} \ni \ket{\phi} = \sum_i \tilde{\pi}(X_i)\ket{\psi_i}$, we have 
\begin{align*}
    (Y \otimes \idmat_{\tilde{\mathtt{E}}})\ket{\phi} &= \sum_i (Y \otimes \idmat_{\tilde{\mathtt{E}}}) \tilde{\pi}(X_i)\ket{\psi_i} \\
    &= \sum_i \underbrace{\tilde{\pi}(YX_i)\ket{\psi_i}}_{\in \tilde{S}} - \underbrace{\tilde{\pi}(Y)X_i\ket{\psi_i}}_{\in \tilde{S}} \in \tilde{S}. 
\end{align*}
Thus, the closure of $\tilde{S}$ is of the form $\mathcal{H} \otimes \mathcal{H}_\mathtt{E}$, for some subspace $\mathcal{H}_\mathtt{E} \subseteq \mathcal{H}_{\tilde{\mathtt{E}}}$. Denote by $P \in \blt(\mathcal{H}_{\tilde{\mathtt{E}}}; \mathcal{H}_{\mathtt{E}})$ the associated orthogonal projection onto $\mathcal{H}_{\mathtt{E}}$. We define $V := (\idmat \otimes P)\tilde{V}$. By construction, $V$ satisfies \ref{eq:GKLSMinimality}. So, to prove the first part of the theorem, it remains to construct a suitable $K$. Since $\idmat \otimes P^\dagger P$ is the projection onto the closure of $\tilde{S}$, we obtain
\begin{align*}
    (X \otimes \idmat_{\tilde{\mathtt{E}}})\tilde{V} - \tilde{V}X &= (\idmat \otimes P^\dagger P) ((X \otimes \idmat_{\tilde{\mathtt{E}}})\tilde{V} - \tilde{V}X) \\&= (\idmat \otimes P^\dagger) ((X \otimes \idmat_{\mathtt{E}})V - VX)
\end{align*}
for all $X \in \blt(\mathcal{H})$. A rearrangement yields
\begin{align*}
    (X \otimes \idmat_{\tilde{\mathtt{E}}})(\tilde{V} - (\idmat \otimes P^\dagger)V) = (\tilde{V} - (\idmat \otimes P^\dagger)V)X
\end{align*}
By Lemma \ref{lem:HalfCommutingBsLemma} (with $\mathcal{A} = \blt(\mathcal{H}$)), this implies that there exists $\ket{\tilde{\phi}} \in \mathcal{H}_{\tilde{\mathtt{E}}}$ such that $\tilde{V} - (\idmat \otimes P^\dagger)V = \idmat \otimes \ket{\tilde{\phi}}$. We define $K := \tilde{K} - (\idmat \otimes \bra{\tilde{\phi}}) \tilde{V} + \frac{1}{2} \Vert\tilde{\phi}\Vert^2$, and notice that $PP^\dagger = \idmat_{\mathtt{E}}$. Thus,
\begin{align*}
    V^\dagger (X \otimes \idmat_\mathtt{E}) V &= \left[(\idmat \otimes P^\dagger)V\right]^\dagger (X \otimes \idmat_{\tilde{\mathtt{E}}}) \left[(\idmat \otimes P^\dagger)V\right] \\&= \left[\tilde{V} - \idmat \otimes \ket{\tilde{\phi}}\right]^\dagger (X \otimes \idmat_{\tilde{\mathtt{E}}}) \left[\tilde{V} - \idmat \otimes \ket{\tilde{\phi}} \right].
\end{align*}
From here, it is easy to see that $V^\dagger (X \otimes \idmat_\mathtt{E}) V -K^\dagger X - XK = \tilde{V}^\dagger (X \otimes \idmat_{\tilde{\mathtt{E}}}) \tilde{V} -\tilde{K}^\dagger X - X\tilde{K} = L(X)$ for all $X \in \blt(\mathcal{H})$. This is our first claim.

For the second claim suppose that for $(\mathcal{H}_\mathtt{E}, V, K)$, \ref{eq:GKLSUniquenessStatement} and \ref{eq:GKLSMinimality} are satisfied, and that for  $(\mathcal{H}_{\tilde{\mathtt{E}}}, \tilde{V}, \tilde{K})$, \ref{eq:GKLSUniquenessStatement} is satisfied. A direct calculation reveals that
\begin{align} \label{eq:TildeNotTildeEquivalence}
    \Psi(X, Y) &:= L(X^\dagger Y) - X^\dagger L(Y) - L(X^\dagger)Y + X^\dagger L(\idmat) Y \nonumber\\
    &= \pi(X)^\dagger \pi(Y) = \tilde{\pi}(X)^\dagger \tilde{\pi}(Y),
\end{align}
for all $X, Y \in \blt(\mathcal{H})$.
On $S := \mathrm{span}\{\pi(X)\ket{\psi}\, | \, X \in \blt(\mathcal{H}), \ket{\psi} \in \mathcal{H} \}$, we define a map $W_0$ by linear extension of the relation $W_0 \pi(X) \ket{\psi} := \tilde{\pi}(X) \ket{\psi}$. This is well-defined, since if $\sum_i \pi(X_i)\ket{\psi_i} = \sum_j \pi(Y_j)\ket{\psi_j}$% are two ways to represent the same vector
, then 
\begin{align*}
    &\Vert \sum_i \tilde{\pi}(X_i)\ket{\psi_i}  - \sum_j \tilde{\pi}(Y_j)\ket{\psi_j} \Vert^2 \\
    &= \sum_{i, i^\prime} \braket{\psi_i}{\tilde{\pi}(X_i)^\dagger\tilde{\pi}(X_{i^\prime})  \psi_{i^\prime}} + 
    \sum_{i, j^\prime} \braket{\psi_i}{\tilde{\pi}(X_i)^\dagger\tilde{\pi}(Y_{j^\prime})  \psi_{j^\prime}} \\
    &\hphantom{=~}+ \sum_{j, i^\prime} \braket{\psi_j}{\tilde{\pi}(Y_j)^\dagger\tilde{\pi}(X_{i^\prime})  \psi_{i^\prime}} + \sum_{j, j^\prime} \braket{\psi_j}{\tilde{\pi}(Y_j)^\dagger\tilde{\pi}(Y_{j^\prime})  \psi_{j^\prime}} \\
    &\overset{\eqref{eq:TildeNotTildeEquivalence}}{=} \sum_{i, i^\prime} \braket{\psi_i}{\pi(X_i)^\dagger\pi(X_{i^\prime})  \psi_{i^\prime}} + 
    \sum_{i, j^\prime} \braket{\psi_i}{\pi(X_i)^\dagger\pi(Y_{j^\prime})  \psi_{j^\prime}} \\
    &\hphantom{=~}+ \sum_{j, i^\prime} \braket{\psi_j}{\pi(Y_j)^\dagger\pi(X_{i^\prime})  \psi_{i^\prime}} + \sum_{j, j^\prime} \braket{\psi_j}{\pi(Y_j)^\dagger\pi(Y_{j^\prime})  \psi_{j^\prime}} \\ 
    &= \Vert \sum_i \pi(X_i)\ket{\psi_i}  - \sum_j \pi(Y_i)\ket{\psi_j} \Vert^2 = 0. 
\end{align*}
Furthermore, $W_0$ can be extended to an isometry $\overline{W}_0$ on the closure of $S$, since for any $\ket{\phi} = \sum_i \pi(X_i) \ket{\psi_i} \in S$, we have 
\begin{align*}
    \norm{W_0\ket{\phi}}^2 &= \sum_{i, i^\prime} \braket{\psi_i}{\tilde{\pi}(X_i)^\dagger \tilde{\pi}(X_{i^\prime}) \psi_{i^\prime}} \\ &\overset{\eqref{eq:TildeNotTildeEquivalence}}{=}
    \sum_{i, i^\prime} \braket{\psi_i}{\pi(X_i)^\dagger \pi(X_{i^\prime}) \psi_{i^\prime}} = \norm{\phi}^2 .
\end{align*}
Moreover, from 
\begin{align*}
    (X \otimes \idmat_{\tilde{\mathtt{E}}})\overline{W}_0 \pi(Y)\ket{\psi} &=  (X \otimes \idmat_{\tilde{\mathtt{E}}})\tilde{\pi}(Y)\ket{\psi} = \tilde{\pi}(XY)\ket{\psi} - \tilde{\pi}(X) Y \ket{\psi} \\&= \overline{W}_0 \pi(XY)\ket{\psi} - \overline{W}_0\pi(X) Y \ket{\psi} = \overline{W}_0 (X \otimes \idmat_{\mathtt{E}}) \pi(Y) \ket{\psi}
\end{align*}  
and totality of $S$, we conclude that $(X \otimes \idmat_{\tilde{\mathtt{E}}})\overline{W}_0 = \overline{W}_0 (X \otimes \idmat_{\mathtt{E}})$. Lemma \ref{lem:HalfCommutingBsLemma} (with $\mathcal{A} = \blt(\mathcal{H})$ and the roles of $\mathcal{H}_\mathtt{E}$ and $\mathcal{H}_{\tilde{\mathtt{E}}}$ interchanged) yields that there is an isometry $W \in \blt(\mathcal{H}_\mathtt{E}; \mathcal{H}_{\tilde{\mathtt{E}}})$ such that $\overline{W}_0 = \idmat \otimes W$. We note that $W_0$ maps $S$ surjectively onto $\tilde{S}$. Thus, (since isometries have closed ranges) if $\tilde{S}$ is dense, $\overline{W}_0$ is surjective and hence $W$ is a unitary. This is the claim of the last sentence in the theorem. 

It remains to verify Eq.~\eqref{eq:FreedomSimpleLindbladGenerator}. To this end, note that by definition $(\idmat \otimes W)((X\otimes \idmat_{\mathtt{E}})V - VX) = (X \otimes \idmat_{\tilde{\mathtt{E}}})\tilde{V} - \tilde{V}X$, which can be expressed as
\begin{align*}
    (X \otimes \idmat_{\tilde{\mathtt{E}}}) ((\idmat \otimes W)V - \tilde{V}) = ((\idmat \otimes W)V - \tilde{V})X.
\end{align*}
Since this holds for all $X \in \blt(\mathcal{H})$, Lemma \ref{lem:HalfCommutingBsLemma} (with $\mathcal{A} = \blt(\mathcal{H})$) tells us that there exists a vector $\ket{\tilde{\psi}} \in \mathcal{H}_{\tilde{\mathtt{E}}}$ such that $(\idmat \otimes W)V - \tilde{V} = - \idmat \otimes \ket{\tilde{\psi}}$. This is the first part of Eq.~\eqref{eq:FreedomSimpleLindbladGenerator}. 
To find the relation between $K$ and $\tilde{K}$ we equate versions two of $L(X)$ in \ref{eq:GKLSUniquenessStatement} (with and without the tilde) and substitute $\tilde{V} = (\idmat \otimes W)V + \idmat \otimes \ket{\tilde{\psi}}$  After expanding the quadratic term and some cancellations and rearrangements, we arrive at
\begin{align*}
      \hat{K}^\dagger X + X\hat{K} = \tilde{K}^\dagger X + X\tilde{K},\quad \text {for all } X\in\blt (\mathcal{H}),
\end{align*}
with $\hat{K} = K + (\idmat \otimes \bra{\tilde{\psi}}W)V + \frac{1}{2} \Vert \tilde{\psi} \Vert^2$.
By Lemma \ref{Lem:KFreedom} (with $\mathcal{H}_\mathtt{B} = \mathbb{C}$), there is $\mu \in \mathbb{R}$ such that $\tilde{K} = \hat{K} + i\mu$. This finishes the proof.
\end{proof}

\end{document}